\newcommand{\set}[1]{\left\{#1\right\}}
\newtheorem{theorem}{Theorem}
\newtheorem{corollary}[theorem]{Corollary}
\newtheorem{example}[theorem]{Example}
\newtheorem{lemma}[theorem]{Lemma}
\newtheorem{proposition}[theorem]{Proposition}
\newtheorem{fact}[theorem]{Fact}
\newtheorem{problem}[theorem]{Problem}
\theoremstyle{definition}
\newtheorem{definition}[theorem]{Definition}
\newtheorem{remark}[theorem]{Remark}
\newenvironment{fminipage}%
  {\begin{Sbox}\begin{minipage}}%
  {\end{minipage}\end{Sbox}\fbox{\TheSbox}}
\def\expec#1#2{{\mathbb{E}}_{#1}\left[ #2 \right]}
\newcommand{\var}[2][]{\mbox{\bf Var}_{#1} \left[ #2 \right] }
\newcommand*\diff[1][]{\mathop{}\!\mathrm{d^{#1}}}
\def\defeq{\stackrel{\mathrm{def}}{=}}
\def\setof#1{\left\{#1  \right\}}
\def\dim#1{\mathrm{dim} (#1)}
\def\norm#1{\left\| #1 \right\|}
\newcommand\GG{\boldsymbol{\mathit{G}}}
\newcommand\Tr{\text{\textbf{Tr}}}
\renewcommand\AA{\boldsymbol{\mathit{A}}}
\newcommand\BB{\boldsymbol{\mathit{B}}}
\newcommand\bb{\boldsymbol{\mathit{b}}}
\newcommand\dd{\boldsymbol{\mathit{d}}}
\newcommand\LL{\boldsymbol{\mathit{L}}}
\newcommand\XX{\boldsymbol{\mathit{X}}}
\newcommand\ZZ{\boldsymbol{\mathit{Z}}}
\newcommand\YY{\boldsymbol{\mathit{Y}}}
\newcommand\yy{\boldsymbol{\mathit{y}}}
\newcommand\zz{\boldsymbol{\mathit{z}}}
\newcommand\xx{\boldsymbol{\mathit{x}}}
\newcommand\uu{\boldsymbol{\mathit{u}}}
\newcommand\DD{\boldsymbol{\mathit{DD}}}
\newcommand\ee{\boldsymbol{\mathit{e}}}
\newcommand\diag{\text{diag}}
\newcommand{\inprod}[1]{\left\langle #1 \right\rangle}
\newcommand{\spcon}{\mathfrak{C}}
\newcommand{\exclude}[1]{}
\newcommand\rank{\text{rank}}
\newcommand\MVED[1]{\text{\textbf{maximum variance embedding}}_{#1}}
\def\Span#1{\textbf{Span}\left(#1  \right)}
\def\Aff#1{\textbf{Aff}\left(#1  \right)}
\newcommand\zero{\mathbf{0}}
\begin{document}

\title{$\lambda_\infty$ \textit{\&} Maximum Variance Embedding: Measuring and Optimizing Connectivity of A Graph Metric}

\author{
    Majid Farhadi\thanks{Georgia Institute of Technology.
\texttt{farhadi@gatech.edu}. Supported in part by the ACO Ph.D. Program.}
    \and
    Anand Louis\thanks{Indian Institute of Science \texttt{anandl@iisc.ac.in}. Supported in part by SERB Award ECR/2017/003296 and a Pratiksha Trust Young Investigator Award.}
    \and
    Mohit Singh\thanks{Georgia Institute of Technology \texttt{mohit.singh@isye.gatech.edu}.}
    \and
    Prasad Tetali\thanks{Georgia Institute of Technology.
\texttt{tetali@math.gatech.edu}. Supported in part by the NSF grants DMS-1811935 and NSF TRIPODS-1740776.}
}

\date{\today}

\maketitle

\begin{abstract}
	Bobkov, Houdr\'e, and the last author \cite{BHT00} introduced a Poincar\'e-type functional parameter, $\lambda_\infty$, of a graph $G = (V=[n], E)$. They related $\lambda_\infty$ to the {\em vertex expansion} of the graph via a Cheeger-type inequality, analogous to the  inequality relating the spectral gap of the graph, $\lambda_2$, to its {\em edge expansion}.
	While $\lambda_2$ can be computed efficiently, our understanding of computational complexity of $\lambda_\infty$ is relatively little. A work by the second author, Raghavendra, and Vempala~\cite{LRV13} related the complexity of $\lambda_\infty$ to the so-called small-set expansion (SSE) problem, and set forth the desiderata for NP-hardness of this optimization problem. We confirm the conjecture that computing $\lambda_\infty$ is NP-hard, for weighted trees. 
	
	\medskip
	
	Beyond measuring the connectivity of a graph by parameters such as $\lambda_2$ and $\lambda_\infty$, in many applications the objective is to optimize connectivity. 
	This, via convex duality, leads to a problem in machine learning, i.e., Maximum Variance Embedding (MVE). The output of MVE is a function from vertices of the input graph $G = (V = [n], E)$ into a low dimensional Euclidean space. The objective is to maximize the variance of the embedding, i.e., sum of pairwise squared Euclidean distances; and the constraints are specific upper, lower, or equality bounds on Euclidean distances between the neighboring vertices. 
	Special cases of Maximum Variance Embedding into $\mathbb{R}^n$ and $\mathbb{R}^1$ lead to \emph{absolute algebraic connectivity} \cite{F90} and \emph{spread constant} \cite{ABS98} of the graph metric, that respectively embrace well-connectivity of $G$ and $G^n$. Beyond connectivity, Maximum Variance Embedding has applications in diffusion speed and robustness of networks, clustering, and dimension reduction.
	
	\medskip
	
	We show that computing Maximum Variance Embedding in tree-width (of the graph) dimensions is NP-hard; accompanied by an efficient \emph{distance preserving} dimension reduction algorithm with no loss to the variance, given a tree-decomposition and allowing two additional dimensions.
	We show that MVE of a tree in $\mathbb{R}^2$ defines a non-convex yet benign optimization landscape, i.e., any local optima is a global optimum. 
	Moreover, we develop a combinatorial algorithm to find such optima in linear time. 
	Finally, we discuss approximate Maximum Variance Embedding being tractable in significantly lower dimensions.
	For trees and general graphs, for which Maximum Variance Embedding cannot be solved in less than $2$ and $\Omega(n)$ dimensions, unless $\text{P} = \text{NP}$, we provide $1+\varepsilon$ approximation algorithms for embedding into $1$ and $O(\log n /\varepsilon^2)$ dimensions, respectively.
\end{abstract}

\newpage

\tableofcontents

\newpage

\section{Introduction}
Measuring and optimizing connectivity of a network, modeled as a graph, has far-reaching applications, from enhancing security of data networks and communication systems to improving robustness and control of supply chain and pandemics, to name but a few. Connectivity can be measured through various quantities, such as minimum cut that can be found in polynomial time. Normalizing size of the cut by the significance of the resulting disconnectivity, e.g., volume of the smallest (dis)connected component, leads to \emph{isoperimetric} measures such as edge/vertex expansion that embrace broader theoretical and practical applications, e.g., see \cite{M94, HLW06, FCMR08, GS12, K16, ARV09, ZLM13} (and references therein), yet are hard to compute.

\emph{Spectral gap} of the graph, denoted by $\lambda_2$, can be characterized as the second smallest eigenvalue of the normalized Laplacian matrix of the graph (see \cite{CG97}) and efficiently computed. $\lambda_2$ estimates various isoperimetric constants through Cheeger inequalities, e.g.,
$$
\lambda_2/2 \leq \phi^E(G) \leq \sqrt{2\lambda_2}\,,
$$
where $\phi^E(G)$ is the {\em edge expansion} of the undirected graph $G = (V,E)$.

Bobkov, Houdr\'e, and the last author \cite{BHT00} introduced a novel Poincar\'e-type functional parameter, $\lambda_\infty\,,$ and derived new Cheeger-type inequalities, e.g.,
\begin{equation}
\label{linfcheeger}
    \lambda_\infty/2 \leq \phi^V(G) \leq 4 \lambda_\infty + 4 \sqrt{\lambda_\infty}\,,
\end{equation}
where $\phi^V(G)= \phi^V(G, \pi)$ denotes the {\em vertex expansion} of the graph $G$ for an arbitrary probability measure over the vertex set $\pi\,$. Formally, vertex expansion is defined as
$$
\phi^V(G) \defeq \min_{S \subseteq V} \frac{\pi(N(S) \cup N(V \setminus S))}{\min\{\pi(S),\pi(V \setminus S)\}}\,,
$$
where for $S \subseteq V$, $N(S) \defeq \{j \in V \setminus S: \exists i \in S \textrm{ such that } \{i,j \} \in E \}.$

For an undirected graph $G=(V,E)$ with non-negative edge weights given by $w: E \to \mathbb{R}^{\geq 0}$,
$\lambda_2$ can be defined as 
$$
\lambda_2 \defeq \inf_{f: V \to \mathbb{R}} \frac{ \expec{v \sim \pi}{\expec{u \sim \sigma_{N(v)}} {|f(v)-f(u)|^2}}} {\var[\pi]{f}}\,,
$$
where $\pi$ is the stationary probability distribution over the vertices and $\sigma_{N(v)}$ is a probability distribution over the vertices in $N(v)$ where a vertex $u \in N(v)$ is sampled with probability proportional to $w(u,v)$.

In contrast, Bobkov, Houdr\'e, and the last author   defined $\lambda_\infty$ as
\begin{equation}
    \label{def:linf}
 \lambda_\infty \defeq \inf_{f: V \to \mathbb{R}}
\frac{\expec{v \sim \pi}{ \sup_{u \in N(v)} |f(v)-f(u)|^2}}{\var[\pi]{f}}, 
\end{equation}
where $\pi$ is an arbitrary probability distribution over the vertices.
In general one can convert bounds for vertex and edge expansion at the cost of a degrade by a multiplicative factor of maximum degree of the graph, $\Delta = \Delta(G)$. However, directly using (an estimate for) $\lambda_\infty$ can yield better bounds on vertex expansion and relevant measures (\Cref{linfcheeger}), as pointed out by \cite{BHT00} in the context of refining certain isoperimetric and concentration bounds from \cite{AM85}.
As another application, $\lambda_2$ can help bounding the mixing rate of a Markov chain while $\lambda_\infty$ is connected to a different dispersion process \cite{CLTZ18}.

Earlier, Fiedler \cite{F73} coined algebraic connectivity of the graph as the second smallest eigenvalue of its Laplacian matrix, i.e., equivalent to $\lambda_2$.
The maximum such quantity that one can achieve, using a total weight of $1$ assigned to edges of the graph, is called \emph{absolute algebraic connectivity} of the graph
\cite{F90, F93}.
Absolute algebraic connectivity can be inversely formulated as the minimum required total weight of the edges of the graph, to ensure the second smallest eigenvalue of the Laplacian is greater than or equal to one. 

Absolute algebraic connectivity can be formulated as a semidefinite program, later formulated as Eq.~\eqref{dualsdp}.
The dual semidefinite program, Eq.~\eqref{eq:primal-sdp}, has a further geometric interpretation.
That is, the vertices of the graph are to be embedded into $\mathbb{R}^n$ such that the neighboring vertices are within/at unit distance from each other, and the objective is to maximize the Euclidean variance of the embedding.

In machine learning literature, Maximum Variance Embedding (MVE) into arbitrary dimensions has been a proxy for non-linear dimension reduction, while preserving local structure of the underlying manifold, that is embraced by a proximity graph \cite{WS06aaai}. Our study also contributes to the theory of non-linear dimension reduction, that is an active area of research \cite{TJ18}.
Further applications of maximum variance embedding, and the dual problem, are in the design of fastest mixing Markov processes \cite{SBXD06}, optimizing robustness and restoration of power networks \cite{edstrom2011spectral}, convergence rate of quantum consensus \cite{jafarizadeh2016optimizing}, and control of multi-agent systems \cite{choi2017multi}. On growing applications/interests for maximizing algebraic connectivity of a network and related problems, refer to \cite{gomyou2020optimal, tavasoli2020maximizing} and references therein.

A special embedding problem, this time into $\mathbb{R}^1$, was 
first introduced by Alon, Boppana, and Spencer \cite{ABS98}. They showed that isoperimetric properties of Cartesian products of a (graph) metric \cite{CT98}, in asymptotic fashion, can be well-understood in a wide range 
by merely bounding the {\em second moment} over Lipschitz valuations of the vertices. Namely the spread constant of the graph, denoted by $\spcon = \spcon(G, \pi)\,,$ was defined as
$$
\sup_{f \in \mathcal{L}(G)} \var[\pi]{f}\,,
$$
where $\mathcal{L}(G)$ denotes the set of Lipschitz functions  $f: V \rightarrow \mathbb{R}$ with respect to the distance metric defined by $G\,,$ i.e., satisfying
$|f(u)-f(v)| \leq 1\,, \quad \forall \set{u,v} \in E\,.$
It is easy to see that the spread constant is (tightly) upper bounded by the inverse of $\lambda_\infty$, i.e.,  $\spcon \leq 1/\lambda_\infty\,.$

Let us formulate the Maximum Variance Embedding problem, that generalizes the aforementioned problems. Given a graph $G = (V,E)$, a positive integer $k$, along with a probability distribution $\pi: V \rightarrow \mathbb{R}^{\ge 0}$ over the vertex set and a set of (upper, lower, or equality) bounds on the Euclidean distances of the endpoints of the corresponding edges, $d: E \rightarrow \mathbb{R}^{\ge 0}$; the output should be an embedding of vertices in the $k$-dimensional Euclidean space, $\yy: V \rightarrow \mathbb{R}^k$, equivalently a matrix $\YY = [\yy_1 | \cdots | \yy_n]^T \in \mathbb{R}^{k \times n}$, that satisfies given in/equalities of form $\|\yy_i-\yy_j\|$ and $d_{ij}$, corresponding to every edge in $G$. That is, every $ij \in E$ imposes a lower bound, equality, or upper bound on the distance between its endpoints in the embedding. The objective is to maximize the variance of the embedding with respect to $\pi$,
\[ \max_{\yy: V \rightarrow \mathbb{R}^k} \var[\pi]{\yy} \text{ s.t. } \|\yy_i - \yy_j\| \leq | = | \ge d_{ij} \quad \forall ij \in E\,.
\]

The objective, i.e., variance, can be written as follows (derived in \Cref{ssec:variance}). 
\begin{align}
    \var[\pi]{\yy} \defeq \expec{i \sim \pi}{\|\yy_i - \expec{j \sim \pi}{\yy_j}\|^2} = \sum_{i < j} \|\yy_i - \yy_j\|^2 \pi_i \pi_j
= \langle \text{Diag}(\pi)- \pi \pi^T , \YY \YY^T \rangle
\end{align}
Denoting the Laplacian of a single edge $ij \in E$ by $\LL_{ij} = (e_i-e_j) (e_i-e_j)^T$, one can write $\|\yy_i - \yy_j\|^2 = \langle \LL_{ij}, \YY \YY^T \rangle$ where $\langle \AA,\BB \rangle = \Tr(\AA{}^T \BB)$. $\YY \YY^T$ is a positive semidefinite matrix of rank $k$. Maximum variance embedding is equivalent to the following rank-constrained Semidefinite program, generalizing problems from previous studies \cite{BDX04, WS04,MT06}.
\begin{align}
\label{eq:primal-sdp}
\max \langle \XX, \diag(\pi) \rangle \text{ s.t. } \langle \XX, \pi \pi^T \rangle = 0, \langle \XX, \LL_{ij} \rangle \le|=|\ge d_{ij}^2 \forall ij \in E, \XX \succcurlyeq \mathbf{0}, \rank{\XX} \leq k
\end{align}

The dual SDP (for $k = n$) can be written as
\begin{align}\label{dualsdp}
\sum_{ij \in E} w_{ij} d_{ij}^2
\text{ s.t. }
\sum_{ij \in E} w_{ij} \LL_{ij} + \mu \pi \pi^T \succcurlyeq \diag(\pi),
w_{ij} \in \mathbb{R}^{\ge 0}|\mathbb{R}|\mathbb{R}^{\leq 0} \quad \forall ij \in E, \mu \in \mathbb{R}\,.
\end{align}
For uniform $\pi$ and $d$ this would be the minimum total weight required to increase the second smallest eigenvalue of the Laplacian beyond $1$:
\[
\min \sum_{ij \in E} w_{ij}
\text{ s.t. }
\lambda_2(\sum_{ij \in E} \LL_{ij} \cdot w_{ij}) \ge 1, \quad w_{ij} \in \mathbb{R}^{\ge 0}|\mathbb{R}|\mathbb{R}^{\leq 0} \quad \forall ij \in E.
\]
Note that an upper bound constraint $\langle \XX, \LL_{ij} \rangle \leq d_{ij}^2$ in the primal, corresponds to $w_{ij} \ge 0$ in the dual, and a lower bound constraint corresponds to $w_{ij} \leq 0$. Fixing the length of an edge $ij \in E$ in the embedding, allows $w_{ij}$ to take any real value. 
For non-negative weights, we have (scaled inverse) of the absolute algebraic connectivity problem, while allowing negative weights may lead to further applications. 

When the upper bound on the rank of $\XX$ in \eqref{eq:primal-sdp} is trivial, i.e., embedding is in $k = n$ dimensions, we have an ordinary SDP that is a convex program and can be solved up to arbitrary precision in strongly polynomial time, e.g., using interior point methods \cite{NN94, Nes13}. Intuitively speaking, the MVE problem becomes harder in lower dimensions.

\subsection{Results \& Organization}

Previous studies \cite{BHT00,LRV13} imply that $\lambda_\infty$ is Small Set Expansion hard (which is a weaker hardness result than NP-hardness, see \cite{RS10}) to approximate better than $O(\log \Delta)$ in certain parameter regimes, (refer to \cite{LRV13} for formal statement). Even though the small-set
expansion hypothesis remains unproven, this suggested that the computation of $\lambda_\infty$ is likely to be hard. Nevertheless, the fundamental question of NP-hardness of the computation of
$\lambda_\infty$ remained unresolved since $2000\,.$ We settle this question in \Cref{sec:lambda-infinity} by proving that computing $\lambda_\infty$ is NP-hard, even in the case of weighted star graphs (see \Cref{thm:nph_theorem}). This is in significant contrast to $\lambda_2$, which can be efficiently computed for general graphs (see \Cref{polybits}).
Our technique also leads to an NP-hardness result for computing vertex expansion of a weighted tree (\Cref{thm:vertex-exp-hard}), in contrast to edge expansion which can be easily computed. We also include a dynamic programming algorithm that efficiently computes vertex expansion of unweighted trees (\Cref{thm:vertex-exp-tree}).

For spread constant, i.e., $k = 1$, the SDP relaxation was shown to have an integrality gap $\Omega(\log n / (\log \log n))$ \cite{Naor14}, and has led to approximations with multiplicative error no more than $O(\log n)$ \cite{MT06,SBXD06}. However, no NP-hardness result for any of the maximum variance embedding problems was known, to the best of our knowledge.
In \Cref{sec:MVE-tree}, we show that spread constant is NP-hard to compute for (either edge or vertex) weighted trees (\Cref{thm:sc-nph_theorem}). This is complemented by providing a fully polynomial time approximation scheme for spread constant of weighted trees (\Cref{thm:fptas}).

Trees are the least structurally complex connected graphs. A quantitative measure of structural complexity is tree-width of the graph.
Generalizing our result for trees, in \Cref{sec:MVE-tree-width}, we show that finding maximum variance embedding in tree-width dimensions is NP-hard (\Cref{thm:hard}).
This is also a complement to the main result of  G\"oring, Helmberg, and Wappler \cite{GHW08}, who showed given an optimal maximum variance embedding in $n$ dimensions, with only upper bound inequality constraints, along with a tree-decomposition, an optimal maximum variance embedding in $1+$width (of the tree-decomposition) dimensions can be found in polynomial time.
We also notice that their algorithm fails for maximum variance embedding with equality constraints, due to a characterization of optimal embedding called separator-shadow property, that does not hold\footnote{
For example, let $V = [6]$ and $E = \{12, 23, 31, 45, 56, 64, 14\}$, i.e., a disjoint union of two 3-cycles with an additional edge in between. For any optimal embedding in $\mathbb{R}^{\ge 2}$, vertex $1$ is a separator that has neither of remaining components in its shadow (from the mean/origin).} even given arbitrary additional dimensions beyond tree-width.
For general maximum variance embedding, we present a new algorithm that using one more dimension can reduce the dimension of any feasible solution with no degrade to the objective function (variance), while preserving (Euclidean) distances between all adjacent nodes.

Another takeaway from this study is that while tree-width is a tight bound for existence and computability of maximum variance embedding, approximate MVE is feasible in significantly lower dimensions. For trees where maximum variance embedding can be found in polynomial time in no less than $2$ dimensions (for which we also provide a linear time combinatorial algorithm in \Cref{sec:MVE-tree}) unless $P = NP$, we design a $1+\varepsilon$ approximation algorithm for embedding into $\mathbb{R}^1$ (improving upon an immediate $2$-approximation due to previous studies).
The gap between dimensions where maximum variance embedding can be computed vs approximated is even more significant for general graphs, i.e., $\Theta(n)$ vs $\Theta(\log n)$. In \Cref{sec:MVE-general} we show that maximum variance embedding can be $1+\varepsilon$ approximated in $O(\log n / \varepsilon^2)$ dimensions (see \Cref{thm:JL}) generalizing techniques of \cite{MT06,SBXD06}.

\section{$\lambda_\infty$ of a Star}
\label{sec:lambda-infinity}
This Section is dedicated to complexity of computing $\lambda_\infty$ of a star.
In \S{}\ref{ssec:lambda-infty-NPH} we show the problem is (weakly) NP-hard, and in \S{}\ref{ssec:lambda-infty-star-approx} we present a $1+\varepsilon$ approximation algorithm for it.
We denote a star graph with $n-1$ leaves by $S_n$. Name the center $0 \in V$ which is connected to all other vertices $[n-1] \subseteq V$ that are leaves of our star (tree). The edge set of $S_n$ is $E = \set{ \set{0,i} | i \in [n-1]}\,.$

\subsection{Weak NP-hardness}\label{ssec:lambda-infty-NPH}

In the rest of this subsection we prove the following result.

\begin{theorem} 
\label{thm:nph_theorem}
There exists a polynomial $p$ such that the following holds.
Given a star graph $G\,,$ and rational probability distribution $\pi$ over the vertices, computing $\lambda_\infty(G,\pi)$ to $\text{p}(|G,\pi|)$ many bits accuracy is NP-hard (here $|G,\pi|$ denotes the length of the input string).
\end{theorem}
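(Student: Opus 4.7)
The strategy is a polynomial-time reduction from \textsc{Partition}. Given positive integers $a_1,\ldots,a_n$ with $\sum_i a_i = 2T$, I plan to construct a star on $n$ leaves $v_1,\ldots,v_n$ around a center $c$, equipped with $\pi(c) = 1 - 2T\epsilon$ and $\pi(v_i) = \epsilon\,a_i$, for a rational $\epsilon$ of order $1/T^2$ (hence polynomially encodable). Writing $f(c) = 0$, $x_i = f(v_i)$, and exploiting the star structure in \eqref{def:linf}, $\lambda_\infty$ becomes the infimum of
\begin{equation*}
R(x) \;=\; \frac{\pi(c)\,\max_i x_i^2 \;+\; \sum_i \pi(v_i)\, x_i^2}{\sum_i \pi(v_i)\, x_i^2 \;-\; \bigl(\sum_i \pi(v_i)\, x_i\bigr)^2},
\end{equation*}
which by homogeneity I restrict to $x \in [-1, 1]^n$ with $\max_i |x_i| = 1$. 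The goal is to show that $\lambda_\infty$ encodes the \textsc{Partition} answer with a polynomially representable YES/NO gap.

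The main step is a \emph{discreteness} lemma: for the construction above, every minimiser $x^*$ lies in $\{-1, +1\}^n$. I plan to prove this coordinatewise. Fix $x_{-i}$ and study $x_i \mapsto R(x_i)$; this is a ratio of two quadratics in $x_i$ with horizontal asymptote $1/(1 - \pi(v_i))$. Its critical points solve an explicit quadratic whose root product equals $-(\pi(c) + A_{-i})/\pi(v_i)$, so for $\pi(c) \gg \pi(v_i)$ at most one root lies in $[-1, 1]$. Writing the interior first-order condition as $x_i^{\mathrm{crit}} = NB/(\pi(c) + B^2)$ and using $|N|\le 1$, $|B|\le 1-\pi(c)$, the in-range critical root has absolute value $O((1-\pi(c))/\pi(c))$; a second-derivative comparison against the asymptote value then shows this critical point is a local \emph{maximum} rather than a minimum. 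Hence the 1D minimum of $R(x_i)$ on $[-1,1]$ is attained at the boundary $\pm 1$, and iterating over $i$ confines $x^*$ to $\{-1,+1\}^n$.

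On $\{-1,+1\}^n$ the ratio collapses to $R(s) = 1/\bigl(2T\epsilon - \epsilon^2 B^2\bigr)$ with $B = \sum_i s_i a_i \in \bZ$, so the minimisation amounts to minimising $|B|$ over $s \in \{\pm 1\}^n$---exactly \textsc{Partition}. A YES instance gives $B = 0$ and $\lambda_\infty = 1/(2T\epsilon)$; a NO instance, using the parity constraint $B \equiv 2T \pmod{2}$, gives $|B| \ge 2$ and $\lambda_\infty \ge 1/(2T\epsilon - 4\epsilon^2)$. The separation is $\Theta(1/T^2)$, representable with $O(\log T) = \mathrm{poly}(|G,\pi|)$ bits, so any polynomial-precision algorithm for $\lambda_\infty$ decides \textsc{Partition} and NP-hardness follows.

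The hardest step is the discreteness lemma. One must handle all sub-cases of the quadratic-over-quadratic analysis, in particular the degenerate $B_{-i} = 0$ configuration, where $x_i = 0$ itself is a critical point and can be a genuine minimum in other regimes (e.g.\ when $\pi(c)$ is small, as small numerical examples confirm). Verifying that the ``$\pi(c)$ close to $1$'' hypothesis uniformly forces every interior critical point to be a maximum---i.e.\ that the second-order coefficient $\pi(c)(1-\pi(v_i)) + B_{-i}^2 - A_{-i}\pi(v_i)$ is positive for every feasible $x_{-i}$---is the main technical task. If a clean exact-discreteness proof proves elusive, my fallback plan is to show that any continuous minimiser can be rounded to a point of $\{-1,+1\}^n$ at the cost of only $o(1/T^2)$ in the objective, which still preserves the YES/NO gap needed for the reduction.
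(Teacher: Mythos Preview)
Your proposal is correct and follows the same high-level strategy as the paper (reduce \textsc{Partition} to $\lambda_\infty$ of a weighted star), but the key structural lemma is genuinely different. The paper proves only that an optimum is \emph{almost binary}---at most one leaf has $|y_i| < \max_k |y_k|$---and this holds for \emph{every} $\pi$, via a short perturbation-plus-mediant argument (\Cref{starchar}); since one leaf may remain non-extreme, the paper's gap analysis (\Cref{lemma:reduction}) must then split into two cases depending on whether that leaf is close to or far from the boundary. You instead engineer $\pi(c)$ close to $1$ and aim for \emph{full} discreteness, and your coordinatewise rational-function analysis does go through in that regime: the quantity $(1-\pi(v_i))N(t)-D(t)$ is linear in $t$ with constant term exactly your $\pi(c)(1-\pi(v_i)) + B_{-i}^2 - A_{-i}\pi(v_i)$, and once this dominates the $O\bigl((1-\pi(c))\pi(v_i)\bigr)$ contribution of the slope on $[-1,1]$---which it does uniformly over $x_{-i}$ as soon as, say, $\pi(c) > 3\max_i \pi(v_i)$---the in-range critical point sits on the $R>1/(1-\pi(v_i))$ side and is therefore a local maximum, while the other root has magnitude at least $L/(|B_{-i}|\pi(v_i)) > 1$. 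Your flagged $B_{-i}=0$ case is in fact harmless here (the unique critical point $t=0$ is then a global maximum), though you are right that full discreteness genuinely fails when $\pi(c)$ is small, which is why the paper did not claim it. The payoff of your route is a cleaner combinatorial endgame, including the parity observation forcing $|B|\ge 2$; your fallback rounding plan is essentially what the paper's two-case analysis accomplishes.
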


First, note that the infimum in the definition of $\lambda_\infty$ is bounded and feasible for any connected graph; which is explained by \Cref{lem:infmin} in \Cref{sec:omittedproofs} for completeness. Therefore, $\lambda_\infty$ is the optimal solution to the following optimization problem:
\begin{align}
\label{def:lambdainfty}
\lambda_\infty = \min_{x: V \to \mathbb{R}}
\expec{v \sim \pi}{\max_{u \in N(v)}  |x_u-x_v|^2}
\quad \text{s.t.} \quad \var[\pi]{x} \ge 1\,.
\end{align}

\begin{remark}
\label{polybits}
We note that $\lambda_2$ is an irrational number in general (e.g., the eigenvalues of the normalized adjacency matrix of an $n$-cycle are $\set{\cos (2 \pi k / n) : k \in [n]}$), and $\lambda_\infty$ is also believed to be irrational in general. Therefore, they are not computable exactly by a Turing machine. Thus in \Cref{thm:nph_theorem}, we restrict ourselves to the problem of computing it to an accuracy of polynomially many bits.  
\end{remark}

To prove \Cref{thm:nph_theorem} we reduce the NP-complete {\sc Integer Partitioning problem} (\Cref{prob:partition}) to computing $\lambda_\infty$ of the star graph. 
We find a lower bound, $\lambda_\infty(S_n,\pi) \ge \frac{1}{1-\pi_0}$, that is satisfied with equality if and only if the leaves can be \emph{partitioned in half} with respect to $\pi$

Since we want to prove NP-hardness for the problem of computing $\lambda_\infty$ to a polynomially many bits of accuracy, it will not suffice to show that the value of $\lambda_\infty$ is simply different for the {\sc yes} and the {\sc no} instances of the {\sc Integer Partitioning problem}. We will show a non-trivial separation between the {\sc yes} and {\sc no} instances of the {\sc Integer Partitioning problem} corresponds to a measurable (polynomially deep bit) difference in  $\lambda_\infty$.

To be able to provide this gap, we exploit new structural characterizations of a solution (to $\lambda_\infty$ viewed as an optimization problem defined by Equation~\eqref{def:lambdainfty}) of the star graph.
Applying first and second order optimality conditions, we show that there is at most one non-extreme (see \Cref{starchar}) variable in the solution to the underlying optimization problem. Similar approach leads to our hardness results for vertex expansion and spread constant. Let us begin by the definition of {\sc Integer Partition} Problem.

\begin{problem}[{\sc Integer  Partition} Problem]
\label{prob:partition}
Given positive integers $\set{p_1, \cdots, p_n}$ we are to decide whether this multiset can be partitioned into two subsets of equal sum.
\end{problem}

It is easy to verify that the lower bound $\frac{1}{1-\pi_0}$ will be achieved if the leaves are split in half, by embedding them at equal distances from the center. In this case we have a \emph{binary} and \emph{balanced} embedding, formally defined as follows.

\begin{definition}
For a star graph, we call a valuation/embedding $x: V \rightarrow \mathbb{R}\,,$ {\em binary}, if all leaves are assigned numbers at the same absolute distance from the value assigned to the root, i.e.,
$$
|x_k-x_0| = \max_i |x_i - x_0|, \ \ \forall k \in [n-1]\,,
$$
and call it {\em balanced} around vertex $v$, if its first order moment with respect to $x_v$ is zero, i.e.,
$$
\sum_{k \in V} \pi_k (x_k - x_v) = 0 \iff \expec{u \sim \pi}{x_u} = x_v\,.
$$
\end{definition}

We show the values assigned to vertices by an \emph{optimal solution} for $\lambda_\infty(S_n,\pi)$ is \emph{almost} binary, i.e., all but at most one of the leaves are at equal distances from the center. This paves the way for showing the aforementioned inequality, yet another major step remains. We need to show a large enough increase on $\lambda_\infty$ when a balanced partition of the leaves is not feasible.
Before we start let us denote a useful fact.

\begin{fact} \label{fact:mediant}
Given positive numbers $a, b, c, d \in \mathbb{R}\,,$ and $\frac{a}{b} < \frac{c}{d}\,,$ we have
$
\frac{a}{b} < \frac{a+c}{b+d} < \frac{c}{d}\,,
$
i.e.,  {\em mediant} of two (positive) ratios falls in between. Consequently, $\frac{e}{f} < \frac{e-g}{f-g}$ for $e > f > g > 0\,.$
\end{fact}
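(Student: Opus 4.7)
The plan is to prove the mediant inequality directly from the sign of a single cross-multiplication, and then obtain the consequence by applying the mediant identity to a cleverly chosen decomposition.

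For the main statement, I would start by clearing denominators. Since $b, d > 0$, the hypothesis $a/b < c/d$ is equivalent to $ad < bc$. To prove $(a+c)/(b+d) > a/b$, I would compute
\[
\frac{a+c}{b+d} - \frac{a}{b} = \frac{b(a+c) - a(b+d)}{b(b+d)} = \frac{bc - ad}{b(b+d)},
\]
which is strictly positive because the numerator is positive by hypothesis and the denominator is positive since $b, d > 0$. Symmetrically, for the upper bound, I would compute
\[
\frac{c}{d} - \frac{a+c}{b+d} = \frac{c(b+d) - d(a+c)}{d(b+d)} = \frac{bc - ad}{d(b+d)} > 0,
\]
which completes the mediant inequality.

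For the consequence $e/f < (e-g)/(f-g)$ when $e > f > g > 0$, I would observe the key algebraic identity
\[
\frac{e}{f} = \frac{(e-g) + g}{(f-g) + g},
\]
so $e/f$ is exactly the mediant of the two positive ratios $(e-g)/(f-g)$ and $g/g = 1$. Note these ratios are well-defined and positive because $f > g > 0$ ensures $f - g > 0$ and $e > f > g$ ensures $e - g > f - g > 0$, hence $(e-g)/(f-g) > 1$. Applying the just-proved mediant inequality to $a = g$, $b = g$, $c = e-g$, $d = f-g$ then yields $1 < e/f < (e-g)/(f-g)$, which gives the claim.

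There is no real obstacle here; the only subtle point is checking positivity of all quantities appearing as denominators so that the direction of each inequality is preserved, which the hypothesis $e > f > g > 0$ handles cleanly. Alternatively, one can bypass the mediant reduction and verify the consequence directly by noting that $e/f < (e-g)/(f-g) \iff e(f-g) < f(e-g) \iff -eg < -fg \iff fg < eg$, which holds since $g > 0$ and $f < e$; I would include this one-line verification as a sanity check.
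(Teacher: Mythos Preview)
Your proof is correct. The paper states this as a \emph{Fact} without proof, so there is nothing to compare against; your direct cross-multiplication argument for the mediant inequality and the mediant-based derivation of the consequence (with the one-line direct check) are both clean and complete.
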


Let us rewrite \eqref{def:lambdainfty}, i.e., the optimization problem defined by $\lambda_\infty$, for a star.
\begin{align}
\lambda_\infty(S_n, \pi) &= \min_{x \in \mathbb{R}^V \setminus \set{\mathbf{0}}} \frac{\expec{v \sim \pi}{\max_{u \in N(v)}  |x_u-x_v|^2} }{\var[\pi]{x} }  \nonumber\\
&= \min_{x \in \mathbb{R}^V \setminus \set{\mathbf{0}}, x_0 = 0} \frac{\expec{v \sim \pi}{\max_{u \in N(v)}  |x_u-x_v|^2} }{\var[\pi]{x} } && \text{uniform shift invariance} \nonumber \\
&= \min_{x \in \mathbb{R}^V \setminus \set{\mathbf{0}}, x_0 = 0} \frac{\pi_0 \cdot \max_{i \in [n-1]} x_i^2 + \sum_{i \in [n-1]} \pi_i x_i^2 }{\var[\pi]{x} } && \text{star structure} \nonumber \\
&= \min_{x \in \mathbb{R}^V\setminus \set{\mathbf{0}}, x_0 = 0} \frac{\pi_0 \cdot \max_{i \in [n-1]} x_i^2 + \expec{i \sim \pi}{x_i^2} }{\expec{i \sim \pi}{x_i^2} - (\expec{i \sim \pi}{x_i})^2}\,.\label{laminf_star1}
\end{align}

In the above, $\mathbf{0}$ denotes the all zero vector.
We are going to characterize the optimum valuation, namely
\begin{equation}
\label{star_y}
  y \in {\sc argmin}_{x \in \mathbb{R}^V\setminus \set{\mathbf{0}}, x_0 = 0} \frac{\pi_0 \cdot \max_{i \in [n-1]} x_i^2 + \expec{i \sim \pi}{x_i^2} }{\expec{i \sim \pi}{x_i^2} - (\expec{i \sim \pi}{x_i})^2}\,.
\end{equation}

\begin{lemma}
\label{starchar}
There exists at most one vertex $i \neq 0$ with
$
|y_i| \ne \max_k |y_k|\,.
$
\end{lemma}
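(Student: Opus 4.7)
The plan is a single perturbation argument. Suppose for contradiction that at the optimum $y$ there exist two leaves $i \neq j$ with $|y_i|, |y_j| < M$, where $M := \max_k |y_k|$. Writing $A = \expec{k \sim \pi}{y_k^2}$ and $B = \expec{k \sim \pi}{y_k}$, the objective in \eqref{laminf_star1} becomes
$$
F(y) = \frac{\pi_0 M^2 + A}{A - B^2} = 1 + \frac{K}{A - B^2}, \qquad K := \pi_0 M^2 + B^2.
$$
Since $\pi_0 > 0$ and $\var_\pi(y) \ge 1$ forces $M > 0$, we have $K > 0$; thus $F$ is strictly decreasing in $A$ whenever $B$ and $M$ are held fixed. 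The strategy is to exhibit a small perturbation that keeps $B$ and $M$ unchanged while strictly increasing $A$.

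I would use the one-parameter perturbation $y_i \mapsto y_i + s$, $y_j \mapsto y_j - (\pi_i/\pi_j)\, s$, leaving every other coordinate fixed. By design, $\sum_k \pi_k y_k$ is constant in $s$, so $B$ is preserved. For $|s|$ small enough, $|y_i|$ and $|y_j|$ remain strictly below $M$ (using the strict inequalities from the contradiction hypothesis), so the set of extreme leaves, and hence $M$, are preserved as well. Direct expansion gives
$$
A(s) - A(0) = 2\pi_i (y_i - y_j)\, s + \pi_i \,\frac{\pi_i + \pi_j}{\pi_j}\, s^2,
$$
a strictly convex quadratic in $s$. Hence there is always an arbitrarily small $s \neq 0$ that strictly increases $A$: if $y_i \neq y_j$, choose the sign of $s$ so the linear term is positive; if $y_i = y_j$, the linear term vanishes and any $s \neq 0$ suffices because the quadratic coefficient is positive. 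The resulting strict decrease in $F$ contradicts the optimality of $y$, so at most one leaf can be non-extreme.

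The main delicate point is the bookkeeping confirming that the perturbation really does preserve $M$ (which needs the strict inequalities $|y_i|, |y_j| < M$) and that $K > 0$ (which uses $\pi_0 > 0$, standard in the star setting; the case $M = 0$ is excluded by the variance constraint). The same conclusion can equivalently be packaged via first- and second-order optimality: the stationarity condition for a smooth variable $y_i$ with $i \in T$ forces the common critical value $y_i = NB/(N-D)$ for all non-extreme leaves, and the one-dimensional second-order test in the $s$-direction above reveals that this critical point is in fact a local \emph{maximum} of $F$ along that direction, yielding the contradiction. I expect no substantive obstacle beyond verifying these two sign/feasibility checks.
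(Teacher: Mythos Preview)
Your proposal is correct and essentially identical to the paper's proof: the same mean-preserving perturbation of two non-extreme leaves is used to strictly increase the second moment while leaving $M$ and $B$ fixed, yielding a contradiction. The only cosmetic difference is that you deduce the drop in the objective from the rewriting $F = 1 + K/(A-B^2)$, whereas the paper uses the mediant inequality (\Cref{fact:mediant}) on the same numerator/denominator increments; these are equivalent.
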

\begin{proof}
Note that $y_0 = 0$.
We prove the claim by contradiction, assuming $\exists i, j > 0$, $i \neq j$ for which
\begin{align}y_i, y_j \in (-\max_k |y_k|, \max_k |y_k|)\,. \label{fallsin}\end{align}

Let us abuse the notation for $\lambda_\infty(G,\pi)$ when $G$ and $\pi$ are clear from the context, 
denoting $\lambda_\infty$ as a function from valuation of the vertices to the objective value of the optimization problem, i.e., $\lambda_\infty: \mathbb{R}^V \rightarrow \mathbb{R}, y \mapsto {\expec{v \sim \pi}{\max_{u \in N(v)}  |y_u-y_v|^2}}/{\var[\pi]{y}}$.
We find a contradiction by showing $\lambda_\infty(\cdot)$ can be further decreased, slightly moving from $y$ along the direction
$v = \pi_i e_j - \pi_j e_i\,,$
i.e., adding to one and decreasing another, while keeping the expectation intact. Namely for 
$y' = y + \delta v\,,$
we have
\[ 
\expec{k \sim \pi}{y'_k} = \sum_{k} \pi_k y'_k 
= \left(\sum_{k} \pi_k y_k \right) - \pi_i \pi_j \delta + \pi_j \pi_i \delta 
= \expec{k \sim \pi}{y_k}. \]

The assumption by Equation~\eqref{fallsin} guarantees that for sufficiently small $\delta\,,$ we have
$$\max_k y_k^2 = \max_k y_k'^2\,.$$

The only term in formulation of $\lambda_\infty(\cdot)$ as of \Cref{laminf_star1}, that is affected by changing $y$ to $y'$ is in the second moment $\gamma \defeq \expec{k \sim \pi}{y_k'^2} - \expec{k \sim \pi}{y_k^2}\,,$ for which we have
\begin{align*}
\expec{k \sim \pi}{y'^2_k} &= \expec{k \sim \pi}{y^2_k} + \pi_i (-2 y_i \pi_j \delta + (\pi_j \delta)^2) + \pi_j (2 y_j \pi_i \delta + (\pi_i \delta)^2) \\
&= \expec{k \sim \pi}{y^2_k} + 2 \delta \pi_i \pi_j (y_j - y_i) + \delta^2 (\pi_i \pi_j^2 + \pi_j \pi_i^2)  = \expec{k \sim \pi}{y^2_k} + \gamma\,.
\end{align*}

 We can  assure a $\gamma > 0$ by setting
$\delta = \varepsilon \cdot (\text{sign}(y_j-y_i) + \mathds{1}[y_j - y_i = 0])\,,$
for a small $\varepsilon > 0\,,$ where $\mathds{1}[\text{condition}]$ is the indicator function denoting whether the condition is satisfied, i.e., $\mathds{1}[\text{True}] = 1$ and $\mathds{1}[\text{False}] = 0$.

Computing $\lambda_\infty$ at $y'$ we have
\begin{align}
\lambda_\infty(y') &= \frac{\pi_0 \cdot \max_{i \in [n-1]} y_i'^2 + \expec{i \sim \pi}{y_i'^2} }{\expec{i \sim \pi}{y_i'^2} - (\expec{i \sim \pi}{y_i'})^2} \nonumber \\
&=  \frac{\pi_0 \cdot \max_{i \in [n-1]} y_i^2 + \expec{i \sim \pi}{y_i^2} + \gamma}{\expec{i \sim \pi}{y_i^2} - (\expec{i \sim \pi}{y_i})^2 + \gamma} \label{mediant}\\
&\in \left(\frac{\gamma}{\gamma}, \frac{\pi_0 \cdot \max_{i \in [n-1]} y_i^2 + \expec{i \sim \pi}{y_i^2}}{\expec{i \sim \pi}{y_i^2} - (\expec{i \sim \pi}{y_i})^2} \right) = (1,\lambda_\infty(y)) \,,\label{ininterval} && \text{\Cref{fact:mediant}}
\end{align}
where we applied \Cref{fact:mediant} for $a = b = \gamma\,,$ $c = \pi_0 \cdot \max_{i \in [n-1]} y_i^2 + \expec{i \sim \pi}{y_i^2}$ and $d = \expec{i \sim \pi}{y_i^2} - (\expec{i \sim \pi}{y_i})^2\,.$
\Cref{mediant} shows $\lambda_\infty(y')$ is a median of two fractions $c/d = \lambda_\infty(y)$ and $a/b = 1\,.$
Noticing $c/d > 1\,,$ the last line provides the desired contradiction $\lambda_\infty(y') < \lambda_\infty(y)\,.$ 
\end{proof}

It is now easy to prove the desired tight lower bound on $\lambda_\infty$ of a star.

\begin{lemma} \label{startheorem}
$\lambda_\infty \geq \frac{1}{1 - \pi_0}$ and the equality holds if and only if $y$ is $(\mathbf{i})$ binary, and $(\mathbf{ii})$ balanced around the root.
\end{lemma}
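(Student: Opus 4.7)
The plan is to exploit the explicit form of the objective derived in equation \eqref{laminf_star1} and bound numerator and denominator separately using only the facts that $y_0=0$, all $y_i^2$ are dominated by $M\defeq\max_{i\in[n-1]} y_i^2$, and that $\mathrm{Var}\le\mathbb{E}[y^2]$. The starting point is
\[
\lambda_\infty(y) \;=\; \frac{\pi_0\, M + \expec{i\sim\pi}{y_i^2}}{\expec{i\sim\pi}{y_i^2} - \bigl(\expec{i\sim\pi}{y_i}\bigr)^2}\,,
\]
valid for any feasible $y$ with $y_0=0$ (which we may assume by uniform-shift invariance, cf.\ \eqref{laminf_star1}). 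Note $\expec{i\sim\pi}{y_i^2}=\sum_{i\ge 1}\pi_i y_i^2$.

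First I would lower-bound the numerator. Since $y_i^2\le M$ for every leaf $i$, we have $\expec{i\sim\pi}{y_i^2} = \sum_{i\ge 1}\pi_i y_i^2 \le (1-\pi_0)\, M$, hence $M\ge \expec{i\sim\pi}{y_i^2}/(1-\pi_0)$. Substituting,
\[
\pi_0\, M + \expec{i\sim\pi}{y_i^2} \;\ge\; \frac{\pi_0}{1-\pi_0}\expec{i\sim\pi}{y_i^2} + \expec{i\sim\pi}{y_i^2} \;=\; \frac{\expec{i\sim\pi}{y_i^2}}{1-\pi_0}\,.
\]
Next I would upper-bound the denominator by simply dropping the nonnegative squared mean term: $\expec{i\sim\pi}{y_i^2} - (\expec{i\sim\pi}{y_i})^2 \le \expec{i\sim\pi}{y_i^2}$. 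Taking the ratio of these two bounds cancels the factor $\expec{i\sim\pi}{y_i^2}$ (which is strictly positive since $y\ne \mathbf{0}$ and $y_0=0$ force some $y_i\ne 0$), yielding $\lambda_\infty(y)\ge 1/(1-\pi_0)$.

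For the equality characterization, I would simply track when each of the two inequalities above is tight. The numerator bound becomes an equality exactly when $y_i^2=M$ for every leaf $i$ with $\pi_i>0$, i.e.\ when $|y_i-y_0|=\max_k|y_k-y_0|$ for all such $i$, which is precisely the binary condition. The denominator bound becomes an equality exactly when $\expec{i\sim\pi}{y_i}=0=y_0$, which is the balanced-around-the-root condition. Conversely, if $y$ is both binary and balanced then both inequalities hold with equality, so $\lambda_\infty(y)=1/(1-\pi_0)$; feasibility of such a $y$ (which requires the leaf weights to split evenly around the root) is exactly where the connection to \textsc{Partition} in \Cref{sec:reduction} will arise, but for the present lemma we only need the iff statement. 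I do not expect a real obstacle here: the bound is a one-line application of $y_i^2\le M$ plus the trivial $\mathrm{Var}\le\mathbb{E}[\cdot^2]$, and the equality analysis is equally direct; \Cref{starchar} is not even needed for this statement, though it will be the workhorse for turning this tight bound into an NP-hardness reduction afterwards.
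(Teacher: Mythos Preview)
Your argument is correct, and it is genuinely different from the paper's proof. The paper first invokes \Cref{starchar} to assume the optimum $y$ is almost binary (all but one leaf at $\pm1$), writes $\lambda_\infty$ explicitly in terms of $\pi_-,\pi_+,\pi_i,y_i$, and then shows by a chain of algebraic rearrangements that $\lambda_\infty\ge 1/(1-\pi_0)$ is equivalent to the trivially nonnegative expression $\pi_0\pi_i(1-y_i^2)+(\pi_+-\pi_-+\pi_iy_i)^2\ge 0$. You instead bound numerator and denominator separately via the elementary inequalities $\sum_{i\ge1}\pi_i y_i^2\le(1-\pi_0)M$ and $\mathrm{Var}\le\mathbb{E}[y^2]$, and read off the equality cases directly. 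Your route is shorter and, as you note, does not need \Cref{starchar} at all; it also proves the inequality for \emph{every} feasible $y$, not only for an optimizer. What the paper's approach buys is the explicit parametrization of $\lambda_\infty(y)$ in terms of $\pi_-,\pi_+,\pi_i,y_i$, which is exactly the form reused in \Cref{lemma:reduction} to quantify the gap in the \textsc{no}-\textsc{Partition} case; so while your proof of the present lemma is cleaner, the paper's derivation is doing double duty as setup for the reduction.
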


\begin{proof}
{
Recall $y_0 = 0$. Without loss of generality assume the following.
\begin{align*}
\max_k |y_k| &= 1 && \text{scale invariance} \\
\exists~{i \in [n-1]} ~\forall~{k \neq i}~ |y_k| &= 1    && \text{\Cref{starchar}} 
\end{align*} 
Let 
$\pi_- = \sum_{k \neq i} \pi_k \cdot \mathds{1}[y_k = -1]$ and $
\pi_+ = \sum_{k \neq i} \pi_k \cdot \mathds{1}[y_k = +1]\,.
$
We need to show
$$
\lambda_\infty = \frac{\pi_- \cdot (-1)^2 + \pi_+ \cdot (1)^2 + \pi_i y_i^2 + \pi_0 \cdot (1)^2}{\pi_- \cdot (-1)^2 + \pi_+ \cdot (1)^2 + \pi_i y_i^2 - (\pi_- \cdot (-1) + \pi_+ \cdot 1 + \pi_i y_i)^2} \geq \frac{1}{1-\pi_0}\,,
$$
and the equality holds if and only if the valuation is binary and balanced, in which case,
$y_i^2 = 1$,
and 
$
\pi_+ \cdot 1 + \pi_i \cdot y_i + \pi_- \cdot (-1) = 0\,.
$

We show equivalence between the target inequality
$\lambda_\infty \ge \frac{1}{1-\pi_0}$ and inequality
$$ \pi_0 \pi_i (1-y_i^2) + (\pi_+ - \pi_- + \pi_i y_i)^2  \ge 0\,,$$
that is trivial due to the fact that $y_i \in [0,1]\,.$ Re-writing the latter inequality as
\[\pi_0 \pi_i - \pi_0 \pi_i y_i^2 + \pi_+^2 + \pi_-^2 + \pi_i^2 y_i^2 - 2 \pi_+ \pi_- - 2 \pi_- \pi_i y_i + 2 \pi_+ \pi_i y_i \ge 0 \]
and applying $\pi_i = 1 - \pi_+ - \pi_- - \pi_0\,,$ we get 
\[ \pi_0(1 - \pi_+ - \pi_- - \pi_0) - \pi_0 \pi_i y_i^2 + \pi_+^2 
 + \pi_-^2 + \pi_i^2 y_i^2 - 2 \pi_+ \pi_- - 2 \pi_- \pi_i y_i + 2 \pi_+ \pi_i y_i  \ge 0 \]
Rearranging and adding terms to both sides we get
\begin{multline*}
\pi_+ + \pi_- + \pi_i y_i^2 + \pi_0 - \pi_0 \pi_+ - \pi_0 \pi_- - \pi_0 \pi_i y_i^2 - \pi_0^2 \ge \\
\pi_+ + \pi_- + \pi_i y_i^2 - \pi_+^2 - \pi_-^2 - \pi_i^2 y_i^2 - 2 \pi_+ \pi_i y_i + 2 \pi_+ \pi_- + 2 \pi_- \pi_i y_i.
\end{multline*}
Factorizing this expression, we get
\[ (\pi_+ + \pi_- + \pi_i y_i^2 + \pi_0 ) \cdot (1-\pi_0) \geq
(\pi_+ + \pi_- + \pi_i y_i^2)-(\pi_+ - \pi_- + \pi_i y_i)^2. \]
A final rearrangement give us 
\[ \lambda_\infty = \frac{\pi_+ + \pi_- + \pi_i y_i^2 + \pi_0 }{(\pi_+ + \pi_- + \pi_i y_i^2)-(\pi_+ - \pi_- + \pi_i y_i)^2} \geq \frac{1}{1-\pi_0}\,, \]
where derivations (holding in both directions) also hold in equality form, and so does for strict inequality. We also implicitly used the fact that $\pi_i \in (0,1), \forall i$ and that $\var[\pi]{x} > 0\,,$ particularly assuring the denominators are positive in the last inequality.
}
\end{proof}

\paragraph{The Reduction}

Given a decision oracle for  $\lambda_\infty \leq \beta$, we provide a polynomial reduction from the NP-hard integer {\sc Partition} problem \cite{K72}.

Given an instance for the {\sc Partition} problem, namely $P = \set{p_j: j \in [n-1]}$ define a star $S_n$, with vertex set $V = \{0, \cdots, n-1\}$, and $E = \{(0,j) | j \in [n-1] \}$, and a probability measure $\pi$ on the vertex set defined as follows. 
\[
    \pi_j= 
\begin{cases}
    \frac{\beta - 1}{\beta}, & j = 0\\
    \frac{1}{\beta} \cdot \frac{p_j}{\sum_k p_k},              & j \in [n-1]
\end{cases}
\]

To respond to {\sc Partition}, we forward the answer from the oracle on whether $\lambda_\infty(S_n, \pi) \leq \beta\,.$
Note that applying  \Cref{startheorem} we know if the answer to {\sc Partition} is {\sc yes} then $\lambda_\infty = \beta$, and otherwise it is larger. All remains is to lower-bound the increase in $\lambda_\infty$ due to the {\sc no}-{\sc Partition} scenario, such that it affects a polynomially deep bit of the answer. We do this in the following lemma that shows $\lambda_\infty > \beta + \Omega((\beta-1)/(\beta(\sum_i p_i)^{2}))$ in {\sc no}-{\sc Partition} case, increasing a digit (of $\lambda_\infty$) no deeper than $O(\log \sum_i p_i)$ that is polynomially bounded by the length of binary representation of the input to the {\sc Partition} problem, i.e., $\Theta(\sum_i \log p_i)$.

\begin{lemma}
\label{lemma:reduction}
$\lambda_\infty = \beta$ if and only if $P$ can be {\sc Partition}ed, otherwise $\lambda_\infty \ge \beta + \Omega\left(\frac{\beta-1}{ \beta (\sum_k p_k)^2}\right)\,.$ 
\end{lemma}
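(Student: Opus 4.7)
The plan is to split the proof into the easy forward implication and a quantitative converse that exploits the near-binary structure from Lemma~\ref{starchar}.

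\textbf{Forward direction.} If $P$ admits a partition $S_+ \sqcup S_- = [n-1]$ with $\sum_{j\in S_+}p_j=\sum_{j\in S_-}p_j$, assign $y_0 = 0$, $y_j = +1$ for $j\in S_+$, and $y_j = -1$ for $j\in S_-$. This $y$ is binary and balanced around the root, so the equality case of Lemma~\ref{startheorem} gives $\lambda_\infty(S_n,\pi) = 1/(1-\pi_0) = \beta$.

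\textbf{Quantitative converse.} Let $y$ attain the minimum in~\eqref{star_y}. By scale invariance we may assume $\max_k|y_k|=1$, and Lemma~\ref{starchar} then guarantees at most one exceptional leaf $i$ with value $t=y_i\in[-1,1]$ while the other leaves lie in $\{-1,+1\}$, inducing a partition of $[n-1]\setminus\{i\}$ into $S_+, S_-$ of $\pi$-masses $\pi_+,\pi_-$. Repeating the algebraic identity derived inside the proof of Lemma~\ref{startheorem} gives the exact formula
\[
  \lambda_\infty - \beta \;=\; \frac{\pi_0\pi_i(1-t^2)\,+\,(\pi_+-\pi_-+\pi_i t)^2}{(1-\pi_0)\,\var[\pi]{y}}.
\]
Since $|y_k|\le 1$ for all $k$ and $y_0=0$, we have $\var[\pi]{y}\le \expec{k\sim\pi}{y_k^2}\le 1-\pi_0 = 1/\beta$, so the denominator is at most $1/\beta^2$ and it suffices to show the numerator $N(t)$ is at least $\Omega\!\left(\tfrac{\beta-1}{\beta^3 Q^2}\right)$, where $Q=\sum_k p_k$.

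\textbf{Minimizing the numerator.} Substituting $\pi_i=q_i/(\beta Q)$ and $\pi_+-\pi_-=\alpha/(\beta Q)$ with integer $\alpha=q_+-q_-$, we get
\[
  \beta^2 Q^2\, N(t) \;=\; (\beta-1)q_i Q\,(1-t^2) \,+\, (\alpha + q_i t)^2,
\]
a univariate quadratic whose leading coefficient has the sign of $q_i - (\beta-1)Q$. Minimizing over $t\in[-1,1]$ and over every choice of $(S_+,S_-,\{i\})$ splits into two regimes. (i) If the minimizer is at a boundary $t=\pm 1$, then $\beta^2 Q^2 N = (\alpha \pm q_i)^2$; by the \textsc{no}-\textsc{Partition} hypothesis, neither $\alpha+q_i$ nor $\alpha - q_i$ vanishes (otherwise $S_+\cup\{i\}$ or $S_+$ would witness an equal-sum split of $P$), so both are nonzero integers and $\beta^2 Q^2 N \ge 1$. (ii) If the minimizer is at the interior critical point $t^* = -\alpha/(q_i-(\beta-1)Q)$ (which can happen only in the convex case $q_i>(\beta-1)Q$, with $|\alpha|\le q_i-(\beta-1)Q$), a direct substitution combined with the inequality $\alpha^2 \le (q_i-(\beta-1)Q)^2$ simplifies the value to $\beta^2 Q^2 N(t^*) \ge (\beta-1)^2 Q^2$. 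Combining the two regimes,
\[
  \lambda_\infty - \beta \;\ge\; \beta^2\cdot N \;\ge\; \min\!\left\{\frac{1}{Q^2},\,(\beta-1)^2\right\} \;=\; \Omega\!\left(\frac{\beta-1}{\beta Q^2}\right)
\]
for any fixed rational $\beta>1$, completing the reduction.

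\textbf{Main obstacle.} The delicate step is the interior-critical-point computation in regime (ii): one must correctly identify the sign of the quadratic's leading coefficient to know when such a critical point is actually a minimum rather than a maximum (the concave case collapses automatically to the boundary case), and then exploit the constraint $|\alpha|\le q_i-(\beta-1)Q$ to transform what looks like a potentially vanishing expression into the clean lower bound $(\beta-1)^2 Q^2$. Everything else is first-order calculus plus the integrality of $\alpha,q_i$.
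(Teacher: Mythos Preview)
Your argument is correct and complete, but it follows a genuinely different route from the paper. The paper never writes down the closed-form identity
\[
\lambda_\infty-\beta=\frac{\pi_0\pi_i(1-t^2)+(\pi_+-\pi_-+\pi_i t)^2}{(1-\pi_0)\var[\pi]{y}}
\]
and then optimizes; instead it fixes a threshold $\epsilon=\kappa b^*$ (with $\kappa=1/3$) and splits into the two regimes $y_i^2<1-\epsilon$ versus $y_i^2\ge 1-\epsilon$, bounding $\lambda_\infty$ in each case by mediant-type manipulations (\Cref{fact:mediant}) of the ratio~\eqref{equation28}. Your approach replaces that threshold-plus-inequalities argument by an exact quadratic minimization in $t$, with the case split dictated by the sign of the leading coefficient $q_i-(\beta-1)Q$. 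What your route buys is cleaner bookkeeping (no tuned parameter $\kappa$, no separate verification that the two regimes mesh) and a transparent reason why integrality of $\alpha\pm q_i$ is exactly what drives the boundary case. What the paper's route buys is that it never needs the full critical-point computation: it simply observes that either the ``non-binary defect'' $1-y_i^2$ is large, or the imbalance $|\pi_+-\pi_-+\pi_i y_i|$ is close to the integer imbalance $b^*$, and handles each by a one-line inequality. Both arrive at a gap of order $\Omega_\beta(1/Q^2)$; your final display $\min\{1/Q^2,(\beta-1)^2\}$ indeed dominates $(\beta-1)/(\beta Q^2)$ once $Q^2\ge 1/(\beta(\beta-1))$, which is harmless since $\beta$ is a fixed constant.
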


\begin{proof}
Applying  \Cref{startheorem} for the star, we have 
$\lambda_\infty = \beta$
for some normalized optimal valuation $y \in \mathbb{R}^V\,,$ if and only if $y_k = \pm1$ for every leaf $k$ and it is balanced w.r.t.\,$\pi$, i.e.,
\begin{align*}
&\sum_k \pi_k \cdot \mathds{1}[{y_k > 0}] - \sum_k \pi_k \cdot \mathds{1}[y_k < 0] = 0\,
\iff
&\sum_k p_k \cdot \mathds{1}[{y_k > 0}] = \sum_k p_k \cdot \mathds{1}[y_k < 0]\,,
\end{align*}
which is proof of a {\sc Partition} of $P$ with corresponding parts $\{k | y_k < 0\}$ and $\{k | y_k > 0\}\,.$

The argument is clearly reversible and given a valid {\sc Partition}, i.e.,  $S \subseteq [n-1]$ where
$
\sum_{j \in S} p_j = \sum_{j \notin S} p_j\,,
$
the following assignment for $y$ (upper) bounds $\lambda_\infty$ by $\frac{1}{1-\pi_0} = \beta\,.$
\[
    y_k= 
\begin{cases}
    0, & k = 0\\
    1, & k \in S\\
    -1,              & \text{otherwise}
\end{cases}
\]

Considering the  {\sc no-Partition} case, let $b^*$ be the optimal balance (minimum difference of sums) of $\pi$ over all partitions of the leaves, i.e., $\min_{S \subseteq [n-1]} |\sum_{j \in S} \pi_j - \sum_{j \in [n-1] \setminus S} \pi_j|$. Since the $p_j$'s are integers, we have that the optimal balance is at least
\begin{align}
b^* \ge \frac{1}{\beta} \cdot \frac{1}{\sum_j p_j}\,. \label{min-imbalance}
\end{align}
Let $y$ be the vector corresponding to $\lambda_\infty$ of this star instance. As before, we will assume that $\max_{k}|y_k| = 1\,,$ $y_0 = 0\,,$ 
and let $i \in [n-1]$ be the only (if any) index such that $|y_i| < 1$ (otherwise any index). By symmetry we can also assume $y_i \ge 0$.  
We are going to lower bound
\begin{align}
\lambda_\infty &= \frac{\pi_+ + \pi_- + \pi_i y_i^2 + \pi_0 }{(\pi_+ + \pi_- + \pi_i y_i^2)-(\pi_+ - \pi_- + \pi_i y_i)^2} \nonumber \\
&= \frac{1 - \pi_i(1 - y_i^2)}{1 - \pi_0 - \pi_i(1 - y_i^2) - (-\pi_- + \pi_+ + \pi_i y_i)^2} && (\pi \cdot \mathbf{1} = 1)\,. \label{equation28}
\end{align}
$\mathbf{1}$ is the all one vector.
Let $\kappa \in (0,1/2)$ be a constant to be fixed later, and let $\varepsilon = \kappa b^*$.

\paragraph*{Case 1} $0 \le y_i^2 < 1 - \varepsilon$. Continuing from \Cref{equation28},
\begin{align*}
\lambda_\infty &\ge \frac{1 - \pi_i(1 - y_i^2)}{1 - \pi_0 - \pi_i(1 - y_i^2)} && \text{neglecting second term in denominator}\\
&> \frac{1 - \pi_i \varepsilon}{1 - \pi_0 - \pi_i \varepsilon} && (1-y_i^2 >\varepsilon) \text{ and \Cref{fact:mediant}} \\
&= \frac{1}{1-\pi_0} + \frac{\pi_0 \pi_i \varepsilon}{(1-\pi_0)(1-\pi_0-\pi_i \varepsilon)} \\
& \geq \frac{1}{1-\pi_0} + \frac{\pi_0 \pi_i \varepsilon}{(1-\pi_0)(1-\pi_0)} \\
&= \beta + \beta(\beta-1) \pi_i \varepsilon && \frac{1}{1-\pi_0} = \beta \\
&= \beta + \beta(\beta-1) \pi_i \kappa b^* && \varepsilon = \kappa b^* \\
&\ge \beta + (\beta-1) \frac{\pi_i \kappa}{\sum_i p_i} && \text{Inequality \ref{min-imbalance}} \\
&\ge \beta +  \frac{\beta-1}{\beta} \cdot \frac{ \kappa}{(\sum_i p_i)^2} && \pi_i \ge \frac{1}{\beta \sum_i p_i}
\end{align*}

\paragraph*{Case 2} $1 \ge y_i^2 \ge 1-\varepsilon \Rightarrow |y_i| = y_i > 1 - \varepsilon$ we similarly have
\begin{align*}
\lambda_\infty &= \frac{1 - \pi_i(1 - y_i^2)}{1 - \pi_0 - \pi_i(1 - y_i^2) - (-\pi_- + \pi_+ + \pi_i y_i)^2} && \text{\Cref{equation28}} \\
&> \frac{1 + (-\pi_- + \pi_+ + \pi_i y_i)^2}{1 - \pi_0} && \text{\Cref{fact:mediant} and $\lambda_\infty > 1$}\\
&= \beta + \beta (-\pi_- + \pi_+ + \pi_i - (1-y_i) \pi_i)^2 \\
&\ge \beta + \beta |-\pi_- + \pi_+ + \pi_i| (|-\pi_- + \pi_+ + \pi_i| - 2|(1-y_i) \pi_i|) && (a-b)^2 \ge |a|(|a|-2|b|)\\
&\ge \beta + \beta b^*(b^* - 2\varepsilon) && *\\
&= \beta + \beta (b^*)^2 (1-2\kappa) && \varepsilon = \kappa b^*\\
&\ge \beta + \frac{1}{\beta} \cdot \frac{(1 - 2 \kappa)}{(\sum_i p_i)^2}\,. && \text{inequality  \ref{min-imbalance}} \\
\end{align*}

In inequality above, denoted by $*\,,$ we applied 
$|-\pi_- + \pi_+ + \pi_i| \ge b^*$ and further lower-bounded  $|-\pi_-+\pi_++\pi_i|-2|(1-y_i)\pi_i| > 0\,,$ considering $|-\pi_- + \pi_+ + \pi_i| \ge b^*$ and $2|(1-y_i)\pi_i| < 2\varepsilon \pi_i \leq 2\varepsilon < b^*$.

Substituting $\kappa = 1/3\,,$ we would have
$$
\lambda_\infty \ge \beta + \min \set{\frac{\beta - 1}{\beta} \cdot \frac{1}{3(\sum_i p_i)^2} , \frac{1}{\beta} \cdot \frac{1}{3 (\sum_i p_i)^2}} \,.
$$

\end{proof}

\subsection{$1+\varepsilon$ Approximation}\label{ssec:lambda-infty-star-approx}
We conclude complexity of computing $\lambda_\infty$ of the star by providing an efficient approximation algorithm.

\begin{theorem}\label{laminf-ptas}
For $\varepsilon \in (0, \min(0.1, \min_i \pi_i))$, there exists a $\text{poly}(n, \varepsilon^{-1})$ time algorithm computing a $(1 + \varepsilon)$-approximation for $\lambda_\infty$
of star graphs.
\end{theorem}
\begin{proof}
Searching for a (near) optimal $y$, we can bound the search space using the following assumptions. 
\begin{align*}
y_0 &= 0 && \text{uniform-shift invariance} \\
\max_k |y_k| &= 1 && \text{scale invariance} \\
\exists i \in [n-1], \ |y_k| &= 1, \forall{k \neq i}, y_i \in [-1,1] && \text{by \Cref{starchar}}
\end{align*} 

Without loss of generality we can assume the only (possibly) non-extreme leaf $i$ is known, suffering a multiplicative factor of $O(n)$ in the runtime. The decision to be made is on the value for $y_i \in [-1,1]$ along partitioning of the remaining leaves (i.e., $y_k \in \{ -1, +1\}$ for $0 < k \ne i$). 

Namely for $\pi_- = \sum_{k \neq i} \pi_k \cdot \mathds{1}[y_k = -1]$ and
$
\pi_+ = \sum_{k \neq i} \pi_k \cdot \mathds{1}[y_k = +1]\,,
$
the objective is to minimize
\begin{align}
\lambda_\infty = & \min_{\pi_-, \pi_+, y_i} \frac{\pi_- \cdot (-1)^2 + \pi_+ \cdot (1)^2 + \pi_i y_i^2 + \pi_0 \cdot (1)^2}{\pi_- \cdot (-1)^2 + \pi_+ \cdot (1)^2 + \pi_i y_i^2 - (\pi_- \cdot (-1) + \pi_+ \cdot 1 + \pi_i y_i)^2} \nonumber \\ 
= & \min_{\pi_-, \pi_+, y_i}  \frac{\pi_- + \pi_+ + \pi_i y_i^2 + \pi_0 }{\pi_- + \pi_++ \pi_i y_i^2 - (-\pi_- + \pi_+ + \pi_i y_i)^2}\,, \label{extremeformula}
\end{align}
where $\pi_0, \pi_i, \pi_- + \pi_+ = 1 - \pi_0 - \pi_i$ are known constants and the optimization is over a pair of valid $ -\pi_- + \pi_+ $ and $ y_i)\,.$

Without loss of generality, assume $\pi_- \ge \pi_+$ and $y_i \ge 0$ (else negating $y_i$ increases the denominator only), allowing to bound the variance at the optimal $y\,,$ from below.
\begin{align}
D \defeq  \var[\pi]{y} 
&= \sum_{j<k} \pi_j \pi_k (y_j-y_k)^2 \nonumber \\
&\ge \pi_0 (\pi_- + \pi_+) + \pi_i \pi_-  && y_0 = 0, y_i \ge 0 \nonumber \\
&\ge \frac{1}{2} (\pi_0 + \pi_i)(\pi_- + \pi_+) && \pi_- \ge \pi_+ \nonumber \\
&= \frac{1}{2} (\pi_0 + \pi_i)(1 - \pi_0 - \pi_i) && \sum_j \pi_j = 1 \nonumber \\
&\ge \frac{1}{2} 2 \varepsilon \cdot (1 - 2 \varepsilon) && \varepsilon < \min_k \pi_k \nonumber \\
&\ge \frac{1}{2} \varepsilon \cdot (1 - \varepsilon) && \varepsilon < 0.1 \label{boundvar}\,.
\end{align}
The following lemma paves the way for a dynamic programming solution. It shows minimizing $\lambda_\infty(\cdot)$ in a 
dense enough lattice over the search space for valid pairs of $-\pi_- + \pi_+$ and $y_i$ negligibly degrades the answer. 

\begin{lemma} \label{lemma12}
Given $d$ and $y_i'$ satisfying $|d - (-\pi_- + \pi_+)| < \varepsilon^2/100$ and $\max(0, y_i - \varepsilon^2/100) \leq y_i' \leq y_i$ we have
$$
\frac{\pi_- + \pi_+ + \pi_i y_i'^2 + \pi_0 }{\pi_- + \pi_++ \pi_i y_i'^2 - (d + \pi_i y_i')^2} \le \frac{\pi_- + \pi_+ + \pi_i y_i^2 + \pi_0 }{\pi_- + \pi_++ \pi_i y_i^2 - (-\pi_- + \pi_+ + \pi_i y_i)^2} (1+\varepsilon)\,.
$$
\end{lemma} 
\begin{proof}
Since $y_i' \leq y_i$ for each $i \in [n-1]$, we have
$${\pi_- + \pi_+ + \pi_i y_i'^2 + \pi_0 } \le {\pi_- + \pi_+ + \pi_i y_i^2 + \pi_0 }.
$$
Therefore, the numerator of $\lambda_\infty$ expression for $y'$ (LHS above) can be upper bounded by the numerator of the $\lambda_\infty$ expression for $y$ (in the RHS.) 
Now we lower bound the denominator of the $\lambda_\infty$ expression for $y'$.

We first observe that 
\begin{equation}
\label{eq:yi}
    y_i'^2 \geq y_i^2-2y_i\varepsilon^2/100,
\end{equation}
which can be verified considering two cases;
whether $y_i \ge \varepsilon^2/100$ or not. In the first case
$$y_i'^2 \ge \max(0,y_i-\varepsilon^2/100)^2 = (y_i-\varepsilon^2/100)^2 \ge y_i^2 - 2y_i\varepsilon^2/100\,.$$ In the other case, $0 \leq y_i'
\leq y_i \leq \varepsilon^2/100\,,$ hence we have $y_i'^2 \ge 0 \ge y_i(y_i-2\varepsilon^2/100) = y_i^2 - 2y_i\varepsilon^2/100$.

Next, we can show
\begin{equation}
\label{eq:12}    
    (d+\pi_i y_i')^2 \leq (|-\pi_-+\pi_++\pi_i y_i|+2\varepsilon^2/100)^2.
\end{equation}
This is due to the fact that given $|a-a'| \leq \alpha$ and $|b-b'| \leq \beta$ we have $(a+b)^2 \leq (|a'+b'|+\alpha+\beta)^2\,,$ which we applied for $a = d, a' = -\pi_-+\pi_+, \alpha = \beta = \varepsilon^2/100, b = \pi_i y_i', b' = \pi_i y_i$.

\begin{align}
D' & \defeq \pi_- + \pi_++ \pi_i y_i'^2 - (d + \pi_i y_i')^2 \nonumber \\
&\ge \pi_- + \pi_+ + \pi_i (y_i^2 - 2 y_i \varepsilon^2/100) - \left(|-\pi_- + \pi_+ + \pi_i y_i| + \frac{\varepsilon^2}{50} \right)^2 \label{ineq12}\\
&\ge \left(\pi_- + \pi_+ + \pi_i y_i^2 - \frac{\varepsilon^2}{50}\right) - (-\pi_- + \pi_+ + \pi_i y_i )^2 - \frac{\varepsilon^2}{50} \left(2 |-\pi_- + \pi_+ + \pi_i y_i|  + \frac{\varepsilon^2}{50} \right) \label{ineq13} \\ 
&\ge (\pi_- + \pi_++ \pi_i y_i^2 - (-\pi_- + \pi_+ + \pi_i y_i)^2) - \varepsilon^2/10 \label{ineq14}\\
&= D - \varepsilon^2/10\,. \nonumber
\end{align}
Note that inequality \eqref{ineq12} applied inequalities \eqref{eq:yi} and \eqref{eq:12}.
 Inequalities \ref{ineq13} and \eqref{ineq14} apply trivial bounds $\pi_i, y_i \in [0,1]$ and $\varepsilon \in (0,0.1)$.

Finally we have the desired
\begin{align*}
(1 + \varepsilon) D' &\ge (1 + \varepsilon)(D - \varepsilon^2/10) \\
&\ge D + \varepsilon D - \varepsilon^2/10 - \varepsilon^3/10 \\
&\ge D + \varepsilon (D - \frac{1}{10} \varepsilon (1 - \varepsilon)) \\
&\ge D + \varepsilon \cdot 0\, & \text{inequality \ref{boundvar}}.
\end{align*}
\end{proof}

Algorithm~\ref{starapproxalg} summarizes our dynamic programming solution. Enumerating over all choices for the special index $i\,, $ feasible balances of the remaining leaves are computed using dynamic programming over a dense quantization of the values, i.e., $dp[\cdot]$ is True for entries corresponding to a valid $\pi_-+\pi_+$.
\end{proof}

\begin{algorithm}
\caption{Approximating $\lambda_\infty$ of a star}
\begin{algorithmic} 
\REQUIRE Star graph $G = (V,E), \pi \in \Delta^n, \varepsilon \in (0, \min (0.1, \min_i \pi_i))$
\ENSURE $\lambda \in \lambda_\infty(G) \cdot [1, 1 + \varepsilon]$
\STATE $\lambda \leftarrow \infty$
\STATE $\pi' \leftarrow \lfloor \frac{100n}{\varepsilon^2} \pi \rfloor \frac{\varepsilon^2}{100n}$ 
\FOR {$i \in [n-1]$}
\STATE $dp \leftarrow \text{ all False for } -1:\frac{\varepsilon^2}{100n}:1$
\STATE $dp[0] \leftarrow \text{True}$
\FOR {$j \in [n-1] - \{i\}$}
\STATE {$dp[\cdot] = dp[\cdot] * \delta[\cdot-\pi'_i] \vee  dp[\cdot] * \delta[\cdot + \pi'_i] $} 
\ENDFOR
\FOR {$(d,y'_i) \text{ where } dp[d] = \text{ True and } y'_i \in 0:\frac{\varepsilon^2}{100}:1$}
\STATE {$\lambda \leftarrow \min(\lambda, \frac{\pi_- + \pi_+ + \pi_i y_i'^2 + \pi_0 }{\pi_- + \pi_++ \pi_i y_i'^2 - (d + \pi_i y_i')^2})$} \COMMENT{where $\pi_- + \pi_+ = 1 - \pi_0 - \pi_i$}
\ENDFOR
\ENDFOR
\RETURN $\lambda$
\end{algorithmic}
\label{starapproxalg}
\end{algorithm}

\subsection{Vertex Expansion}\label{ssec:vertex-expansion-tree}

We conclude this Section with a similar complexity result for vertex expansion.

\begin{proposition}\label{thm:vertex-exp-hard}
There exists a polynomial $p$ such that the following holds.
Given a star graph $G\,,$ and rational probability distribution $\pi$ over the vertices, computing $\phi^V(G,\pi)$ to $\text{p}(|G,\pi|)$ many bits accuracy is NP-hard (here $|G,\pi|$ denotes the length of the input string).
\end{proposition}

While complexity of computing $\lambda_\infty$ for unweighted trees remains an open problem, which we conjecture to be in P. We confirm that this is the case for vertex expansion.

\begin{theorem}\label{thm:vertex-exp-tree}
Given a tree $G$ and uniform distribution $\pi$ over the vertex set, the vertex expansion $\phi^V(G)$ can be computed in polynomial time.
\end{theorem}

\begin{proof}[Proof Sketch]
Root the tree from an arbitrary vertex. We propose a dynamic programming solution.

Let $dp[v][l_v][l_p][n_v][b_v]$ for some vertex $v \in V$, indicate whether one can partition the tree into some $A$ and $B = V \setminus A$ such that $v$ and its parent $p$ belong to $l_v, l_p \in \set{A, B}$ respectively, and $n_v$ of vertices in the subtree rooted at $v$ belong to $A$, and $b_v$ of vertices of this subtree are at the boundary of our partitioning (depending on inner/outer/symmetric definition for the vertex expansion). 

We can solve this $dp$ in post-order traversal due to a depth first search from the root of our tree. To compute values corresponding to every vertex $v$, $dp[v][\cdot][\cdot][\cdot][\cdot]$, we need to accumulate information corresponding to consistent $dp$ variables from children of $v$, to find all possible pairs of $n_v$ and $b_v$ using their previously computed $dp$ tables. This can be done by solving another dynamic programming task at every node $v$ to accumulate the information from its children.
\end{proof}

\section{Maximum Variance Embedding of a Tree}
\label{sec:MVE-tree}

In this Section we study the maximum variance embedding problem for trees. Trees can be optimally embedded into $k \ge 2$ dimensions by first solving the embedding problem in $n$ dimensions (that is equivalent to SDP \eqref{eq:primal-sdp} for $k = n$, hence convex) and then reducing the dimension to $2$ using the main algorithm of G\"oring, Helmberg, and Wappler \cite{GHW08}.
In \S\ref{ssec:MVE-tree-combinatorial} We extend their characterization of optimal MVE for trees and develop a linear time combinatorial algorithm, avoiding the expensive SDP step, to find an optimal MVE of the weighted tree in $\mathbb{R}^2$ (Theorem~\ref{2dtree}).
Further refuting the need to solve the SDP, in \S\ref{ssec:MVE-tree-BM} we prove a simple local search can also find the optimal maximum variance embedding in $\mathbb{R}^2$.

Little is known for MVE of trees into $\mathbb{R}^1$, equivalently the spread constant, where ordinary projection from a $2$ (or higher) dimensional solution gives no better than a $2$-approximation.
Let us further dismiss the SDP approach for approximating spread constant of a tree.
Consider a star with $3$ leaves, i.e., $S_4$, known as the claw graph, and let $\pi_0 = 0$ and $\pi_i = 1/3 : i \in [3]$. This example shows the \emph{integrality gap} for computing $\spcon$ using the SDP relaxation is $\ge 9/8$. Note that one can construct instances with arbitrary many vertices using the same structure.
In \S\ref{ssec:MVE-tree-combinatorial}, we provide a $1+\varepsilon$ approximation algorithm for $\spcon$ of weighted trees (\Cref{thm:fptas}).
Before presenting the algorithms, let us confirm that MVE of trees in $\mathbb{R}^1$ is an NP-hard problem.

\begin{theorem} 
\label{thm:sc-nph_theorem}
There exists a polynomial $p$ such that the following holds.
Given a star graph $G\,,$ and rational probability distribution $\pi$ over vertices, computing $\spcon(G,\pi)$ to $p(|G,\pi|)$ bits accuracy is NP-hard (here $|G,\pi|$ denotes the length of the input string).
\end{theorem}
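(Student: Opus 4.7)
The plan is to reduce the NP-complete \textsc{Partition} problem to computing $\spcon$ on a weighted star, along the same lines as the gadget for $\lambda_\infty$ in \Cref{thm:nph_theorem}.

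First I would simplify the continuous optimization defining $\spcon$ on a star. On a star with center $c$ and leaves $\ell_1,\dots,\ell_n$, the $1$-Lipschitz condition on edges amounts to $|f(\ell_i)-f(c)|\le 1$ for each $i$, since the pairwise constraint $|f(\ell_i)-f(\ell_j)|\le 2$ is automatic by the triangle inequality. Variance is translation invariant, so I may fix $f(c)=0$ and set $x_i := f(\ell_i)$, reducing the feasible set to the box $[-1,1]^n$. Writing $\pi_i := \pi(\ell_i)$ and $\pi_L := \sum_i \pi_i = 1-\pi(c)$, the objective becomes
\[
V(x) \;=\; \sum_i \pi_i x_i^2 \;-\; \Big(\sum_i \pi_i x_i\Big)^2 .
\]
A direct calculation gives $\partial^2 V/\partial x_i^2 = 2\pi_i(1-\pi_i)\ge 0$, so $V$ is convex in each coordinate with the others fixed, and hence its maximum over $[-1,1]^n$ is attained at some vertex of the box. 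Restricting to $x_i\in\{\pm 1\}$ and encoding such an $x$ by $S := \{i:x_i=+1\}$, and using $\sum_i \pi_i x_i^2 = \pi_L$, I obtain the combinatorial identity
\[
\spcon(G,\pi) \;=\; \pi_L \;-\; \min_{S\subseteq[n]} \bigl(2\pi(S)-\pi_L\bigr)^2 .
\]

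Next I would reduce \textsc{Partition}: given positive integers $a_1,\dots,a_n$ with $T := \sum_i a_i$, set $\pi(c) := 1/2$ and $\pi(\ell_i) := a_i/(2T)$, a rational probability distribution of polynomial bit length. Then $\pi_L = 1/2$ and $2\pi(S)-\pi_L = (2\sum_{i\in S} a_i - T)/(2T)$. If a balanced partition exists, the outer minimum is $0$ and $\spcon(G,\pi)=1/2$; otherwise integrality forces $|2\sum_{i\in S}a_i - T|\ge 1$ for every $S$, so $\spcon(G,\pi) \le 1/2 - 1/(4T^2)$. Since $T \le 2^{O(|G,\pi|)}$, the gap between the two cases is at least $2^{-q(|G,\pi|)}$ for some fixed polynomial $q$, so any algorithm producing $\spcon$ to $p(|G,\pi|)$ bits of accuracy for a sufficiently large polynomial $p$ decides \textsc{Partition}.

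I do not anticipate a substantial technical obstacle. The only conceptual step is the coordinatewise-convexity argument that collapses the continuous Lipschitz maximization to a balanced-subset problem; the remainder is routine verification of the reduction and quantification of the integrality gap into the stated polynomial bit-accuracy statement.
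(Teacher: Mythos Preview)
Your proposal is correct and follows essentially the same reduction as the paper: build a weighted star from a \textsc{Partition} instance, show the optimum embedding is binary, and quantify the gap between the \textsc{yes} and \textsc{no} cases as $\Omega(1/T^2)$. The one noteworthy difference is how you establish binariness: you observe directly that $\partial^2 V/\partial x_i^2 = 2\pi_i(1-\pi_i)\ge 0$, so $V$ is coordinatewise convex on the box $[-1,1]^n$ and its maximum lies at a vertex; the paper instead derives this from its more general ``full-stretch'' lemma for trees (\Cref{full-stretch}, via \Cref{var-dif}). Your argument is more elementary and self-contained for the star case, while the paper's route reuses machinery it needs anyway for the tree FPTAS. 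A minor remark: there is no Lipschitz constraint between leaves to begin with (only edges of $G$ are constrained), so the triangle-inequality comment is unnecessary, though harmless.
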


To prove the above one can use the same star gadget as for $\lambda_\infty$. In fact, the analysis becomes further straightforward as we have a further restricted characterization of the optimal embedding.
The following lemma paves the way for a fully binary characterization of optimum embedding of a star. Similar to our analysis for $\lambda_\infty$, we can then reduce the {\sc Partition} problem to computing spread constant of a star. In particular we show 
$\spcon(S_n,\pi) \leq 1-\pi_0$, holding with equality if and only if the leaves can be equally partitioned.
For completeness, a proof of \Cref{thm:sc-nph_theorem} is included in \Cref{ssec:spread-constant-NPH}.

\begin{lemma}\label{full-stretch}
For any optimal embedding of a tree, all edges are fully stretched.
\end{lemma}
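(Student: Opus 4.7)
The plan is to prove the statement by contradiction via \Cref{var-dif}. Assume an optimum embedding $y$ has some edge $e = \{u,v\}$ that is not fully stretched, i.e., $\|y_u - y_v\| < 1$. The key structural use of the tree hypothesis is that $T \setminus e$ splits into exactly two components $A \ni u$ and $B \ni v$; in particular, $e$ is the \emph{only} edge crossing the cut $(A,B)$.

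This lets me perturb $y$ by rigidly translating $B$: define $y'_w = y_w$ for $w \in A$ and $y'_w = y_w + t$ for $w \in B$, for a small vector $t \in \mathbb{R}^d$ to be chosen. Pairwise distances inside each part are preserved, so \Cref{var-dif} applies with the partition $\{A,B\}$ and gives
$$
\var[]{y'} - \var[]{y} = \pi_A \pi_B \bigl( \|\mu_y(A) - \mu_y(B) - t\|^2 - \|\mu_y(A) - \mu_y(B)\|^2 \bigr),
$$
because translating $B$ shifts $\mu_y(B)$ by $t$ while $\mu_y(A)$ is unchanged. Edges inside $A$ or $B$ remain Lipschitz automatically, and by the triangle inequality $\|y'_u - y'_v\| \le \|y_u - y_v\| + \|t\|$, which stays at most $1$ once $\|t\| \le 1 - \|y_u - y_v\|$.

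It then remains to pick a direction of $t$ that strictly increases the right-hand side. Writing $d = \mu_y(A) - \mu_y(B)$: if $d \ne 0$ take $t = -\epsilon d$, producing a gain of $(2\epsilon + \epsilon^2)\|d\|^2 \pi_A \pi_B > 0$; if $d = 0$, any nonzero $t$ of sufficiently small norm gives gain $\|t\|^2 \pi_A \pi_B > 0$. In either case choosing $\epsilon$ small enough to respect the slack on $e$ yields a feasible $y'$ with $\var[]{y'} > \var[]{y}$, contradicting the optimality of $y$. I do not anticipate a serious obstacle; the only subtlety is that the tree hypothesis is essential here, since on a general graph the cut defined by removing one non-stretched edge could contain other edges whose Lipschitz constraints would interfere with the rigid translation of $B$.
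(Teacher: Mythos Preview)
Your proof is correct and follows essentially the same route as the paper: remove the slack edge to split the tree into two components, rigidly translate one part, and invoke \Cref{var-dif} on the two-part partition to see that the variance strictly increases, contradicting optimality. The paper only sketches this in a sentence (``shifting one of the components in (at least) one direction \ldots increases the variance \ldots as it will further separate the barycenters''); your version spells out the computation and the case split on whether $\mu_y(A)=\mu_y(B)$, but the idea is identical.
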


The lemma can be proved by noticing that if an edge is not fully stretched, moving (embedded) vertices of one component due to removal of the edge (in positive or negative direction) can increase the variance of the embedding.
Before discussing our algorithms, in \S\ref{ssec:variance} we elaborate on properties of the objective function, i.e., variance, that we are going to use more extensively from now on.

\subsection{On the Variance Function}\label{ssec:variance}

Variance of a random variable is its second moment about the mean, and is (half of) the expected squared difference of two i.i.d.\ samples of the random variable. For an embedding in $\mathbb{R}^1$, i.e., $y: V \rightarrow \mathbb{R}$, we can define variance as follows.
\[
\var[]{y} = \frac{1}{2}\expec{i,j \sim \pi}{(y_i-y_j)^2} = \expec{i \sim \pi}{y_i^2} - \expec{i \sim \pi}{y_i}^2 = \expec{i \sim \pi}{(y_i - \expec{j \sim \pi}{y_j})^2}
\]

Variance can be tracked independently in orthonormal bases as squared Euclidean distance tensorizes due to Pythagorean theorem. One can alternatively formulate the Euclidean variance of the embedding as
\begin{align}
\var[]{\yy} &\defeq \expec{i \sim \pi}{\|\yy_i - \expec{j \sim \pi}{\yy_j}\|^2} = \sum_{i < j} \|\yy_i - \yy_j\|^2 \pi_i \pi_j\\
&= \langle \text{Diag}(\pi)- \pi \pi^T , \YY \YY^T \rangle  = \sum_{k \in [d]} \var[\pi]{\YY \ee_k} = \var[]{\YY}\,.
\end{align}

Either interpreting the variance as the second moment about the mean, versus the (expected) squared distance of two independent samples can be more convenient/tractable in various circumstances. We further extend this characterization into a hybrid setup.
Before that, let us add to our notation. 
\begin{definition}
For an embedding $\yy: V \rightarrow \mathbb{R}^d$, the mean of the vectors (with respect to a distribution $\pi$) corresponding to a set of vertices $S \subseteq V$ is
$
\mu_{\yy}(S) \defeq \sum_{i \in S} \frac{\pi_i}{\pi_S} \yy_i
$,
where 
$
\pi_S = \sum_{i \in S} \pi_i
$.
\end{definition}

In our proofs, we are going to alter the embedding to increase the variance.
Considering a partition of vertices into disjoint subsets (that our alteration is isometric within each). The following lemma allows to only keep track of changes to the variance using distances between \emph{means} of subsets (rather than all pairwise distances).

\begin{lemma}\label{var-dif}
Consider a partition of vertices $V = \dot{\bigcup} S_i$ and two embeddings $\yy, \yy': V \rightarrow \mathbb{R}^d$ with identical Euclidean distances within every $S_i$, i.e.,
$$
\|\yy_u - \yy_v\| = \|\yy'_u - \yy'_v\| \quad \forall u,v \in S_i \quad \forall S_i.
$$
The difference in variance of $\yy$ and $\yy'$ can be tracked via means of $S_i$'s:
$$
\var[]{\yy} - \var[]{\yy'} = \sum_{i<j} \pi_{S_i} \pi_{S_j} \left( \left\|\mu_{\yy}(S_i)-\mu_{\yy}(S_j)\right\|^2 - \left\|\mu_{\yy'}(S_i) - \mu_{\yy'}(S_j)\right\|^2\right)\,.
$$
\end{lemma}

\begin{proof}
Variance can be written as
\[
\var[]{\yy} = \expec{v}{\|y_v - \mu_{\yy}\|^2} = \frac{1}{2} \expec{v,w}{\|\yy_v - \yy_w\|^2} = \sum_{v < w} \pi_v \pi_w \|\yy_v - \yy_w\|^2\,.
\]
Due to isometry within each $S_i$, while transforming $\yy$ to $\yy'$, we only need to track contributions of pairs of vertices from different $S_i$'s. Fix some $i < j$.
\begin{align}
\sum_{v \in S_i, w \in S_j} &\pi_v \pi_w \|\yy_v - \yy_w\|^2 \\ &= \sum_{v \in S_i, w \in S_j} \pi_v \pi_w \|\yy_v - \mu_{\yy}(S_i) + \mu_{\yy}(S_i) - \mu_{\yy}(S_j) + \mu_{\yy}(S_j) - {\yy}_w\|^2 \nonumber\\
& = \sum_{v \in S_i, w \in S_j} \pi_v \pi_w \left(\|{\yy}_v - \mu_{\yy}(S_i)\|^2 + \|\mu_{\yy}(S_i) - \mu_{\yy}(S_j)\|^2 + \|\mu_{\yy}(S_j) - {\yy}_w\|^2\right) \label{eq:terms1}\\
& \qquad\qquad + 2 \pi_v \pi_w ({\yy}_v - \mu_{\yy}(S_i))\cdot(\mu_{\yy}(S_i)  - \mu_{\yy}(S_j)) \label{eq:terms2}\\
& \qquad\qquad + 2 \pi_v \pi_w (\mu_{\yy}(S_i) - \mu_{\yy}(S_j))\cdot(\mu_{\yy}(S_j) - {\yy}_w) \label{eq:terms3}\\
& \qquad\qquad + 2 \pi_v \pi_w ({\yy}_v - \mu_{\yy}(S_i))\cdot(\mu_{\yy}(S_j) - {\yy}_w) \label{eq:terms4}
\end{align}
First note that the terms from \eqref{eq:terms2}, \eqref{eq:terms3}, \eqref{eq:terms4} have zero contribution to the summation. This is because (inner) product is linear, $\sum_{v \in S_i} \pi_v ({\yy}_v - \mu_{\yy}(S_i)) = 0$, and $\sum_{w \in S_j} \pi_v ({\yy}_w - \mu_{\yy}(S_j)) = 0$. Now, from \eqref{eq:terms1} the only terms that can differ in computing variance for ${\yy}$ and ${\yy}'$ are
\[
\sum_{v \in S_i, w \in S_j} \pi_v \pi_w \|\mu_{\yy}(S_i) - \mu_{\yy}(S_j)\|^2 = \pi_{S_i} \pi_{S_j} \|\mu_{\yy}(S_i) - \mu_{\yy}(S_j)\|^2
\]
that are exactly accounted for in the statement.
\end{proof}

The take-away from the above is that while applying transformations to subsets of points, besides keeping the solution feasible for corresponding maximum variance embedding, it suffices to not decrease the variance within each subset as well as the variance between corresponding means. A special case is when we are applying an infinitesimal isometric transformation to a subset of points that distances away their mean from the mean of other points, and equivalently from the original mean of all points, to increase the variance. In particular, consider a partition of vertices into two components due to removal of a single edge in the tree. Shifting one of the components in (at least) one direction (left or right) increases the variance of the embedding (considering the above Lemma, as the alternation will further separate the means). Hence we can feasibly increase the variance unless moving in target direction is violating the constraint of the removed edge. This gives a proof for Lemma~\ref{full-stretch}. Finally, we state a special case of Lemma~\ref{var-dif} as follows.

\begin{proposition}\label{lem:iso}
Applying an isometric transformation $f(\cdot)$ on a subset of the embedding $\yy(S), S \subsetneq V$, the variance of the new embedding $\zz: V \rightarrow \mathbb{R}^d$ can be written as
$
\var[]{\zz} = \var[]{\yy} + \pi_S \pi_{\bar{S}} \|\delta\|^2 + 2 \pi_S \pi_{\bar{S}} \langle \mu_{\yy}(S) - \mu_{\yy}(\bar{S}), \delta \rangle
$,
where $\delta = f(\mu_{\yy}(S))-\mu_{\yy}(S)$ is the shift on the mean of $S$.
\end{proposition}

\begin{proof}[Proof of Proposition \ref{lem:iso}]
Applying \cref{var-dif}, it suffices to prove the statement for the following special case. Let us shift the embedding by $\delta = f(\mu_{\yy}(S)-\mu_{\yy}(S)) \in \mathbb{R}^d$ for a subset of vertices $S \subseteq V$ that
changes the variance by
$
\var[]{\YY +  \mathbbm{1}_S \delta^T} - \var[]{\YY} = \pi_S \pi_{\bar{S}} \|\delta\|^2 + 2 \pi_S \pi_{\bar{S}} \langle \mu_{\yy}(S) - \mu_{\yy}(\bar{S}), \delta \rangle
$. The latter is due to $
\var[]{\YY + \DD} = \var[]{\YY} + \langle \text{Diag}(\pi)- \pi \pi^T , \DD \YY^T +  \YY \DD^T + \DD \DD^T \rangle
$
for $\DD = \delta^{\otimes n} \in \mathbb{R}^{n \times d}$.
\end{proof}

\subsection{Efficient Maximum Variance Embedding into $\mathbb{R}^1$ and $\mathbb{R}^2$}\label{ssec:MVE-tree-combinatorial}

\noindent\textbf{Approximation Scheme for MVE into $\mathbb{R}^1$.}
Alon, Boppana, and Spencer \cite{ABS98} showed there exists an {\em integral} optimal embedding $y \in \mathbb{R}^{V \times 1}$ for the spread constant of a general graph, with the following property. They showed there exists a set $U$ of vertices and an assignment of signs $s(C) \in \{\pm1\}$ corresponding to every connected component $C \subseteq V \setminus U\,,$ such that for every vertex $v$ in $C$, 
$$
y_v = d_G(U,v) \cdot s(C)
$$
and $y_u$ is zero for $u \in U$. 
This characterization facilitates an $O(2^{O(n)} \cdot \text{poly}(n))$ time algorithm to compute the spread constant.
While due to NP-hardness (\Cref{thm:sc-nph_theorem}) efficient computation of spread constant is unlikely, we show it can be well approximated, which we prove here.

\begin{theorem}\label{thm:fptas}
Given a tree $G$ and a distribution $\pi$ over vertices, the spread constant $\spcon(G, \pi)$ can be $1+\varepsilon$ approximated in time $\text{poly}(|G,\pi|, 1/\varepsilon)$, where $|G, \pi|$ denotes length of the input.
\end{theorem}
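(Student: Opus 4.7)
The plan is to reduce computing $\spcon_1$ on a tree to a closest-signed-sum problem and then solve the latter by a textbook rounding FPTAS. First, I would establish the single-anchor structural Lemma~\ref{lem:tree_charac} referenced in the introduction: strengthening the \cite{ABS98} characterization, one shows that for a tree there is always an optimal integer embedding $y$ with $y_r = 0$ at a \emph{single} vertex $r\in V$, and $y_v = s(T_v)\cdot d_G(r,v)$ for $v\neq r$, where the sign $s(T_v)\in\{\pm1\}$ is constant on each subtree $T_v$ of $V\setminus\{r\}$. Starting from any ABS embedding with anchor set $U$ of size at least $2$, pick two anchors $u_1,u_2\in U$ at minimum tree-distance; the unique $u_1$-$u_2$ path, the full-stretch property of Lemma~\ref{full-stretch}, and a subtree-barycenter shift argument via Lemma~\ref{var-dif} then let one merge $u_1,u_2$ into a single anchor without decreasing the variance, and iteration drives $|U|$ down to $1$.

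Given this characterization, fix an anchor $r\in V$ and let $T_1,\ldots,T_d$ be the subtrees hanging off $r$. Define
\[
A(r) \defeq \sum_{v\in V} \pi_v\, d_G(r,v)^2, \qquad W_i(r) \defeq \sum_{v\in T_i} \pi_v\, d_G(r,v).
\]
Since $y_v^2 = d_G(r,v)^2$ is independent of the signs $s_i\in\{\pm1\}$ while $\expec{v\sim\pi}{y_v} = \sum_i s_i W_i(r)$, one obtains
\[
\var[\pi]{y} \;=\; A(r) - \Bigl(\sum_{i=1}^{d} s_i\, W_i(r)\Bigr)^{2},
\]
so maximizing the variance with anchor $r$ fixed is equivalent to minimizing $\bigl|\sum_i s_i W_i(r)\bigr|$ over $s\in\{\pm1\}^d$: a closest-signed-sum (\textsc{Partition}-type) problem on positive rational weights.

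I would solve this for each of the $n$ candidate anchors by the standard rounding dynamic program: process $W_1(r),\ldots,W_d(r)$ sequentially, after step $j$ retaining a single representative per $\delta$-bucket of attainable partial signed sums inside $[-\sum_i W_i,\sum_i W_i]$, for a total running time of $\Otil(n^2/\delta)$. The returned sum $B$ then satisfies $B^2 \le (B^*)^2 + O(\delta\cdot\sum_i W_i)$, which together with the elementary bound $\sqrt{x^2+y}\le|x|+\sqrt{y}$ and the trivial estimates $\sum_i W_i, A(r) = \mathrm{poly}(n)$ translates into a $(1+\epsilon)$-approximation of the variance at $r$ once $\delta$ is chosen as a suitable polynomial in $\epsilon$ and an a priori lower bound on $\spcon_1$ (obtainable from a single fully-stretched edge). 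Taking the best anchor over $r\in V$ yields the claimed FPTAS in time $\mathrm{poly}(|G,\pi|,1/\epsilon)$. The principal obstacle is the structural Lemma~\ref{lem:tree_charac}, since the \cite{ABS98} characterization for general graphs allows $|U|$ to be unbounded and driving $|U|$ down to $1$ genuinely uses the tree property; a secondary subtlety is the quadratic dependence of variance on the signed sum, which forces the DP precision to be adapted to $\spcon_1$.
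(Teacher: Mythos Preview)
Your high-level plan is exactly the paper's: establish the single-anchor Lemma~\ref{lem:tree_charac}, enumerate the $O(n)$ candidate anchors, observe that for fixed $r$ the variance is $A(r)-\bigl(\sum_i s_i W_i(r)\bigr)^2$ so only $\bigl|\sum_i s_i W_i(r)\bigr|$ needs to be minimized, and finish with a rounding-based FPTAS for this partition-type problem. The paper's write-up of the last step is in fact terser than yours (it just says ``the problem has become a knapsack that we can $1+\epsilon$ approximate'').

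Where you genuinely diverge is in the proof of Lemma~\ref{lem:tree_charac}. You propose to start from an ABS optimal embedding with anchor set $U$ and iteratively merge a closest pair $u_1,u_2\in U$ using Lemma~\ref{var-dif}. The paper does \emph{not} go through ABS at all: it argues directly that an optimal fully-stretched embedding of a tree admits no simple path whose $y$-profile is tri-tonic (increasing--decreasing--increasing or the reverse). The contradiction is obtained by locating the two turning points on such a path, splitting the tree into three pieces $A,C,B$ at those points, showing via Lemma~\ref{var-dif} that the barycenters $\mu_A,\mu_B$ must lie on specific sides of the turning values, and then computing the change in variance under simultaneous mirroring of $A$ and $B$; one of the four sign choices strictly increases the variance. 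Absence of tri-tonic paths together with full stretch immediately gives $|y_v|=d_G(r,v)$ for a single $r$. This route is more concrete than your merge sketch: as stated, ``merge $u_1,u_2$ into a single anchor without decreasing the variance'' does not specify the actual modification of $y$, and on the $u_1$--$u_2$ path the ABS profile is a hill (or valley) $0,s,2s,\ldots,2s,s,0$, so it is not obvious which reflection or shift simultaneously removes one zero and keeps all edges Lipschitz while not losing variance. The paper's tri-tonic argument sidesteps this by never committing to the ABS form.
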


The key is to further enhance the characterization of optimal MVE for trees by the following Lemma. In words, we show that $U$ (in the characterization by \cite{ABS98}) can be further restricted as a singleton set, while this is far from true in general graphs (even for series-parallel graphs). 

\begin{lemma}\label{lem:tree_charac}
Let $y: V \rightarrow \mathbb{R}$ be an optimal embedding for spread constant of the tree, i.e., $\spcon(G,\pi)$. $y$ can be shifted by a constant such that all values are (in absolute value) equal to distances from a vertex. That is, $\exists v \in V$ such that $|y_w - y_v| = d_G(w,v) \quad \forall w \in V$.
\end{lemma}

\begin{proof}
Recall that all edges are stretched due to Lemma~\ref{full-stretch}.
We prove the desired by showing there is no simple path in $G$ where the corresponding $y$ sequence is \emph{tri-tonic}, i.e., increasing-decreasing-increasing (or decreasing-increasing-decreasing).

We prove this by contradiction. WLOG assume all edges are fully stretched, and we have a simple path $R = \set{R_1, \ldots}$ for which $y_R$ is increasing-decreasing-increasing. Consider the two critical positions of the path where monotonicity is broken, 
$$y_{R_i} < y_{R_{i+1}} > y_{R_{i+2}} \qquad \textrm{and} \qquad
y_{R_j} > y_{R_{j+1}} < y_{R_{j+2}}, \quad j > i\,.$$

WLOG assume $y_{R_{i+1}} = 0$ and $y_{R_{j+1}} = -d < 0$.
Let $A$ be the sub-tree including ${R_i}$ after removing ${R_{i+1}}$ from the tree, and let $B$ be the subtree including ${R_{j+2}}$ due to removal of ${R_{j+1}}$.
Let $C = V \setminus A \setminus B$ be the third subtree, together with $A$ and $B$ decomposing $V$.

We denote by $\mu_A$, $\mu_B$, $\mu_C$ as the means of the subtrees corresponding to $A$, $B$, and $C$, e.g.,
$$
\mu_A = \sum_{v \in A} \frac{\pi_v}{\pi_A} y_v.
$$
where $\pi_A, \pi_B, \pi_C$ are sum of corresponding probabilities.

First we show that $\mu_A = 0 - \alpha$ and $\mu_B = -d + \beta$ for some $\alpha, \beta > 0$, otherwise and in either case, the total variance can be increased by shrinking the edge between the corresponding subtree and $C$ or mirroring the subtree around the neighbor in $C$.
Lemma \ref{var-dif} shows all needed to secure an increase in the variance, is to depart the means.

Let us prove $\mu_A < 0$. First, see $\mu_A < \mu_{T \setminus A}$ as if otherwise we can increase $|\mu_A - \mu_{T \setminus A}|$ and the variance (applying Lemma \ref{var-dif}) by simply shifting $y_A$ to the right, i.e., shrinking the edge between $A$ and $T \setminus A$.
Now if $\mu_{T \setminus A} < 0$ we can increase $|\mu_{A} - \mu_{T \setminus A}|$ by mirroring $y_A$ around $0$ and applying Lemma \ref{var-dif}.

Similarly, to prevent increasing the variance by shifting $y_B$ to the left or mirroring it around $-d$,
$\mu_B > -d$ can be proved. So let $\mu_B = -d + \beta$.

Applying Lemma~\ref{var-dif}, let us write the increase (difference) in the variance due to mirroring both $y_A$ (around $0$) and $y_B$ (around $-d$) while we can reverse the effect of either by simply negating $\alpha$ or $\beta$ in the resulting formula.
\begin{align*}
\var[]{y'} - \var[]{y} &=
\pi_A \pi_B ((\alpha - (-d-\beta))^2 - ((-\alpha)-(-d+\beta))^2)\\
&+
\pi_A \pi_C ((\alpha-\mu_C)^2-(-\alpha-\mu_C)^2)+
\pi_B \pi_C ((-d-\beta - \mu_C)^2 - (-d+\beta - \mu_C)^2)\\
&=
\pi_A \pi_B 
(2d)(2\alpha + 2\beta) +
\pi_A \pi_C 
(-2\mu_c)(2 \alpha) +
\pi_B \pi_C
(-2d - 2\mu_C)(-2\beta)\\
&= \alpha
(4 \pi_A \pi_B d - 4 \pi_A \pi_C \mu_C) + \beta(4 \pi_A \pi_B d + 4 \pi_B \pi_C (d+\mu_C))\,.
\end{align*}
Hence, being free to decide signs on $\pm \alpha, \pm \beta$, we can increase the variance 
$$
|\alpha
(4 \pi_A \pi_B d - 4 \pi_A \pi_C \mu_C)| + |\beta(4 \pi_A \pi_B d + 4 \pi_B \pi_C (d+\mu_C))|\,,
$$
the sum of which will be strictly positive (former for $\mu_C \leq 0$ and latter for $\mu_C \ge 0$), so we have the desired contradiction and the original valuation was not optimal.
\end{proof}

The above characterization paves the way to prove Theorem~\ref{thm:fptas}. We can enumerate over all $O(n)$ possible cases for $U$. Let $U = \set{v}$. WLOG let $y_v = 0$ as the objective is shift invariant. We can write the maximization objective as
$$
\var[]{y} = (\expec{}{y^2} - \expec{}{y}^2)\,.
$$

$\expec{}{y^2}$ is only a function of $v$, i.e., invariant to $s(\cdot)$. All left is to minimize $(\expec{}{y})^2$ or equivalently $|\expec{}{y}|$. While all branches are fully stretched away from $y_v = 0$ and we only need to decide their positive/negative sign to minimize the sum (first moment) in absolute value.
Now the problem has become a knapsack which can be $1+\varepsilon$ approximated in fully polynomial time.

\paragraph{A Combinatorial Algorithm for MVE into $\mathbb{R}^2$.} As promised, we end this Section by providing a fast combinatorial algorithm to compute MVE of a (weighted) tree into $\mathbb{R}^2$. In this case we extend the following characterization of the optima of MVE in $\mathbb{R}^n$, stated as Proposition~\ref{tree-characterization} (that is interestingly similar to what happened for trees in $\mathbb{R}^1$, i.e., Lemma~\ref{lem:tree_charac}).

\begin{lemma}[\cite{GHW08}]\label{separator-shadow}
Let $y: V \rightarrow \mathbb{R}^n$ be a normalized (zero-mean) optimal solution to MVE in $\mathbb{R}^n$ and $S \subseteq V$ be a separator of the graph, removing which creates disconnected components $C_1, C_2 \subseteq V$. Then there exists $i \in \{1,2\}$ for which
$$
\text{conv}\{0,y_v\} \cap \text{conv}\{y_u : u \in S\} \ne \emptyset \quad \forall v \in C_i\,.
$$
\end{lemma}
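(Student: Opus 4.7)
The plan is to argue by contradiction using first-order optimality of $y$ for the SDP that defines $\spcon_n$. Suppose the conclusion fails for both components, so we can pick $v_i \in C_i$ ($i = 1, 2$) with $[0, y_{v_i}] \cap K = \emptyset$, where $K := \text{conv}\{y_u : u \in S\}$. Applying the separating hyperplane theorem to each disjoint pair of compact convex sets, I would extract unit vectors $w_i \in \mathbb{R}^n$ such that $w_i \cdot y_u > 0$ for every $u \in S$ while $w_i \cdot y_{v_i} \le 0$; here the separating hyperplane can be taken through the origin because $0 \in [0, y_{v_i}]$.

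The next step is to cook up a feasible perturbation of $y$ that strictly increases the variance. The natural candidate is to translate all of $C_i$ by $-\epsilon w_i$, which preserves distances inside $C_i$ and inside $V \setminus C_i$ and so only affects the Lipschitz constraints on crossing edges $\{u,v\}$ with $u \in S,\ v \in C_i$. By Lemma \ref{full-stretch} we may assume each such edge is fully stretched, so feasibility at first order requires $(y_v - y_u) \cdot w_i \ge 0$ for every crossing edge. This sign condition is guaranteed at $v = v_i$ by construction but not automatically at the remaining vertices of $C_i$, which is the first obstacle.

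To get around it, I would set up Farkas's lemma on the LP whose primal asks for a feasible variance-improving translation direction of $C_i$: the dual alternative states that $\mu_{C_i}$ lies in the conic hull of $\{y_v - y_u : (u,v) \text{ is a tight crossing edge},\ u \in S,\ v \in C_i\}$. Under the zero-mean normalization $\sum_v \pi_v y_v = 0$, which forces $\mu_{C_i}$ and $\mu_{V \setminus C_i}$ to be antiparallel, this dual certificate can be recast geometrically as saying that $K$ lies between the origin and (the mean of) $C_i$ along every direction singled out by the tight-edge coefficients, simultaneously for one of $i \in \{1,2\}$ under the supposed optimality.

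The main obstacle I anticipate is upgrading this barycenter-level conclusion to the vertex-level statement required by the lemma. To bridge the gap, I would exploit the separator property of $S$: for any $v \in C_i$, every path from $v$ to the complement of $C_i$ must traverse $S$, and chaining the full-stretching of edges along such a path, combined with convexity of $K$ and the conic-hull representation of $\mu_{C_i}$, should yield the segment-hull intersection $[0, y_v] \cap K \ne \emptyset$ from the certificate. Formalizing this path-wise propagation while carrying along the Farkas coefficients is, in my estimation, where the bulk of the technical work lies; any single $v \in C_i$ escaping the shadow would produce an explicit variance-increasing feasible perturbation and contradict optimality of $y$.
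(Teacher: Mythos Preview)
The paper does not prove this lemma; it is quoted from \cite{GHW08} (G\"oring, Helmberg, and Wappler) and stated without proof, to be used as a black box in the argument for Proposition~\ref{tree-characterization}. There is therefore no in-paper proof to compare your proposal against.

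As for the proposal itself: you correctly identify the underlying mechanism---translating a component rigidly is feasible at first order precisely when the tight crossing edges allow it, and optimality forces a geometric obstruction that should manifest as the shadow condition. Your own diagnosis of the gaps is also accurate. The passage from the Farkas/barycenter certificate (which naturally controls $\mu_{C_i}$) to the per-vertex conclusion $[0,y_v]\cap K\neq\emptyset$ for \emph{every} $v\in C_i$ is the real content, and the ``path-wise propagation'' you sketch is not yet an argument: chaining full stretching along a path gives you distance information, not the convex-hull intersection you need, and there is no obvious reason the Farkas multipliers should transport along paths in the way you suggest. As written this is a plausible outline with an honestly acknowledged hole rather than a proof; if you want to close it, the original source \cite{GHW08} is the place to look.
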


Here, the convex hull of $S \subseteq \mathbb{R}^k$ is denoted by
$
\text{conv}(S) \defeq \{\alpha x + (1-\alpha) y : x,y \in S, 0 \leq \alpha \leq 1\}\,.
$

\begin{proposition}
\label{tree-characterization}
Given a tree $T = (V,E)\,,$ for any optimal MVE $y: V \rightarrow \mathbb{R}^2\,,$ either (i) the mean overlaps a single vertex $v\,,$ and 
for every other vertex $u \in V$ the graph distance between $v$ and $u$ matches that of the $\ell_2$ distance in the embedding, i.e.,
$$
d_T(u,v) = \|y_u-y_v\|_2
$$
or (ii) the mean belongs to a line segment corresponding to a single edge, i.e., $\expec{\pi}{y_v} \in \text{conv}\{y_u,y_v\}$ for some $(u,v) \in E\,,$ and the embedding spans only the line through $y_u$ and $y_v\,$ with all edges being stretched away from the mean.
\end{proposition}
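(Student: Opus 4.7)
My plan is to translate coordinates so the barycenter $\mu := \expec{\pi}{y_v}$ sits at the origin and then extract strong structural consequences by perturbing the embedding one edge at a time. The key local tool is a rotation perturbation: fix any edge $(u,v) \in E$ and let $A \ni u$, $B \ni v$ be the two subtrees of $T$ obtained by removing this edge. For any rotation $R$ of $\mathbb{R}^2$, define $y'_w = y_w$ for $w \in A$ and $y'_w = y_u + R(y_w - y_u)$ for $w \in B$. This preserves all intra-component distances and keeps the cut edge stretched, so $y'$ is feasible. Applying Lemma~\ref{var-dif} to the bipartition $\{A,B\}$, the variance change is $2\pi_A\pi_B\,(\mu_y(A)-y_u)\cdot\bigl((\mu_y(B)-y_u)-R(\mu_y(B)-y_u)\bigr)$. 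Writing $a = \mu_y(A)-y_u$ and $b = \mu_y(B)-y_u$, optimality (the first-order condition in the rotation angle together with the choice $R = -I$) forces $a \parallel b$ with $a \cdot b \le 0$; equivalently $y_u$ lies on the closed segment $[\mu_y(A), \mu_y(B)]$, and the same argument applied by rotating $A$ around $y_v$ places $y_v$ on that segment too.

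The next step converts this into a line constraint. Since $\mu = 0$ and $\{A,B\}$ partitions $V$, we have $\pi_A \mu_y(A) + \pi_B \mu_y(B) = 0$, so $\mu_y(A), 0, \mu_y(B)$ lie on a common line $\ell_e$ through the origin and hence $y_u, y_v \in \ell_e$. Whenever two edges share an endpoint $w$ with $y_w \ne 0$, the corresponding lines each contain $\{0, y_w\}$ and thus coincide. Propagating this along $T$: if no vertex is embedded at the origin, all edges share a single line $\ell$ through the origin, whereas if a vertex $v_0$ is at the origin, each subtree $T_w$ hanging off $v_0$ through a neighbor $w$ embeds on a single line $\ell_w$ through the origin (possibly distinct for different $w$).

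Case~(ii) then drops out: the embedding is $1$-dimensional on $\ell$, and since the origin is the $\pi$-barycenter but coincides with no vertex, some edge must contain the origin in its interior. The ``stretched away from the barycenter'' property follows from the antiparallel condition, which sandwiches $y_u$ and $y_v$ between $\mu_y(A)$ and $\mu_y(B)$ on opposite sides of the origin for every edge. For Case~(i) with $v_0$ at the origin, one first checks that $v_0$ is internal (a leaf $v_0$ would leave the rest of the tree strictly on one side of the origin, making a zero barycenter impossible) and is the unique origin vertex (a second origin vertex $v'_0$ would be connected to $v_0$ by a path of unit-stretched edges constrained to a fixed $\ell_w$ and yet return to the origin, which is incompatible with strictly one-sided positions within each $T_w$). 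Within each $T_w$, all positions lie strictly on one side of the origin, and any fold on the embedded path from $v_0$ to $u$ would force both $\mu_y(A)$ and $\mu_y(B)$ of the fold edge onto the same side of the fold vertex, contradicting antiparallelism; monotone $\pm 1$ increments then yield $\|y_u - y_{v_0}\|_2 = d_T(v_0, u)$.

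The main obstacle I anticipate is Case~(i): promoting the purely local antiparallel condition into the global monotonicity and single-origin-vertex claims requires carefully combining Lemma~\ref{full-stretch} with the barycenter equation and excluding configurations where a subtree wraps around the origin without locally violating optimality. The rotation perturbation itself is clean; the delicate part is stitching the per-edge conclusions into the desired global geometric picture.
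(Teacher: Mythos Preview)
Your route is genuinely different from the paper's. The paper invokes the separator--shadow result (Lemma~\ref{separator-shadow}) at each single-vertex separator: for every $v$ with $y_v\neq 0$, all but one neighbor of $v$ is forced onto the outward ray $\{(1+\alpha)y_v:\alpha>0\}$, hence sits at $y_v+y_v/\|y_v\|$. That immediately gives the radial structure and rules out folds in one stroke. Your rotation perturbation together with Lemma~\ref{var-dif} yields only the weaker antiparallel condition $y_u,y_v\in[\mu_y(A),\mu_y(B)]$, from which you then have to rebuild the global picture by hand. The trade-off is that your argument is more elementary (no appeal to the $\spcon_n$ structural lemma from \cite{GHW08}), but it has further to travel.

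There is a concrete gap in your fold-exclusion step. Antiparallelism places the fold vertex \emph{between} $\mu_y(A)$ and $\mu_y(B)$; it does not put both barycenters on the same side of it, which is what you assert. At a fold on $\ell_w$ with $y_{w_j}=M$ and $y_{w_{j+1}}=M-1$, nothing in the rotation argument prevents the far subtree $B$ from having $\mu_y(B)\ge M$. What actually kills the fold is a \emph{shift} perturbation you do not invoke: slide $B$ by $\epsilon$ toward $w_j$, shrinking that edge to length $1-\epsilon$ (still feasible); first-order optimality along $\ell_w$ then forces $\mu_y(B)\le\mu_y(A)$, hence (via $\pi_A\mu_y(A)+\pi_B\mu_y(B)=0$) $\mu_y(B)\le 0$, contradicting $B\subset T_w$ sitting at positions $\ge 1$. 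The same omission makes your uniqueness-of-origin argument circular: you appeal to ``strictly one-sided positions within each $T_w$'' to exclude a second origin vertex, yet one-sidedness was itself derived under the assumption that propagation along $\ell_w$ never restarts at a second origin vertex. Both issues are repairable by inserting shift perturbations alongside the rotations (and the case analysis does close, e.g.\ combining the shift and rotation conditions at the two edges incident to $w_1$ on a shortest origin-to-origin path already yields a contradiction), but the proposal as written does not supply them, and the stated mechanism (``same side of the fold vertex'') is the wrong inequality.
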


\begin{proof}
{Without loss of generality, assume all edges are fully stretched and the mean is at the origin,
$
\expec{v \sim \pi}{y_v} = \mathbf{0}\,.
$
Applying Theorem \ref{separator-shadow}  for every vertex $v \in V\,,$ for which $y_v \neq \mathbf{0}\,,$ for all neighbors $u \in N(v)\,,$ except at most one, we have
$$
y_u = y_v + \frac{1}{\|y_v\|_2} y_v\,,
$$
as the shadow of separator vertex $v\,,$ with respect to the origin covers all and only the half line $\{(1+\alpha)y_v : \alpha > 0\}\,.$

If more than a single edge (the segment between two embedded vertices) or a single vertex overlaps the origin \Cref{separator-shadow} forces all  other edges connected to them to remain in their starting direction moving away from the origin. This would dismiss any other edges in the graph that potentially connect such rays that contradicts connectivity of the graph.

Finally note that if the line through a segment corresponding to an edge $(u,v)$ does not include the origin, their neighbors and the rest of the tree falls in diverging rays from the origin into $u$ and $v\,,$ and we will have a separating line (hyper-plane) between the origin and convex hull of $y\,,$ which contradicts $\mathbf{0}$ being the mean of $y\,.$

So one of the cases in the statement is valid.
}
\end{proof}

In \S\ref{ssec:MVE-tree-BM} we give another proof of the above lemma, using only local optimality conditions. We are now ready to present our combinatorial algorithm.

\begin{theorem}\label{2dtree} MVE of a (weighted) $n$-vertex tree into $\mathbb{R}^2$ can be computed in time $O(n)$.
\end{theorem}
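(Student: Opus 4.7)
My plan is to use Proposition~\ref{tree-characterization} to reduce computing $\spcon_2(T, \pi)$ to locating a single ``balancing point'' $p$ on the metric tree (possibly interior to an edge) and then evaluating $\sum_w \pi_w d_T(w, p)^2$ in linear time.

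I would first establish a unified formula: in either case of Proposition~\ref{tree-characterization}, the optimal variance equals $\sum_w \pi_w d_T(w, p)^2$, where $p$ is the barycenter viewed as a point on the metric tree. In case (i), $p = v$ is a vertex and (after shifting so that $y_v = \mathbf{0}$) the proposition's identity $\|y_w - y_v\|_2 = d_T(w, v)$ gives $\var[\pi]{y} = \sum_w \pi_w \|y_w\|_2^2 = \sum_w \pi_w d_T(w, v)^2$. In case (ii), the embedding is one-dimensional with all edges stretched outward from the barycenter $\mu$, and a direct computation shows $|y_w - \mu| = d_T(w, p)$ for every $w$, where $p$ is the point on the distinguished edge $(u,v)$ at Euclidean coordinate $\mu$; hence $\var[\pi]{y} = \sum_w \pi_w d_T(w, p)^2$ again.

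I would then characterize $p$ as the unique point on the metric tree at which the moment-balancing (polygon) condition $2 m_i \le \sum_j m_j$ holds for every incident subtree $T_i$, with $m_i := \sum_{w \in T_i} \pi_w d_T(w, p)$. This is necessary for the moment vectors of case (i) to close into a polygon (to realize the barycenter at $v$), and in case (ii) it holds with equality between the two sides of the distinguished edge. Uniqueness follows from a sliding argument: as $p$ moves along any edge, opposing moments vary linearly with opposite rates, and the polygon deficit is strictly monotone, so at most one point on the tree balances.

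Finally, I would give an $O(n)$ algorithm. Root $T$ arbitrarily and compute in one DFS the subtree weights $P(v) = \sum_{w \in T_v} \pi_w$ and moments $D(v) = \sum_{w \in T_v} \pi_w d_T(w, v)$, then reroot in $O(n)$ to obtain, for every $v$ and neighbor $u$, the weight and moment of the $u$-side subtree of edge $(u,v)$ as seen from $v$. A walk that, at each current vertex $v$, steps to the neighbor $u$ whose subtree violates the polygon condition (if any) either reaches a vertex at which the condition holds (case (i)), or detects an edge $(v, u)$ where the violation flips direction (case (ii)), in which case solving a linear equation on the edge locates $p$. The walk never backtracks, since crossing edge $(v, u)$ strictly shrinks the subtree moment that prompted the step; its total work plus the final $O(n)$ evaluation of $\sum_w \pi_w d_T(w, p)^2$ amortize to $O(n)$. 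The main obstacle I foresee is carefully identifying the ``barycenter'' from Proposition~\ref{tree-characterization} with the polygon-balancing point $p$ and confirming the walk's amortized linear time bound, since a naive argument would give $O(n \Delta)$.
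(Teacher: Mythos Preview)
Your proposal is correct and follows essentially the same route as the paper: both invoke Proposition~\ref{tree-characterization}, reduce the optimal variance to $\sum_w \pi_w d_T(w,p)^2$ for a unique balancing point $p$ characterized by the polygon inequality $2m_i \le \sum_j m_j$ on the incident subtree moments, precompute all needed moments via two DFS passes (post-order then rerooting), and then locate $p$ and evaluate the second moment in $O(n)$. Your flagged concern about $O(n\Delta)$ is not an issue---the walk traverses a simple path, so the total neighbor-checking work is bounded by $\sum_{v\text{ on path}}\deg(v)\le\sum_{v\in V}\deg(v)=2(n-1)$; the paper sidesteps this by simply scanning all vertices and edges once with the precomputed moments, which amortizes the same way.
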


\begin{proof}[Proof Sketch]
{
Assume  \Cref{tree-characterization} holds in Scenario (i) with the mean overlapping $y_v$ for some vertex $v$. Proposition \ref{tree-characterization} ensures the $\deg(v)$ subtrees (branches neighboring to $v$) are fully stretched away from $y_v$
and the mean of the embedding overlaps $y_v$. Let us check when this is feasible.
Contribution of every branch $b_i$ to total moment with respect to the mean, due to Proposition~\ref{tree-characterization}, has a magnitude 
$$
\sum_{u \in b_i} \pi_u d_T(v,u)
$$
which can be towards an arbitrary direction. Call this value $M_i$ corresponding to branch $b_i$. We need to decide whether there exists an arrangement of vectors of known magnitudes, i.e.,  $\set{M_i}$, that sum up to zero. Using the triangle inequality, observe that the necessary and sufficient condition for existence of the desired arrangement is the largest moment for these vectors to be no larger than the sum of all the rest, i.e., $2 \max_i \set{M_i} \leq \sum_i M_i$.

For scenario (ii) we have a similar situation where moments of the two branches, corresponding to endpoints of the edge containing the mean, need to cancel each other. Assume this edge is $e$, and compute total moment for each branch with respect to their endpoint, say $M_1$ and $M_2$. Displacing the origin (mean) away from an endpoint increases the moment of that branch (to the origin) at a rate of total probability-mass of that branch, say $\pi_1$ and $\pi_2$. One can find a feasible location along $e$ for the mean by solving a linear equation, $\alpha \pi_1 + M_1 = (1-\alpha) \pi_2 + M_2$. We have a feasible location for the mean if (and only if) the solution satisfies $\alpha \in (0,1)$.

Now, an $O(n^2)$ algorithm is straightforward, by trying the above $O(n)$ cases. For each case finding the moments and a feasible mean in can be done in $O(n)$, as well as computing the final answer, i.e., the second moment with respect to the mean if the case is feasible.

It is easy to see the above $O(n) + O(n)$ cases can be inspected (for feasibility of overlap with the mean) in a total time of $O(n)$, given pre-computed moments due to every branch neighboring every vertex. Before showing that this information can be pre-computed in $O(n)$, we denote another key observation that: one may find one and only one feasible mean among the above $O(n)$ cases. Thus we finish the algorithm by computing the second moment with respect to the feasible mean.

Performing two rounds of depth first search (DFS) from an arbitrary vertex on the tree, we can precompute moments and probability masses due to all neighboring branches out of every vertex. In the first DFS this information is computed in post-order and the second DFS provides the information corresponding to parent branch of every node to it, in pre-order.
}
\end{proof}

\subsection{A Benign Non-Convex Optimization Landscape}\label{ssec:MVE-tree-BM}
The first algorithm for maximum variance embedding 
\cite{WS04} solved the problem for $k = n$, i.e., the SDP over the Gram matrix of the embedding $\XX \in \mathbb{R}^{n \times n}$ where $\XX_{ij} = \yy_i \cdot \yy_j$. Then, it projected the solution down to the target dimension by applying a PCA. While the latter stage is still a heuristic (linear) transformation that can inevitably degrade the solution (see Figure~\ref{fig:star}) the former step faces a serious time and storage scalability challenge.
Our combinatorial algorithm from \S\ref{ssec:MVE-tree-combinatorial} addressed this by finding the $\mathbb{R}^2$ MVE in linear time.
The question that we address in the rest of this Section is whether this could be achieved by a ``simpler'' algorithm.

\begin{figure}[t]
    \centering
    \includegraphics[width=0.85\textwidth]{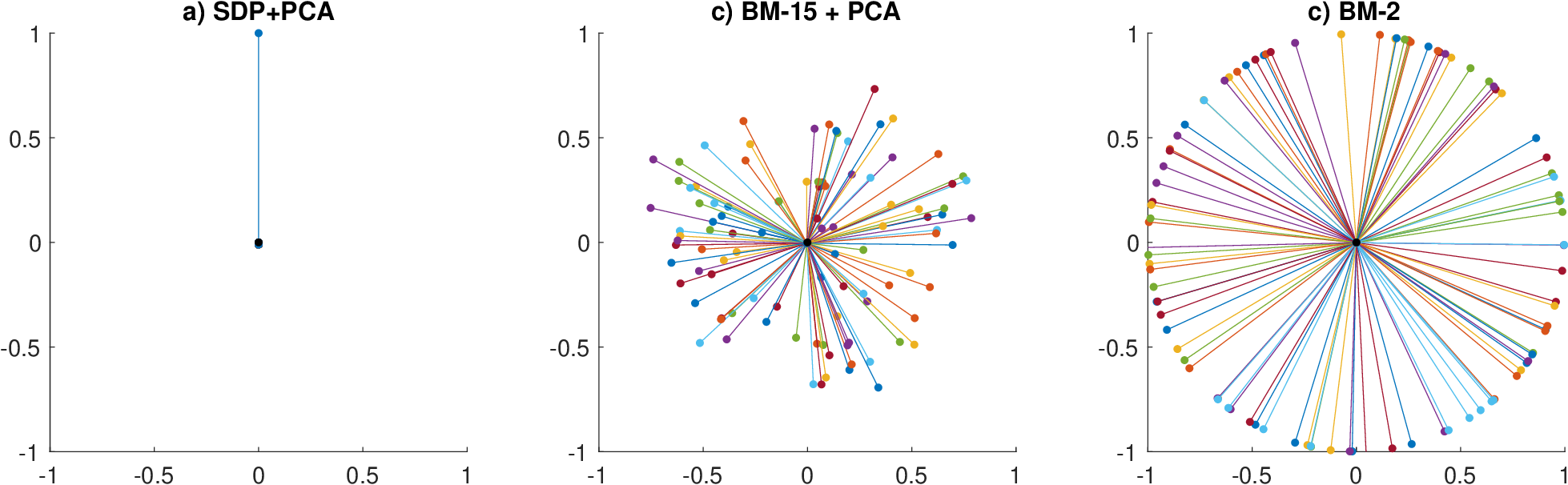}
    \caption{In this example $G = (V,E)$ is a star. We observe Burer-Monteiro approach not only provides a simple and memory-efficient algorithm, but also improves quality of the final answer.}
    \label{fig:star}
\end{figure}

A pioneering study by Burer and Monteiro 
\cite{BM03} suggested that when a SDP such as \Cref{eq:primal-sdp} has a solution of rank $k$, replace $\XX$ by $\YY \YY^T, \YY \in \mathbb{R}^{n \times d}$ in the formulation to simply impose the rank constraint, avoid positive-semidefinite matrices \& conic optimization, and use heuristic methods to find a \emph{locally optimal} solution for this constrained quadratic \emph{non-convex} optimization problem.
Despite significant performance of this heuristic in practice the question of ``how large \emph{must} we make $k$ so that the local optima are \emph{guaranteed} to map to global optima?'' has been partially addressed under various assumptions(see \cite{BNS16,CM19,BVB19}).

Applying a Grothendieck type inequality by \cite{MMMO17} to a new formulation of the problem (that we include \& analyze in Appendix~\ref{sec:MVE-tree-Grothendieck}) one can show that local optima are approximate global optima, stated as follows.
\begin{proposition}
Any locally optimal Maximum Variance Embedding of a tree in $k$ dimensions, is a $(1-\frac{1}{k-1})$-approximate solution w.r.t.\ the global optimum.
\end{proposition}
Nevertheless, existing non-convex optimization theory does not provide any guarantees for MVE in $\mathbb{R}^2$. We address this question by the following result, that we prove in the rest of this Section.

\begin{theorem}\label{thm:main1}
For $k \ge 2$ any local optima (for a Burer-Monteiro formulation) of MVE of a (weighted) tree is a global optimum. Moreover, the objective at such local (global) optima is equal to the SDP objective.
\end{theorem}

All we use to prove the above is first \textit{\&} second order necessary optimality conditions that we derive next. For more on the optimization theory that we use, one can refer to \cite{rockafellar1970convex, boyd2004convex}. For the rest of this Section, WLOG for trees and for the ease of notation, assume to only have upper bound inequality constraints for edges.
Define the Lagrangian function
\begin{align}
\mathcal{L}(\yy,w) \defeq  \var[i \sim \pi]{\yy_i} - \sum_{(i,j) \in E} w_{ij} (\|\yy_i - \yy_j\|^2 - \ell_{ij}^2)\,,
\end{align}
to rewrite the constrained optimization problem (MVE) as $$\max_{\yy} \min_{w: E \rightarrow \mathbb{R}^{\ge 0}} \mathcal{L}(\yy,w)\,.$$

\begin{lemma}[FONC]
\label{FONC}
At any local optima $\mathcal{L}(\yy, w)$,
First Order Necessary Conditions (FONC) a.k.a.\  KKT conditions \cite{K39, KT14} must hold, i.e., 
primal (Lipschitzness) and dual ($w \ge 0$) feasibility in addition to the following stationary and complementary slackness conditions for the Lagrangian.
\begin{equation}
    \begin{cases}
      \pi_i (\yy_i - \mu_{\yy}(V)) =  \displaystyle\sum\limits_{j: (i,j) \in E} w_{ij} (\yy_i - \yy_j) \quad \forall i \in V \\
      w_{ij} (\ell_{ij}^2 - \|\yy_i - \yy_j\|^2) = 0 \quad \forall (i,j) \in E \\
    \end{cases}       
\end{equation}
\end{lemma}

\begin{proof}
The stationary condition can be derived as follows while primal/dual feasibility and complementary slackness are easy to see.
\begin{align*}
\frac{\partial \mathcal{L}}{\partial \YY_{ik}} &= \sum_{j \ne i} 2 \pi_i \pi_j (\YY_{ik} - \YY_{jk}) -  \sum_{j: (i,j) \in E} 2 w_{ij} (\YY_{ik}-\YY_{jk})
\\
&= 2 \pi_i (\yy_{ik}-\mu_{\yy,ik}) - \sum_{j: (i,j) \in E} 2 w_{ij} (\YY_{ik}-\YY_{jk})
\end{align*}
\end{proof}

An intuitive interpretation of FONC is to think of the feasible solution (embedding) as placement of \emph{mass-points} $\set{\pi_i: i \in V}$ in the $d$-dimensional Euclidean space, $\yy: V \rightarrow \mathbb{R}^d$, while they are connected by \emph{strings}. By dual feasibility and complementary slackness, $w_{ij} \ge 0$ can be interpreted as a normalized \emph{stretch} of an edge $(i,j)$ that can be positive only if the corresponding string is \emph{fully} stretched. KKT conditions require this mass-string system to be  stationary while each stretched string is \emph{pulling} its endpoints proportional to the stretch and its displacement vector and each mass point is \emph{pushed away} from the mean by a force proportional to its mass and distance from $\mu_{\yy}$. This is visualized by Figure~\ref{fig:mass-rope}. 

\begin{figure}
  \begin{center}
\begin{tikzpicture}[line width=0.7pt, scale=0.7]
    \begin{scope}[shift={(-2.5,0)}]
    % central forces
    \draw [ultra thick, red!100] [->] (45:1) -- (45:2);
    
    \draw [ultra thick, red!100] [->] (135:1) -- (135:1.5);
    
    \draw [ultra thick, red!100] [->] (-45:1) -- (-45:1.5);
    
    \draw [ultra thick, red!100] [->] (10:1.5) -- (10:2.5);
    
    \draw [ultra thick, red!100] [->] (180+35:2) -- (180+35:3.5);
    
    % ropes
    
    \draw [ultra thick, black!50] [-] (45:1) -- (135:1);
    
    \draw [ultra thick, black!50] [-] (45:1) -- (-45:1);
    
    \draw [ultra thick, dotted, black!50] [-] (45:1) -- (10:1.5);
    
    \draw [ultra thick, dotted, black!50] [-] (180+35:2) -- ++ (0.3,0.6);
    \draw [ultra thick, dotted, black!50] [-] (180+35:2) -- ++ (0.6,0.3);
    \draw [ultra thick, dotted, black!50] [-] (135:1) -- ++ (-0.25,-0.4);
    
    % rope stretches
    
    \draw [ultra thick, black!100] [->] (45:1) -- ++ (-90:0.65);
    
    \draw [ultra thick, black!100] [->] (-45:1) -- ++ (90:0.65);

    \draw [ultra thick, black!100] [->] (45:1) -- ++ (180:0.65);
    
    \draw [ultra thick, black!100] [->] (135:1) -- ++ (0:0.65);

    % balls

    \shade[ball color=blue](45:1) circle(4pt) node[above] {$\mathbf{y_1}$};
    \shade[ball color=blue](10:1.5) circle(3pt) node[below] {$\mathbf{y_2}$};
    \shade[ball color=blue](135:1) circle(3pt) node[left] {$\mathbf{y_3}$};
    \shade[ball color=blue](-45:1) circle(3pt) node[left] {$\mathbf{y_5}$};
    \shade[ball color=blue](180+35:2) circle(3pt) node[above] {$\mathbf{y_6}$};
    \filldraw[fill=red!20!white, draw=white](0:0) circle(8pt) node {$\mathbf{\mu_y}$};
    \node[text width = 2cm] at (1.1,-2.5) {(a)};
    \end{scope}

    \begin{scope}[shift={(4,0)}]

    \draw [ultra thick, red!100] [->] (1,0) -- ++ (1,0);
    \draw [ultra thick, red!100] [->] (-1,0) -- ++ (-1,0);
    \draw [ultra thick, red!100] [->] (0,0.2) -- ++ (0,0.1);
    \draw [ultra thick, red!100] [->] (0,-0.2) -- ++ (0,-0.1);
    
    \draw [ultra thick, black!50] [-] (0,0.1) -- (1,0);
    \draw [ultra thick, black!50] [-] (0,0.1) -- (-1,0);
    \draw [ultra thick, black!50] [-] (0,-0.1) -- (1,0);
    \draw [ultra thick, black!50] [-] (0,-0.1) -- (-1,0);
    
    \draw [ultra thick, black!100] [->] (0,0.1) -- ++ (0.5,-0.05);
    \draw [ultra thick, black!100] [->] (0,0.1) -- ++ (-0.5,-0.05);
    \draw [ultra thick, black!100] [->] (0,-0.1) -- ++ (0.5,0.05);
    \draw [ultra thick, black!100] [->] (0,-0.1) -- ++ (-0.5,0.05);
    \draw [ultra thick, black!100] [->] (-1,0) -- ++ (0.5,-0.05);
    \draw [ultra thick, black!100] [->] (-1,0) -- ++ (0.5,0.05);
    \draw [ultra thick, black!100] [->] (1,0) -- ++ (-0.5,-0.05);
    \draw [ultra thick, black!100] [->] (1,0) -- ++ (-0.5,0.05);

    \shade[ball color=blue](0,0.1) circle(4pt) node[above] {$\mathbf{y_1}$};
    \shade[ball color=blue](0,-0.1) circle(4pt) node[below] {$\mathbf{y_3}$};
    \shade[ball color=blue](1,0) circle(4pt) node[below] {$\mathbf{y_2}$};
    \shade[ball color=blue](-1,0) circle(4pt) node[below] {$\mathbf{y_4}$};   
    \node[text width = 2cm] at (1.1,-2.5) {(b)};
    \end{scope}
    
    \begin{scope}[shift={(10,0)}]

    \draw [ultra thick, red!100] [->] (0.707,0) -- ++ (0.707,0);
    \draw [ultra thick, red!100] [->] (-0.707,0) -- ++ (-0.707,0);
    \draw [ultra thick, red!100] [->] (0,0.707) -- ++ (0,0.707);
    \draw [ultra thick, red!100] [->] (0,-0.707) -- ++ (0,-0.707);
    
    \draw [ultra thick, black!50] [-] (0.707,0) -- (0,0.707);
    \draw [ultra thick, black!50] [-] (-0.707,0) -- (0,0.707);
    \draw [ultra thick, black!50] [-] (0.707,0) -- (0,-0.707);
    \draw [ultra thick, black!50] [-] (-0.707,0) -- (0,-0.707);

    \draw [ultra thick, black!100] [->] (0,0.707) -- ++ (0.353,-0.353);
    \draw [ultra thick, black!100] [->] (0,0.707) -- ++ (-0.353,-0.353);  
    \draw [ultra thick, black!100] [->] (0,-0.707) -- ++ (-0.353,0.353);
    \draw [ultra thick, black!100] [->] (0,-0.707) -- ++ (0.353,0.353);
    \draw [ultra thick, black!100] [->] (0.707,0) -- ++ (-0.353,-0.353);
    \draw [ultra thick, black!100] [->] (0.707,0) -- ++ (-0.353,-0.353);
    \draw [ultra thick, black!100] [->] (0.707,0) -- ++ (-0.353,0.353);
    \draw [ultra thick, black!100] [->] (-0.707,0) -- ++ (0.353,-0.353);    
    \draw [ultra thick, black!100] [->] (-0.707,0) -- ++ (0.353,0.353);

    \shade[ball color=blue](0.707,0) circle(4pt) node[below] {$\mathbf{y_2}$};
    \shade[ball color=blue](-0.707,0) circle(4pt) node[above] {$\mathbf{y_4}$};
    \shade[ball color=blue](0,0.707) circle(4pt) node[right] {$\mathbf{y_1}$};
    \shade[ball color=blue](0,-0.707) circle(4pt) node[left] {$\mathbf{y_3}$};
    \node[text width = 2cm] at (1.1,-2.5) {(c)};
    \end{scope}
 \end{tikzpicture}  \end{center}
 \caption{FONC/KKT conditions for MVE in $\mathbb{R}^2$. Every vertex $v$, at $y_v$, is exerted away from the mean by $\pi_v(\yy_v - \mu_y)$ drawn in red. Fully stretched ropes pull vertices at their endpoints. In (a) $y_1$ is in equilibrium. (b) and (c) show two equilibria (and optimal embeddings) for a 4-cycle.} 
 \label{fig:mass-rope}
\end{figure}

It is easy to see that FONC are not enough for (even local) optimality of a feasible solution. 
As an example consider a $Q_3$, i.e.,  a $3D$ cube, with edges of length $2$ and uniform $\pi$ over vertices. For MVE in $\mathbb{R}^2$, the optimal objective is $3$ as well.
One can verify an embedding $\yy_v = \pm \ee_1$, with mapping of neighboring vertices to opposite points, gives a \emph{stationary} solution (with valid stretches) with an objective value of $1$. Even an all-zero solution is stationary.
Nevertheless, given sufficient degrees of freedom, i.e., dimensions, such spurious yet stationary solutions will not be stable and a slight noise/perturbation leads to divergence (by KKT forces) from current solution. This instability can be embraced by Second Order Necessary Conditions (SONC).

Before formulating SONC, define $
E_{\yy} \defeq \set{(i,j) \in E : \|\yy_i - \yy_j\| = \ell_{ij}}
$ as the subset of edges for which, the constraints are active, i.e.,  fully stretched strings.
Let $\mathcal{T}_{\yy}$ be the corresponding tangent space, i.e., subspace of orthogonal perturbations to these active constraints, formulated as follows.

\begin{align}
\begin{split}
\mathcal{T}_{\yy} &\defeq \Bigl\{ \uu \in \mathbb{R}^{nd} : \\ & \quad \langle \nabla_{\yy} (\ell_{ij}^2-\|\yy_i - \yy_j\|^2) \, \uu \rangle = 0 ~ \forall (i,j) \in E_{\yy} \Bigr\}
\end{split}\\
\begin{split}
&= \Bigl\{ \uu = [\uu_1, \cdots, \uu_n]^T \in \mathbb{R}^{n \times d} : \\ & \quad \langle \yy_i - \yy_j \, \uu_i - \uu_j \rangle = 0 ~ \forall (i,j) \in E_{\yy} \Bigr\}
\end{split}
\end{align}

The following lemma formulates the SONC applied to our problem.
Adding to our notation, define
\begin{align}
\mathcal{M}( w) \defeq \sum_{(i,j) \in E} w_{ij} (e_i-e_j)(e_i-e_j)^T -  (\text{Diag}(\pi) - \pi \pi^T)\,,
\end{align}
using which one can write the main constraint of the dual SDP, \Cref{dualsdp}, as $\mathcal{M}(w) \succcurlyeq \mathbf{0}$. 
\begin{lemma}(SONC) \label{lem:SONC}
Any local optimizer $\mathcal{L}(\yy,w)$ (that is a regular point of the hypersurface defined by active constraints) must satisfy $\langle \mathcal{M}, \uu \uu^T \rangle \ge 0 \quad \forall \uu \in \mathcal{T}_{\yy}$ or equivalently
\begin{align}
\var[i \sim \pi]{\uu} &\leq \sum_{(i,j) \in E} w_{ij} \|\uu_i - \uu_j\|^2 \quad \forall \uu \in \mathcal{T}_{\yy}\,.
\end{align}
\end{lemma}

\begin{proof}
Second order necessary condition is
\begin{equation}
\langle \nabla^2_{\yy} \mathcal{L}(\yy,w) , \uu \uu^T \rangle \leq 0\,.
\end{equation}
Naming $\mathcal{L}_1(\yy,w) = \frac{1}{2} \var[i \sim \pi]{\yy_i}$ and $\mathcal{L}_2(\yy,w) = \frac{1}{2} \mathcal{L}(\yy,w) - \mathcal{L}_1(\yy,w)$ we can re-write the left hand side from the above as
\begin{align}
&\frac{1}{2} \langle \nabla^2_{\yy} \mathcal{L}(\yy,w) , \uu \uu^T \rangle \\&= \sum_{i,k} \frac{\partial^2 \mathcal{L}_1}{\partial \yy_{ik}^2} \uu_{ik}^2 + \sum_{i, j \ne i, k} \frac{\partial^2 \mathcal{L}_1}{\partial \yy_{ik} \partial \yy_{jk}} \uu_{ik} \uu_{jk} + \sum_{i, j \leftrightarrow i, k} \frac{\partial^2 \mathcal{L}_2}{\partial \yy_{ik}^2} \uu_{ik}^2 + \sum_{i, j \leftrightarrow i, k} \frac{\partial^2 \mathcal{L}_2}{\partial \yy_{ik} \partial \yy_{jk}} \uu_{ik} \uu_{jk}\\
&=
\sum_{i,k} \pi_i (1-\pi_i) \uu_{ik}^2 - \sum_{i, j \ne i, k} \pi_i \pi_j \uu_{ik} \uu_{jk} - \sum_{i, j \leftrightarrow i, k} w_{ij} \uu_{ik}^2 + \sum_{i, j \leftrightarrow i, k} w_{ij} \uu_{ik} \uu_{jk}\\
&=
\var[\pi]{\uu} - \sum_{(i,j) \in E} w_{ij} \|\uu_i - \uu_j\|^2\\
&= \langle - \mathcal{M}, \uu \uu^T \rangle
\end{align}
\end{proof}

In plain language, SONC requires that any perturbation that is relatively orthogonal to active (stretched) strings, to contain less variance w.r.t.\ mass-points than its second moment w.r.t.\ the stretches.

Notice that the above FONC and SONC conditions for $\mathcal{L}(\yy,w)$ are closely related to (strong) duality conditions for SDPs in \eqref{eq:primal-sdp} and \eqref{dualsdp}. Let $\XX_{ij} = \yy_i^T \yy_j$. It is easy to see that primal and dual feasibility of $\yy$ and $w$ for $\mathcal{L}$ ensures feasibility of $\XX$ for \eqref{eq:primal-sdp} and $w$ for \eqref{dualsdp}, except possibly the main constraint $\mathcal{M}(w) \succcurlyeq \mathbf{0}$. If we show that latter is satisfied, along with complementary slackness for the SDPs $\mathcal{M} \XX = \mathbf{0}$ (which is immediate due to FONC as $\mathcal{M} \XX = \mathcal{M} \YY \YY^T = \mathbf{0} \YY^T = \mathbf{0}$) strong duality for SDPs (which holds due to Slater's condition) ensures $\XX$ and $w$ are optimal solutions to SDPs with equal objectives. Same objective value is also achieved by $\yy,w$ for $\mathcal{L}$ (that is upper bounded by the SDP solution), hence $\yy,w$ is an optimal solution to the non-convex optimization problem. So we have proved the following Lemma.

\begin{lemma}\label{enough}
A feasible solution $\yy,w$ for $\mathcal{L}$ is at global optimum (with the same objective value as MVE in $\mathbb{R}^n$, i.e., \eqref{eq:primal-sdp} for $k = n$) if in addition to FONC (\Cref{FONC}),
$\mathcal{M}(w) \succcurlyeq \mathbf{0}$ is satisfied.
\end{lemma}

As an example of applying the above Lemma we can easily show the following.

\begin{proposition}\label{prop:ndim}
For MVE (of general graphs) in $\mathbb{R}^n$, any local optimum is a global optimum.
\end{proposition}

\begin{proof}
To apply \Cref{enough}, let us prove $\mathcal{M}(w) \succcurlyeq \mathbf{0}$ by contradiction. Assume $ \langle \mathcal{M}, \bb \bb^T \rangle < 0$ for some $\bb \in \mathbb{R}^n$ and WLOG $\mu_{\yy} = \mathbf{0}$. Consider unit vector $\dd \in \mathbb{R}^n$ orthogonal to $\Span{\yy{}}$. Let $\uu = \bb \dd^T \in \mathcal{T}_{\yy}$.  We have
$
\langle \mathcal{M}, \uu \uu^T \rangle = \langle \mathcal{M}, \bb \dd^T \dd \bb^T \rangle = \langle \mathcal{M}, \bb \bb^T \rangle < 0
$ which contradicts SONC for $\mathcal{L}(\yy,w)$.
\end{proof}

Before proving \Cref{thm:main1}, let us give yet another proof of \Cref{tree-characterization}, i.e., the characterization of local optima for MVE of trees in $\mathbb{R}^2$, using only FONC and SONC (and not \Cref{separator-shadow}).

\begin{proof}[Proof of lemma \ref{tree-characterization}]
We prove the characterization in two steps. \textbf{i:} Every edge is (fully stretched and) embedded over a line through the mean $\mu_{\yy}$. \textbf{ii:} All edges are stretched away from $\mu_{\yy}$. WLOG let $\mu_{\yy} = \mathbf{0}$.

\paragraph{i.}
Note that if some $w_{ij} = 0$ the SONC can be refuted simply by 
choosing an orthogonal vector $\bb$ to the (embedding of this) edge, $\bb \cdot (\yy_i - \yy_j) = 0$, and
perturbing vertices by $\uu_v = \bb \mathbbm{1} \cdot [d_G(v,i) < d_G(v,j)] \quad \forall v \in V$; whereas
\begin{align}
\langle \mathcal{M}, \uu \uu^T \rangle &= \sum_{e \in E} w_{e} \|\uu_{e_1} - \uu_{e_2}\|^2 - \var[\pi]{\uu}\\
&= 0 - \pi(C_u) \pi(C_v) \|\bb\|^2 < 0\,.
\end{align}
So $w_{ij} \ne 0 \quad \forall (i,j) \in E$ and complementary slackness shows every edge is fully stretched.
For the sake of contradiction, assume $(u,v)$ is embedded not along a line through the origin, i.e.,  $\dim{\Span{\yy_u, \yy_v}} = 2$.
Let $C_u, C_v$ be the corresponding components due to removal of the edge $(u,v)$ from $G$.

Archimedes showed the \emph{static equilibrium of a system}, corresponding to KKT-stationary condition for all of our vertices, requires not only sum of the forces on all of the points to be zero but also the net torque about any pivot to be zero. We use this fact to show $C_u$ (and $C_v$) is not in equilibrium, so not every point in the system satisfies KKT.
Let us compute the net torque on $C_u$, around the mean at $\mathbf{0}$. The arm for every $\pi_i \yy_i$ will be zero. Also pairs of torques due to $w_{ij}$ for $i,j \in C_u$ cancel each other. The only term left to sum is $\yy_u \times w_{uv}(\yy_u - \yy_v)$ which will be non-zero, giving the desired contradiction. This is due to $w_{uv} \ne 0$ and $\yy_v$ and $\yy_u$ being linearly independent.

\paragraph{ii.}
We claim all edges are stretched away from a single point, overlapping $\mu_{\yy}$, and rooting the tree from there will make all children of any vertex $s \in V$ to be embedded farther from the origin (root) than $s$. Otherwise, consider a vertex $s$ to have two neighbors $a$ and $b$ embedded at $\yy_a = \yy_s - \frac{\ell_{sa} \cdot \yy_s}{\|\yy_s\|}, \yy_b = \yy_s - \frac{\ell_{sb} \cdot \yy_s}{\|\yy_s\|}$ respectively. WLOG (due to symmetry) assume $\yy_s = y_s \ee_1$ for some $y_s > 0$. Let $C_a, C_b$ be the components due to removal of $s$ that contain $a, b$. Summing up KKT (stationary) conditions for vertices in $C_a$ results
$$
\mathbf{0} = w_{sa} (\yy_s-\yy_a)  + \sum_{i \in C_a} \pi_i \yy_i = w_{as} \ell_{sa} \ee_1 + \mu_{C_a} \pi(C_a)^{-1}\,.
$$
Due to the above we must have $\mu_{C_a} = - w_{as} \ell_{as} \pi(C_a) \ee_1 = -y_{C_a} \ee_1$. Note that $y_{C_a} > 0$ as otherwise leads to a contradiction with $y_{C_a} = w_{as} = 0$.
Similarly let $\mu_{C_b} = -y_{C_b} \ee_1$. We utilize SONC to show (infinitesimal) rotation of $C_a$ and $C_b$ around $s$ in opposite directions, i.e.,  towards $\pm \ee_2$, will increase the objective and refute local optimality. As we are not changing distances inside $C_a, C_b$, and $V \setminus C_a \setminus C_b$ it suffices to track the means, due to Proposition \ref{lem:iso}. Replace 
$C_a$ by a new vertex $1$ at $\yy_1 = -y_{C_a} \ee_1$ of mass $\pi_1 = \pi(C_a)$ and $C_b$ with $2$ at $\yy_2 = -y_{C_b} \ee_1 $ of mass $\pi_2 = \pi(C_b)$. 
Moreover, connect them to $s$ by edges of length $y_{C_a} + y_s$ and $y_{C_b} + y_s$, respectively.

Before writing SONC let us compute new $w$ as well. Using FONC, e.g.,  $-y_{C_a} \pi_1 + w_{1s} (y_{C_a} + y_s) = 0$, we will have
$
w_{1s} = \pi_1 \cdot \frac{ y_{C_b}}{y_{C_a} + y_s}$ and similarly $w_{2s} = \pi_2 \cdot \frac{y_{C_b}}{y_{C_b} + y_s}
$.
Finally we can see that SONC (\Cref{lem:SONC}) is refuted for $\uu_v = \ee_2 (\pi_2 \mathbbm{1}[v = 1] - \pi_1 \mathbbm{1}[v = 2]) \quad \forall v$, as
\begin{align*}
\var[]{\uu} &= \pi_1 \pi_2^2 + \pi_2 \pi_1^2 \\ &> \pi_1 \cdot \frac{ y_{C_a}}{y_{C_a} + y_s} \cdot \pi_2^2 + \pi_2 \cdot \frac{y_{C_b}}{y_{C_b} + y_s} \cdot \pi_1^2 \\ &= \sum_{ij} w_{ij} \|\uu_i - \uu_j\|^2\,.
\end{align*}
\end{proof}
We will now prove Theorem~\ref{thm:main1}, showing $\XX = \YY \YY^T,w$ at local optima of $\mathcal{L}$ to be the optimum with respect to the unconstrained SDP. Define a star $G'$ centered at a new vertex of mass $0$ and embed it at $\mu_{\yy}$ of $G$. For every $v \in V$ add a leaf to $G'$ with the same mass, embedded at the same point, and connected to the center by an edge of length $\|\yy_v-\mu_{\yy}\|$. One can observe $d_G(i,j) \leq d_{G'}(i,j) ~~ \forall i,j \in V$ is enforced due to the triangle inequality, hence MVE (value) of $G'$ upper bounds that of $G$, as the new maximization problem has weaker constraints by definition.
So it suffices to prove the claim for $G'$, i.e., a star.
The following lemma concludes the proof by showing MVE of $G'$ (in $\mathbb{R}^n$) is achieved by the variance of the same embedding in $\mathbb{R}^2$.

\begin{lemma}\label{lem:star}
Let $G$ be a star, $S_n$, with every leaf $i$ connected to the center $0$ by an edge of length $d_i$. Any solution $\yy: V \rightarrow \mathbb{R}^k$ that stretches all of the edges,  $\|\yy_0 - \yy_i\| = d_i \quad \forall i$, and is balanced around the center, $\mu_{\yy} = \yy_0$ is an optimal solution to MVE in $\mathbb{R}^n$.
\end{lemma}

\begin{proof}
If we show $\yy$ (zero-padded to $\mathbb{R}^n$) along with some $w$ satisfy FONC and SONC for $\mathcal{L}$ in $\mathbb{R}^n$, we can apply \Cref{prop:ndim} to ensure $\yy$ is global optimum to MVE in $\mathbb{R}^n$.

Let $w_{0i} = \pi_i \ge 0 \forall i$. It is easy to see FONC is satisfied. Considering any $\uu \in \mathcal{T}_{\yy}$ SONC holds as follows.
\begin{align*}
\var[\pi]{\uu} &= \sum_i \pi_i \| \uu_i - \expec{j}{\uu_j} \|^2\\
&\leq \sum_i \pi_i \| \uu_i - \uu_0 \|^2\\
 &= \sum_i w_i \| \uu_i - \uu_0 \|^2 = \sum_{ij} w_{ij} \|\uu_i - \uu_j\|^2
\end{align*}
\end{proof}

\section{Maximum Variance Embedding in Tree-width Dimensions}\label{sec:MVE-tree-width}
Generalizing Theorem~\ref{thm:sc-nph_theorem} from the previous Section, in \S\ref{ssec:MVE-tw-NPH} we show that computing/finding MVE in tree-width dimensions is NP-hard. In \S\ref{ssec:MVE-TW+2} we provide an algorithm that given any feasible solution to MVE of a graph along with a tree-decomposition of width $k$, provides a feasible solution to MVE in $\mathbb{R}^{k+2}$ with no less variance (objective value).

\noindent\textbf{Tree-width} is a common parameter in graph theory to measure the complexity of the graph structure. 
It has multiple equivalent definitions, from maximum size of a clique in a chordal completion of the graph, to size of the largest node in an optimal tree-decomposition, see \cite{diestel2005graph,robertson1986graph}. Interestingly, in a first discovery, tree-width was named the \emph{dimension} of the graph  \cite{bertele1973non}.
Graphs of small tree-width appear in many artificial structures by design, e.g., the classical routine to compute resistance of a series-parallel electric circuit applies to series-parallel graphs, i.e., graphs of tree-width at most $2$. Many natural data-sets can pose low tree-width structures, e.g., social networks \cite{adcock2016tree}. From theoretical perspective, bounded-tree-width graphs are specially interesting for making many hard problems easier to solve.

\subsection{MVE in Tree-Didth Dimensions is NP-hard}\label{ssec:MVE-tw-NPH}

Here we prove that MVE in tree-width dimensions is NP-hard, in the special case of having only inequality (upper bound) constraints (that makes the problem feasible for embedding into any Euclidean space, while equality constraints may prevent feasibility in less than tree-width dimensions). In the rest of this subsection, we prove the following result.

\begin{theorem}\label{thm:hard}
Given a tree-decomposition of width $k$ of a weighted graph, finding its maximum variance embedding (with only upper bound inequality constraints) into $\mathbb{R}^k$ is NP-hard.
\end{theorem}

Let us sketch the ideas behind our proof. The reduction will be from the same decision problem {\sc Partition} as for Theorem~\ref{thm:sc-nph_theorem}, given by $N$ integers $P_1, \cdots, P_N$, and querying whether the numbers can be partitioned into two sets of equal sum.
Here, the gadget is a $k$-star, that we define next.

A $k$-tree is a graph due to repeatedly adding new vertices to an initial $k$-clique, and connecting the new vertex to exactly $k$ existing ones.
(For more information on maximal graphs of tree-width $k$, i.e., $k$-trees, see \cite{patil1986structure}.)
Similarly, we call the graph a $k$-star when the $k$ neighbors of all new vertices must be the initial $k$ vertices. 
A $k$-star for $k = 3$ is depicted by Figure~\ref{fig:pyramids}-b, at (proximity) of a candidate optimal embedding into $\mathbb{R}^3$ that we are going to characterize next. First, note that the tree-width of the gadget is $\leq k$. 
\begin{fact}
Tree-width of a $k$-star is (at most) $k$.
\end{fact}

\begin{figure}
\centering
    \includegraphics[width=0.45\textwidth]{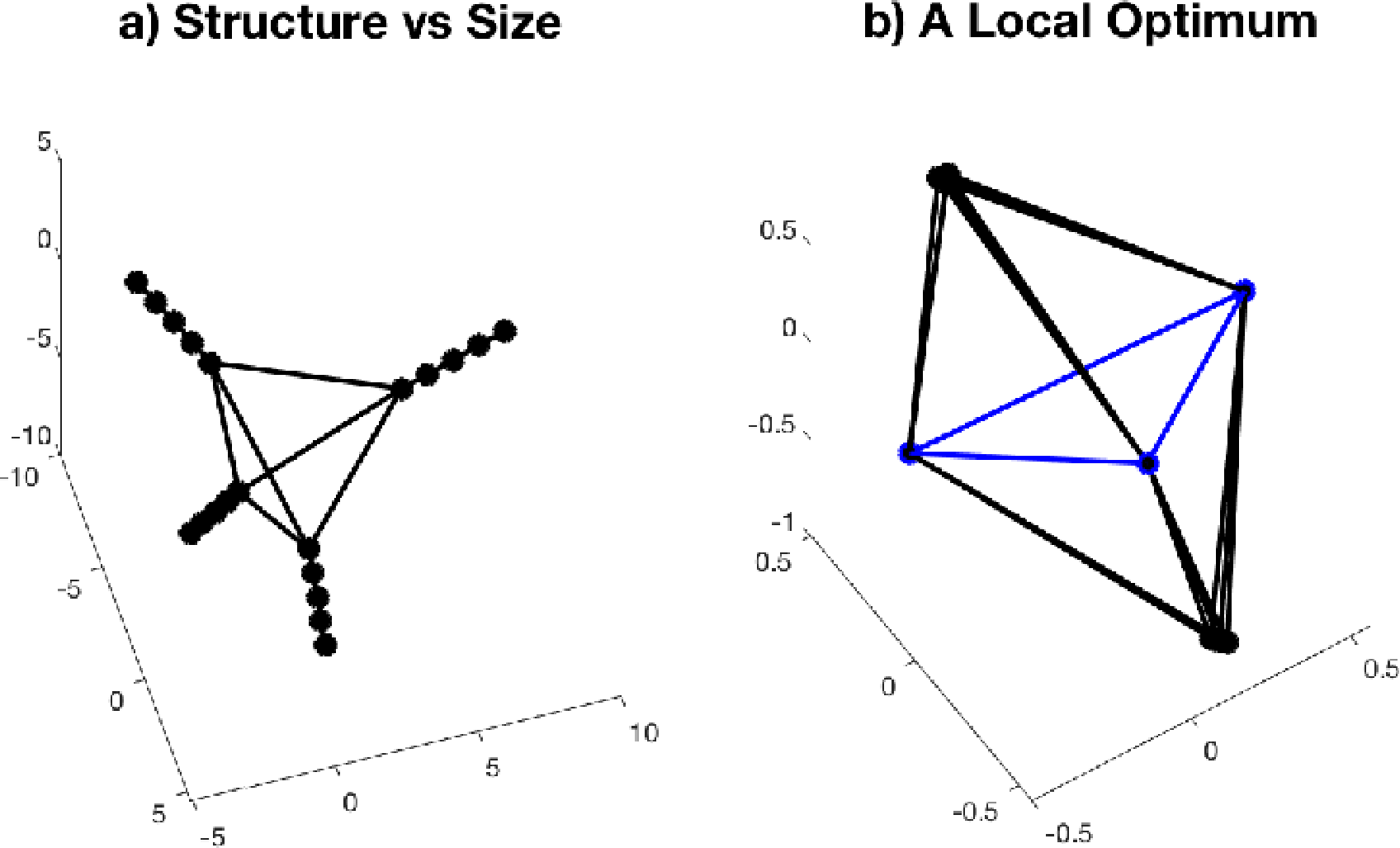}
    \caption{\textbf{a.} A large graph with a benign MVE optimization landscape in $\mathbb{R}^3$. \textbf{b.}
    In the neighborhood of a spurious local optima of maximum variance embedding of a
    a $3$-star in $\mathbb{R}^3$, which is also NP-hard in general.}
    \label{fig:pyramids}
\end{figure}

We engineer the $k$-star, such that any (locally) optimal embedding would put the central $k$ vertices away from each other, into the corners of a simplex (scaled $\Delta^{k-1}$) and pushes the leaves away from the simplex, along the remaining orthogonal dimension, in positive or negative direction.

Consequently, an equal partition of leaf weights, as in Figure~\ref{fig:pyramids}-b, will lead to a global upper bound of the maximum variance embedding problem for the $k$-star, that enables us to decide {\sc Partition} using a polynomially deep bit of the maximum variance embedding objective. Here is the proof.

\begin{proof}[Proof of Theorem~\ref{thm:hard}]

We prove the hardness statement for the special case when the edges have only upper bound constraints. The feasible region for this optimization problem has genus $0$, for which the gadget presents exponentially spurious local optima.
Same hardness result applies to equality constrained problem, and is easier to prove. Here, NP-hardness is shown with respect to vertex weights, $\pi$, assuming unit distance of all edges, while the same technique allows one to show similar NP-hardness with respect to $d$, with uniform $\pi$.

\textbf{The gadget.}
As promised we have a $k$-star. Let $u_1, \cdots, u_k$ be the initial $k$-clique, with uniform vertex weights $\pi(u_j) = {(1-\alpha)}/{k}$, where $\alpha$ will be set later.
Given an instance of the integer {\sc Partition} problem, $\{P_1, \cdots, P_N\}$, add a vertex $v_i$ corresponding to each $P_i$ and connect it to all $u_i$'s. Let $\pi(v_i) = \alpha \cdot \frac{P_i}{\sum_i P_i}\,$ where $\alpha \in (0,1)$. The gadget can be described as
$$
V = \big\{v_i : i \in [N]\big\} \cup \big\{u_j : j \in [k]\big\}
$$
and
$$
E = \bigg\{(v_i,u_j): \forall (i,j) \in [N] \times [d] \bigg\} \cup \bigg\{(u_i,u_j): \forall (i,j) \in {[d] \choose 2} \bigg\}\,.
$$

A key step is to show the optimal maximum variance embedding of the $k$-star into $\mathbb{R}^k$ fully stretches all of the edges. Note that this is not even feasible in general, e.g., embedding of a $3$-cycle in $\mathbb{R}^1$. Even if feasible, in general this is not necessarily optimal, e.g., add a chord to a $4$-cycle, put all the mass into the non-adjacent pair, and embed the graph in tree-width dimensions, i.e. $\mathbb{R}^2$. Let us characterize the optima.

\begin{lemma}
\label{lem:stretch-NPH}
Any optimal $\yy$ for maximum variance embedding of our $k$-star gadget into $\mathbb{R}^k$ would fully stretch all of the edges, i.e.,
$$
\|\yy_{v_i} - \yy_{u_j}\|_2 = 1 \quad \forall (i,j) \in [N] \times [d]
$$
and
$$
\|\yy_{u_i} - \yy_{u_j}\|_2 = 1 \quad  \forall (i,j) \in {[d] \choose 2}\,.
$$
\end{lemma}

\begin{proof}

Proving by contradiction, we have two cases. Let the smallest edge of the simplex be $\|\yy_{u_k} - \yy_{u_t}\| < 1$ for some $k,t \in [k]$.
First, we show $\|\yy_{u_k} - \yy_{u_t}\| > 1-\varepsilon_k$. If not, the variance for a new embedding $\tilde{\yy}$ that assigns $\tilde{\yy}_{u_i} = \frac{1}{\sqrt{2}} e_i \quad \forall i \in [k]$ (and maps $v_j$'s to any point inside the convex hull of the simplex) will be strictly larger.
\begin{align}
\var[]{\tilde{\yy}} - \var{\yy}
&\ge (\frac{1-\alpha}{k})(\frac{1-\alpha}{k})(1^2-(1-\varepsilon_k)^2) + \alpha (1-\alpha) (0^2-1^2) + \alpha (0^2 - 2^2)\\
&> \varepsilon_k \cdot (1-\alpha)^2/k^2 - 5 \alpha\\
&> 0 \qquad \text{holds for $\alpha / (1-\alpha)^2 < \varepsilon_k / 5k^2$, so  sufficient if $\alpha < \varepsilon_k/10k^2$}\,.
\end{align}
Here we underestimated the increase in the variance due to increasing the distance between other $u_i$ and $u_j$ and neglected any gains from covariance between $u$'s and $v$'s and the variance of $v$'s.

Remains the case where $\|\yy_{u_k}-\yy_{u_t}\| \in (1-\varepsilon_k, 1)$. We show $\yy$ is not even locally optimal. For infinitesimal $\varepsilon>0$, there exists an embedding $\tilde{\yy}:V \rightarrow \mathbb{R}^k$ where
$$
\|\tilde{\yy}_{u_k}-\tilde{\yy}_{u_t}\| = \|\yy_{u_k}-\yy_{u_t}\| + \varepsilon\,
$$
while other edges of the simplex preserve length and
$$
\|\tilde{\yy}_{u_i} - \yy_{u_i}\| \leq \varepsilon \quad \forall i \in [k]\,.
$$
Moreover, $\tilde{\yy}_{v_j}$ can be chosen no farther than $k \varepsilon$ from $\yy_{v_j}$ to preserve Lipschitzness for $v_j$'s. The differential change on variance will be at least
\begin{align*}
\lim_{\varepsilon \rightarrow 0} \frac{\var[]{\tilde{\yy}} - \var[]{\yy}}{\varepsilon} &=  (\frac{1-\alpha}{k})^2 \cdot 2(1-\varepsilon_k) - \alpha (1-\alpha) \cdot 2 (k+1) - \alpha 2 (2) 2k\\
&> (1-\alpha)^2/k^2 - 12 \alpha k && \text{$\varepsilon_k < 1/2$}\,.
\end{align*}
The above is positive for $\alpha < (1-\alpha)^2/12k$ for which $\alpha < 1/20k$ is sufficient. In this case a small increase in $\|\yy_{u_t} - \yy_{u_k}\|$ leads to an increase in variance hence we conclude $$\|\yy_{u_i} - \yy_{u_j}\| = 1 \qquad \forall i,j \in [k]\,.$$

Now we show $\|\yy_{v_i} - \yy_{u_j}\|_2 = 1$ for all $i \in [N], j \in [k]$. Proving by contradiction, if less than $k$ neighbors of a $\yy_{v_i}$ are fully stretched, intersection of tangent hyperplanes to hyperspheres corresponding to stretched neighbors will have dimension at least one, so there will be a line (first order approximate of a curve) along which we can move $\yy_{v_i}$ in both directions. Moving in at least one direction will have positive inner product with $\yy_{v_i} - \expec{}{\yy}$ so will increase the variance.
\end{proof}

Applying the above lemma, we know
$\yy_{u_1}, \cdots, \yy_{u_k}$ are $k$ equidistant points in $\mathbb{R}^k$ with respect to $\ell_2$ norm, so they are vertices of a regular simplex $\frac{1}{\sqrt{2}} \cdot \Delta^{k-1}$. Without loss of generality we can assume $\yy_{u_i} = \frac{1}{\sqrt{2}} e_i$ where $\{e_1, \cdots, e_k\}$ is an orthonormal basis for $\mathbb{R}^k$.

Forcing $\yy_{v_i}$ to be at equal distances from all vertices of $\frac{1}{\sqrt{2}} \cdot \Delta^{k-1}$ is equivalent to $\yy_{v_i} \in \Span{e}\,$ i.e., $\yy_{v_i} = \beta_i \sum_{j \in [k]} e_j$. Enforcing $\|\yy_{v_i} - \yy_{u_1}\| = 1$ results
\begin{align}
    1 &= \|\yy_{v_i} - \yy_{u_1}\|_2^2\\
    &= (\beta_i-\frac{1}{\sqrt{2}})^2 + (k-1) \beta_i^2 \\
    &= d \beta_i^2 - \sqrt{2} \beta_i + \frac{1}{2} \,.
\end{align}

Solving the quadratic eqality $k \beta_i^2 - \sqrt{2} \beta_i - \frac{1}{2} = 0$ for $k \ne 0$ gives
$$
\beta_i \in \setof{\frac{1 \pm \sqrt{k+1}}{\sqrt{2} k}}\,.
$$

Denoting $\pi_- = \sum_{i \in [N]} \pi(v_i) \cdot \mathds{1}[\beta_i = \frac{1 - \sqrt{k+1}}{\sqrt{2} k}]$ and $\pi_+ = \sum_{i \in [N]} \pi(v_i) \cdot \mathds{1}[\beta_i = \frac{1 + \sqrt{k+1}}{\sqrt{2} k}]$
 we can write the variance of the embedding as
\begin{align}
\var[\pi]{\yy} &= \frac{k \cdot (k-1)}{2} (\frac{1-\alpha}{k})^2 + \pi_- \pi_+ (\frac{2\sqrt{k+1}}{\sqrt{2}k})^2 + \alpha(1-\alpha) (1)^2 \\
&= \frac{(k-1)(1-\alpha)^2}{2k} + 2\pi_- \pi_+ \frac{k+1}{k} + \alpha(1-\alpha) \\
&= \frac{(k-1)(1-\alpha)^2 + 2k \alpha (1-\alpha)}{2k} + 2\pi_- \pi_+ \frac{k+1}{k}\,.
\end{align}

Considering $\pi_- + \pi_+ = \alpha$ we can see the maximum of $\var[\pi]{\yy}$ is achieved for (and only for) $\pi_- = \pi_+ = \alpha/2$ as
\begin{align}
2\pi_- \pi_+ \frac{k+1}{k} & \leq  \frac{(\pi_- + \pi_+)^2}{2} \cdot \frac{k+1}{k} && \text{AM-GM inequality} \\
&= \alpha^2 \frac{k+1}{2k}\,.
\end{align}

The maximum variance of $\frac{(k-1)(1-\alpha)^2 + 2k \alpha (1-\alpha)}{2k} + \alpha^2 \frac{k+1}{2k}$ is the answer if and only if the Partition is feasible. Otherwise $$|\pi_- - \pi_+| \ge \alpha \cdot \frac{1}{\sum_i{P_i}}\,.$$
In this case we have a decrease in variance, compared to the aforementioned maximum, by at least
\begin{align}
\frac{k+1}{2k} \cdot ( (\pi_- + \pi_+)^2 - 4\pi_- \pi_+ ) &\ge
\frac{k+1}{2k} \cdot (\pi_- - \pi_+)^2  \\
&\ge \frac{k+1}{2k} \cdot (\frac{\alpha}{\sum_i P_i})^2\,.
\end{align}

This will affect a polynomially deep bit of the answer and enables a reduction from the partition problem to decision version of maximum variance embedding as desired. 
\end{proof}

\subsection{MVE in Tree-Width$+2$ Dimensions}\label{ssec:MVE-TW+2}

We end this Section we present a dimension reduction algorithm for MVE with equality and/or inequality constraints.
The input is assumed to be any feasible solution $\yy: V \rightarrow \mathbb{R}^{\ell}$ to MVE (potentially the optimum in $\mathbb{R}^n$ found by solving the SDP) along with a tree decomposition $\mathcal{T}$ of width $k$ for the graph. 
The output will be a feasible solution $\zz: V \rightarrow \mathbb{R}^{k+2}$ with no less variance. For feasibility, we simply preserve distances between any pair of vertices that share a node in $\mathcal{T}$. 

Adding more to the notation, for a set of vectors $\xx$, $\dim{\xx}$ denotes the dimension of their affine hull, $\Aff{\xx}$. Our recursive algorithm is based on the following lemma.

\begin{lemma}\label{lem:base1}
Consider a subset of vertices $S \subseteq V$ of size $|S| \leq k$. If for every component $A \subseteq  V \setminus S$ of $G \setminus S$, $\dim{\yy_{A \cup S}} \leq k+1$, $\zz$ can be found as desired.
\end{lemma}

\begin{proof}
Without loss of generality we can assume $\zero \in \Aff{\yy_S} = \Span{\yy_S}$, due to shift invariance of the objective function (and constraints).
Let $L_S = \Span{\yy_S}$ be a linear subspace of dimension $r \leq |S|-1 \leq k-1$, with an orthonormal basis $\uu_1, \cdots, \uu_r$.

\begin{lemma}\label{lem:transform}
Given linear subspaces $L$ and $L'$ in the Euclidean space where $\text{rank}(L) \leq \text{rank}(L')$, there exist isometric transformation $U^L_{L'}$ that injects $L$ into $L'$ and is identity over $L \cap L'$.
\end{lemma}

\begin{proof}
Let $r = \rank(L), r' = \rank(L')$, and $r'' = \rank(L \cap L')$, and consider an orthonormal basis $\{\uu_1, \cdots, \uu_{r''}\}$ for $L \cap L'$. Expand this into an orthonormal basis $\{\uu_1, \cdots, \uu_{r''}, \uu_{r''+1}, \cdots, \uu_{r}\}$ for $L$, and similarly $\{\uu_1, \cdots, \uu_{r''}, \uu_{r''+1}, \cdots, \uu_{r'}\}$ for $L'$. Let $U_{L'}^{L}$ be a unitary transformation (i.e.,\ a change of basis) in the input space that maps $\uu_i \mapsto \uu_i ~ \forall i \in \{1,\cdots,r''\}$, $\uu_i \mapsto \uu_i' ~ \forall i \in \{r+1, \cdots, r\}$; which is indeed feasible because $r \leq r'$.
\end{proof}

Using Lemma~\ref{lem:transform} one may observe that every $\yy_{A \cup S}$ can be injected into a single linear superspace of $L_S$ of dimension $k+1$, preserving distances within each $\yy_{A \cup S}$.
Let us show this in detail.
For $v \in S$, let $\zz_v = (\yy_v \cdot \uu_1, \cdots, \yy_v \cdot \yy_v \cdot \uu_r, 0, 0, \cdots)$.
For each component $A \subseteq V \setminus S$, expand the basis to an orthonormal basis for $\Span{\yy_{S \cup A}}$, say $\uu_1^A = \uu_1, \cdots, \uu_r^A = \uu_r, \uu^A_{r+1}, \cdots, \uu^A_{k+1}$. For $v \in A \cup S$ let $\zz_v = (\yy_v \cdot \uu_1^A, \cdots \yy_v \cdot \uu_{k+1}^A)$.
Note that this is consistent with what we have for $\zz_S$.

As $\zz_{S \cup A}$ is a change of basis for $\yy_{S \cup A}$, pairwise distances within every $\yy_{S \cup A}$ are preserved. 
Considering Lemma~\ref{var-dif}, we only need to keep track of the variance due to pairwise distances between the new means, i.e., replace every $\yy_A$ with $\mu_{\yy}(A)$, that will lie on the same $\Span{\yy_A}$ and look at the objective for the new instance. We need to ensure the variance due to the transformed $\mu_{\zz}(A)$'s is no less.

Denote the projection of vectors in $\xx$ into linear a space $L$ (or its orthogonal complement) by $p_L(\xx)$ (or $p_L^{\perp}(\xx)$).
$p_{L_S}(\mu_{\yy}(A))$ is transformed into $p_{\mathbb{R}^k}(\mu_{\zz}(A))$, and $p_{L_S}^{\perp}(\mu_{\yy}(A))$ is transformed into $p_{\mathbb{R}^k}^{\perp}(\mu_{\zz}(A))$. Tensorizing the variances, we have $\var[]{\yy} = \var[]{p_{L_S}(\yy)} + \var[]{p^{\perp}_{L_S}(\yy)} $, and $\var[]{\zz} = \var[]{p_{\mathbb{R}^r}(\zz)} + \var[]{p_{\mathbb{R}^r}^{\perp}(\zz)}$.
Considering that the same unitary transformation is applied to $L_S$, the first term in variances is equal, i.e., $\var[]{p_{L_S}(\mu_{\yy}(A))} = \var[]{p_{\mathbb{R}^r}(\mu_{\zz}(A))}$.
All left is to ensure 
$$\var[]{p_{L_S}^{\perp}(\mu_{\yy}(A))} \leq \var[]{p_{\mathbb{R}^r}^{\perp}(\mu_{\zz}(A))}\,.$$

Left hand side of the above inequality is the variance due to having masses $\mu_A$ at 
$p_{L_S}^{\perp}(\mu_{\yy}(A))$. Indeed this variance is upper bounded by the maximum variance that one can achieve by embedding these probability masses $\mu_A$ in an arbitrary dimensional space, constrained to distances $\|p_{L_S}^{\perp}(\mu_{\yy}(A))\|$ from the origin. This is a star problem! It is easy to show that embedding all leaves of a star into at most $3$ directions can maximize the variance, i.e., a global optima spanning at most $2$ dimensions. Consider these $3$ directions in the $r+1$'th and $r+2$'th dimension of the $\zz$ space. Indeed they exist as $r+2 \leq k+1$. If every $\|p_{L_S}^{\perp}(\mu_{\yy}(A))\|$ was transformed into this target direction $\var[]{p_{\mathbb{R}^r}^{\perp}(\mu_{\zz}(A))}$ would have achieved the upper bound of what $\var[]{p_{L_S}^{\perp}(\mu_{\yy}(A))}$ could be. We can make this happen by applying another unitary transformation for every $A$, that is identity over over the first $r$ coordinates and maps existing $p_{L_S}^{\perp}(\mu_{\zz}(A)) \in (\mathbb{R}^k)^{\perp}$ into the desired direction in $(\mathbb{R}^r)^{\perp}$.
\end{proof}

\begin{example}
Let $S$ be the $k$ central vertices of the $k$ star, to which other vertices are connected to. Applying Lemma~\ref{lem:base1} shows that $k$-star has an optimal solution of dimension $\leq k+1$.
\end{example}

\paragraph{The recursion.}
The algorithm reduces the problem into smaller instances (of number of graph vertices or tree-decomposition nodes) and uses Lemma~\ref{lem:base1} to merge the solutions.

Without loss of generality $\mathcal{T}$ is minimal, i.e., no node is the subset of a neighbor (otherwise it can be removed).

If the diameter of $\mathcal{T}$ is $0$, i.e., having a single node, the number of vertices is at most $k+1$, so is $\dim{\yy} \leq k$, and we are done. (Let $\zz$ be representation of $\yy$ in an orthonormal basis of $\Aff{\yy}$.)

If the diameter is $1$, we have two nodes, say $A$ and $B$. Consider $S = A \cap B$ that is a separator of size $\leq k$. Each component $C$, due to removal of $S$ is a subset of either $A$ or $B$, hence $\dim{\yy_{S \cup C}} \leq |S \cup C|-1 \leq k$. Lemma~\ref{lem:base1} resolves this case for $S = A \cap B$.

If the diameter is $3$, consider a non-leaf edge connecting nodes $X$ and $Y$. $S = X \cap Y$ is separating vertices appearing to the $X$ side of the tree-decomposition, say $A \subseteq V \setminus S$, from vertices appearing to the $Y$ side, say $B \subseteq V \setminus S$. Minimality of $\mathcal{T}$ along with the fact that $X \cap Y$ is a non-leaf edge of $\mathcal{T}$ ensure $|A|, |B| \ge 2$.
Replace $A$ with a single new vertex $a$ at $\yy_a = \mu_{\yy}(A)$ and let $\pi_a = \pi_A$. Connect $a$ to all vertices $X \cap Y$ with equality constraints as current distances $\yy_a-\yy_v, \forall v \in X \cap Y$. The dimension of the new instance can be recursively reduced to $k+1$, with no loss to the variance (that is shifted by a constant if we bring back $A$, considering Lemma~\ref{var-dif}). After the dimension reduction $\yy_{\{a\} \cup (X \cap Y)}$ can be brought back to the original locations (using a unitary transformation, as pairwise distances in between are preserved).
Replace back $A$ and $\yy_a$ in place of $a$ and $\yy_A$. This time replace $B$ with a vertex at the new $\mu_{\yy}(B)$. Reduce the dimension of the embedding to $k+1$ and similarly put back $B$ and $\yy_B$ in place of $b$ and $\mu_{\yy}(B)$. Now we have a separator $S$ of size $\leq k$ with $\dim{\yy_{S \cup A}}, \dim{\yy_{S \cup B}} \leq k+1$. Applying Lemma~\ref{lem:base1} gives the desired.

Finally, consider the case when the diameter is $2$.
Let $X$ be the center of this star $\mathcal{T}$.
If for a leaf $Y$, $|Y \setminus X| \ge 2$ we can use $S = X \cap Y$ to and solve it recursively as in the previous case.
So every leaf $Y$ has a single vertex that is not in $X$. If $|X| \leq k$ the case can be closed by applying Lemma~\ref{lem:base1} for $S = X$.
So assume $|X| = k+1$. Every single vertex from the leaves can be neighbor to at most $k$ vertices in $X$, i.e., is not neighbor to at least one vertex in $X$.

If two such single vertices $v,w$ share a single non-neighbor $u \in X$, let $S = X - u$, and $A = \{v,w\}$. $S$ is separating $A$ from the $B = V \setminus X \setminus A$. If $B$ is singletone, we can apply Lemma~\ref{lem:base1} for $S$ and the three remaining vertices with no edges in between.
Otherwise, $|B| > 1$ and we can perform similar recursion as did for the case of diameter $\ge 3$.

All remains is the case where we have a $k+1$ clique at the center, and (at most) $k+1$ other vertices, each connected to (at most) all but one of the $k+1$ vertices at the center. Let $S$ be the vertices at the center. Applying Lemma~\ref{lem:base1} ensures a solution in $k+2$ dimensions.
A corollary of our (polynomial time) algorithm is the following.

\begin{corollary}
Maximum Variance Embedding, with equality and/or inequality constraints, of an $n$-vertex graph of tree-width $k$ in $\mathbb{R}^n$ has an optimal solution of rank $k+2$.
\end{corollary}

\section{Approximate MVE for General Graphs}
\label{sec:MVE-general}
We showed tree-width$+1$ dimensions is a lower bound for computability of maximum variance embedding. However, this structural bound can be polynomially large, even for sparse graphs such as grids, and as large as  the number of vertices for dense graphs.
As we showed to be the case for trees, where computability is possible in $\mathbb{R}^2$, arbitrary precise approximate solutions in $\mathbb{R}^1$ can be found. For general graphs, the gap is even larger.
In the rest of this Section, we prove the following.

\begin{theorem}\label{thm:JL}
There is a polytime Monte Carlo algorithm that for arbitrary weighted graphs, approximates upper bound constrained maximum variance embedding into $\mathbb{R}^d$  within factor $O((\log n)/d)$ when $d \leq \log n\,,$ and within factor $1 + O\left(\sqrt{(\log n)/d}\right)$ when $d > \log n$. In particular, for $d = \Omega\left((\log n)/\varepsilon^2\right)$, the approximation factor is $1 + \varepsilon$.
\end{theorem}

\begin{proof}
Let $\xx: V \rightarrow \mathbb{R}^n$ be the optimal solution, that can be solved using the full rank SDP. Algorithm~\ref{alg:jl} provides $\yy: V \rightarrow \mathbb{R}^d$, with performance guarantees promised by Theorem~\ref{thm:JL}.

\begin{algorithm} \label{alg:jl}
\caption{Gaussian Rounding}
\begin{algorithmic}
\REQUIRE $\xx: V \rightarrow \mathbb{R}^n$, optimal solution to $\MVED{n}$
\ENSURE $\yy: V \rightarrow \mathbb{R}^d$
\STATE Sample $d$ independent Gaussian vectors $g_i \sim N(0,1)^n$\,, for $i \in [d]$. Let $\GG$ denote $[g_1 g_2 \ldots g_d]^T$ and let $\tau_d = d + 2\sqrt{3 d \log n} + 6 \log n$.
\RETURN $\set{\GG x_u/\sqrt{\tau_d}: u \in V}$.
\end{algorithmic}
\end{algorithm}

For each $u \in V$, let $y_u \defeq \GG x_u$.
Then, for any $u,v \in V$,
\[ \expec{\GG}{\norm{y_u - y_v}^2} = \sum_{i \in [d]} \expec{\GG}{\inprod{g_i, x_u - x_v}^2} = d \norm{x_u - x_v}^2 .\]
Therefore, we get that $\expec{\GG}{\expec{u, v \sim V}{\norm{y_u - y_v}^2}} = d \expec{u, v \sim V}{\norm{x_u - x_v}^2}$.
Using \Cref{fact:gaussian} below, we get that
\begin{equation}
\label{eq:yu-yv}
\Pr\left[ \expec{u,v \sim V}{\norm{y_u - y_v}^2} \geq \frac{d}{2} \expec{u,v \sim V}{\norm{x_u - x_v}^2} \right] \geq \frac{1}{12} .
\end{equation}

\begin{fact}[\cite{LM00}, Lemma 1]
\label{fact:chisquare}
Let $U$ be a $\chi^2$ random variable  with $D$ degrees of freedom. For any positive $t$,
\[ \Pr[U - D \geq 2 \sqrt{D t} + 2 t] \leq e^{-t}. \]
\end{fact}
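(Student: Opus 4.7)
The plan is to apply the Chernoff (Laplace transform) method. Write $U = \sum_{i=1}^{D} X_i^2$ with $X_i \sim N(0,1)$ i.i.d., and compute the moment generating function of a single squared standard Gaussian by direct integration to obtain $\mathbb{E}[e^{\lambda X_i^2}] = (1-2\lambda)^{-1/2}$ for $\lambda \in (0, 1/2)$. By independence,
$$
\mathbb{E}\!\left[e^{\lambda(U-D)}\right] \;=\; e^{-\lambda D}\,(1-2\lambda)^{-D/2},
$$
so Markov's inequality yields
$$
\Pr[\,U - D \geq \delta\,] \;\leq\; \exp\!\left(-\lambda\delta \;-\; \lambda D \;-\; \tfrac{D}{2}\log(1-2\lambda)\right)
$$
for every $\lambda \in (0, 1/2)$.

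The main analytic step is to bound the cumulant $-\lambda D - (D/2)\log(1-2\lambda)$. I would use the elementary inequality
$$
-\log(1-y) \,-\, y \;\leq\; \frac{y^2}{2(1-y)}, \qquad y \in [0,1),
$$
which follows by comparing the power series $-\log(1-y)=\sum_{k\geq 1} y^k/k$ against $\frac{y^2}{2(1-y)}=\sum_{k\geq 2} y^k/2$ term-by-term (since $1/k \leq 1/2$ for all $k \geq 2$). Setting $y = 2\lambda$ and multiplying by $D/2$ gives
$$
\mathbb{E}\!\left[e^{\lambda(U-D)}\right] \;\leq\; \exp\!\left(\frac{D\lambda^2}{1-2\lambda}\right),
$$
which exhibits $U - D$ as a sub-gamma variable with variance factor $v = 2D$ and scale $c = 2$.

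The final step is to optimize the Chernoff bound over $\lambda$. This is most cleanly handled through the standard sub-gamma principle (see, e.g., Boucheron--Lugosi--Massart): whenever $\mathbb{E}[e^{\lambda Z}] \leq \exp\!\left(v\lambda^2 / (2(1-c\lambda))\right)$ for all $\lambda \in (0, 1/c)$, one has $\Pr[Z \geq \sqrt{2vt} + ct] \leq e^{-t}$; this is an exercise in computing the Legendre transform of $\lambda \mapsto v\lambda^2/(2(1-c\lambda))$ at a convenient point. Substituting $Z = U - D$, $v = 2D$, $c = 2$ gives $\sqrt{2vt} + ct = 2\sqrt{Dt} + 2t$, exactly matching the claimed deviation, and completes the proof. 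The main obstacle is really the power-series inequality above; once that is in hand, the rest is routine Chernoff bookkeeping together with a standard calculus-level optimization in $\lambda$.
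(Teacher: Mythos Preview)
Your argument is correct: the moment generating function computation, the power-series inequality $-\log(1-y)-y\leq y^2/(2(1-y))$, and the sub-gamma tail bound with $v=2D$, $c=2$ all check out and yield exactly $2\sqrt{Dt}+2t$. Note that the paper does not actually prove this fact; it simply cites it as Lemma~1 of Laurent--Massart, and the proof you have written is essentially the one given there.
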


Since $\frac{\norm{y_u - y_v}^2}{ \norm{x_u - x_v}^2}$ is a $\chi^2$-random variable with $d$ degrees of freedom, plugging in $t = 3 \log n$ in \Cref{fact:chisquare}, we get
\[ \Pr \left[\frac{\norm{y_u - y_v}^2}{\norm{x_u - x_v}^2} - d \geq 2 \sqrt{3 d \log n }+ 6 \log n \right] \leq e^{-3 \log n } = \frac{1}{n^3}. \]

Recall that $\tau_d = d + 2\sqrt{3 d \log n} + 6 \log n$.
Using the union bound over all pairs of vertices in $V$ we get
\[ \Pr \left[\frac{\norm{y_u - y_v}^2}{\norm{x_u - x_v}^2}  \leq \tau_d \ \forall u,v \in V \right] \geq 1 - \frac{1}{n}. \]
Hence, w.h.p., for $\set{u,v} \in E$,
$ {\norm{y_u - y_v}^2}/{\tau_d} \leq \norm{x_u - x_v}^2 \leq 1\,.$
Therefore, w.h.p., $\set{y_u/\sqrt{\tau_d}: u \in V}$ is a $d$-dimensional Lipschitz embedding of $G$.
From \Cref{eq:yu-yv}, we get that
\[  \Pr\left[ \expec{u,v \sim V}{\frac{\norm{y_u - y_v}^2}{\tau_d}} \geq \frac{d}{2\tau_d} \expec{u,v \sim V}{\norm{x_u - x_v}^2} \right] \geq \frac{1}{12} . \]

Using the union bound over these two events, we get a $O(\tau_d/d)$ approximation to MVE in $\mathbb{R}^d$, with constant probability. Note that
\[ \tau_d/d = \begin{cases} O((\log n)/d) & d \leq \log n \\
    1 + O\left(\sqrt{(\log n)/d} \right) & d > \log n \end{cases} . \]
In particular, when $d = \Omega((\log n)/\varepsilon^2)$, $\tau_d/d = 1 + \varepsilon$.

\begin{fact}[Folklore]
\label{fact:gaussian}
Let $g_1, \ldots, g_l$ be (not necessarily independent) Gaussian random variables each having mean $0$, and $\expec{}{g_i^2} = \sigma_i^2$.
Then,
\[ \Pr \left[ \sum_{i \in [l]} g_i^2 \geq \frac{1}{2} \sum_{i \in [l]} \sigma_i^2  \right] \geq \frac{1}{12}.\]
\end{fact}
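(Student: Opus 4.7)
\medskip

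The plan is to apply the Paley--Zygmund (second-moment) inequality to the nonnegative random variable $X \defeq \sum_{i \in [l]} g_i^2$. Recall that for any nonnegative $X$ with finite second moment and any $\theta \in [0,1]$, one has
\[
\Pr\left[ X \ge \theta\, \E[X] \right] \ge (1-\theta)^2 \frac{\E[X]^2}{\E[X^2]}.
\]
Plugging in $\theta = 1/2$, it will suffice to show that $\E[X^2] \le 3\, \E[X]^2$, since then the right-hand side is at least $\tfrac{1}{4} \cdot \tfrac{1}{3} = \tfrac{1}{12}$.

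The first moment is immediate: $\E[X] = \sum_i \sigma_i^2$. For the second moment, I would expand
\[
\E[X^2] = \sum_{i,j \in [l]} \E[g_i^2 g_j^2].
\]
Since each pair $(g_i, g_j)$ is jointly Gaussian (any linear combination of Gaussians is Gaussian, and the pair has a well-defined covariance $c_{ij} \defeq \E[g_i g_j]$), Isserlis' theorem (a.k.a.\ Wick's formula for fourth moments) gives
\[
\E[g_i^2 g_j^2] = \E[g_i^2]\,\E[g_j^2] + 2\,\E[g_i g_j]^2 = \sigma_i^2 \sigma_j^2 + 2 c_{ij}^2.
\]
Summing over $i,j$ yields
\[
\E[X^2] = \left(\sum_i \sigma_i^2\right)^2 + 2 \sum_{i,j} c_{ij}^2.
\]

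The last step is to bound the covariance sum by the mean squared. By the Cauchy--Schwarz inequality, $c_{ij}^2 \le \sigma_i^2 \sigma_j^2$, so
\[
\sum_{i,j} c_{ij}^2 \le \sum_{i,j} \sigma_i^2 \sigma_j^2 = \left(\sum_i \sigma_i^2\right)^2 = \E[X]^2.
\]
Combining, $\E[X^2] \le 3\, \E[X]^2$, and Paley--Zygmund delivers the claim. There is no real obstacle here; the only subtlety is to recognize that the absence of independence is harmless because the pairwise fourth-moment identity from Isserlis together with Cauchy--Schwarz on covariances already gives the tight constant $3$, and from there the hypercontractive-style second-moment method closes the bound.
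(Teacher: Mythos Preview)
Your overall plan---Paley--Zygmund with $\theta = 1/2$, reducing the task to $\E[X^2] \le 3\,\E[X]^2$---is exactly the paper's approach. The gap is in how you bound $\E[g_i^2 g_j^2]$. You invoke Isserlis' theorem, justifying it by ``any linear combination of Gaussians is Gaussian.'' That assertion is false: marginal Gaussianity of each $g_i$ does \emph{not} imply joint Gaussianity (take $g_2 = \epsilon g_1$ with $\epsilon$ an independent Rademacher sign; both margins are $N(0,1)$ but $g_1+g_2$ is not Gaussian). The statement assumes only that each $g_i$ is Gaussian, so Isserlis is unavailable and your identity $\E[g_i^2 g_j^2] = \sigma_i^2\sigma_j^2 + 2c_{ij}^2$ need not hold.

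The paper sidesteps this entirely: it applies Cauchy--Schwarz directly to the mixed moment,
\[
\E[g_i^2 g_j^2] \le \sqrt{\E[g_i^4]\,\E[g_j^4]} = \sqrt{3\sigma_i^4 \cdot 3\sigma_j^4} = 3\,\sigma_i^2\sigma_j^2,
\]
using only the marginal Gaussian fourth moment $\E[g_i^4]=3\sigma_i^4$. Summing over all pairs gives $\E[X^2] \le 3\bigl(\sum_i \sigma_i^2\bigr)^2$ with no joint-distribution hypothesis. Your argument is easily repaired by this substitution; since you already invoke Cauchy--Schwarz on $c_{ij}$, the cleaner route is to apply it one level up and drop Isserlis altogether. (If joint Gaussianity \emph{were} assumed---as indeed holds in the paper's actual application of this fact---your computation would be valid and arrive at the same bound.)
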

\end{proof}

\begin{proof}
\begin{align*}
\expec{}{\left(\sum_{i \in [l]} g_i^2 \right)^2} & = \sum_{i \in [l]} \expec{}{g_i^4} + 2 \sum_{\substack{i,j \in [l]\\ i \neq j}} \expec{}{g_i^2 g_j^2} \\
 & \leq 3 \sum_{i \in [l]}\sigma_i^4 + 2 \sum_{\substack{i,j \in [l]\\ i \neq j}} \sqrt{\expec{}{g_i^4}} \sqrt{\expec{}{g_j^4}} & \textrm{(Cauchy-Schwarz inequality)} \\
 & = 3 \left( \sum_{i \in [l]}\sigma_i^4 + 2 \sum_{\substack{i,j \in [l]\\ i \neq j}} \sigma_i^2 \sigma_j^2 \right)
 = 3 \left( \sum_{i \in [l]}\sigma_i^2\right)^2.
\end{align*}
Using the Paley-Zygmund inequality,
\[ \Pr \left[ \sum_{i \in [l]} g_i^2 \geq \frac{1}{2} \sum_{i \in [l]} \sigma^2  \right] \geq \left(1 - \frac{1}{2} \right)^2 \cdot \frac{\left( \sum_{i \in [l]}\sigma_i^2\right)^2}{3 \left( \sum_{i \in [l]}\sigma_i^2\right)^2} = \frac{1}{12} . \]
\end{proof}

\section{Conclusion}

In Sections~\ref{sec:lambda-infinity} and \ref{sec:MVE-tree} we proved that computing $\lambda_\infty$ and spread constant is NP-hard even for simplest graphs and proposed efficient approximation algorithms in this case.
The gap between best known algorithms, e.g., that of Section~\ref{sec:MVE-general}, and hardness results remain enormous and encourage further study and investigation of these problems, from both sides. Moreover, breaking barriers such as Small Set Expansion hardness by approximating $\lambda_\infty$ better than a factor of $O(\log \Delta)$ \cite{LRV13, ST12, RST12, RS10} further encourages study of such functional graph parameters.

In Section~\ref{sec:MVE-tree-width} we showed that higher dimensional relaxations of spread constant, and more generally Maximum Variance Embedding, remain NP-hard up to tree-width dimensions.
Our combinatorial and non-convex optimization algorithms from Sections~\ref{sec:MVE-tree} and \ref{sec:MVE-tree-width} imply that this bound is (computationally) tight for graphs of constant tree-width. However tree-width itself is NP-hard to compute (for the literature, refer to \cite{korhonen2022single} and references therein) and state-of-the-art algorithms (e.g., \cite{feige2008improved}) allow us to compute MVE in polynomial time, down to $O(w \sqrt{\log w})$ dimensions where $w$ is tree-width of the graph.
Closing the gap between $w$ for NP-hardness and $O(w \sqrt{\log w})$ for polytime algorithms, is yet another interesting open problem.

\newpage

\bibliography{refs}
\bibliographystyle{amsalpha}

\newpage

\appendix

\section{Omitted Proofs}
\label{sec:omittedproofs}

\subsection{$\lambda_\infty$ is an Algebraic Number}

\begin{lemma}[Folklore]
\label{lem:infmin}
There is at least one optimal valuation $f: V \rightarrow \mathbb{R}$, achieving $\lambda_\infty$ in equation  \ref{def:lambdainfty}.
\end{lemma}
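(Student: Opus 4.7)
My plan is to exploit the scale and shift invariance of the Rayleigh-type quotient defining $\lambda_\infty$ to reduce the problem to minimizing a continuous function over a compact set, and then invoke the extreme value theorem.

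First, I would observe that the objective ratio
\[
R(f) \defeq \frac{\expec{v \sim \pi}{ \sup_{u \in N(v)} |f(v)-f(u)|^2}}{\var[\pi]{f}}
\]
is invariant under $f \mapsto cf$ for any $c \neq 0$ and under $f \mapsto f + c\mathbf{1}$ for any $c \in \bR$, since both numerator and denominator are homogeneous of degree $2$ in $f$ and depend only on differences. Consequently, it suffices to take the infimum over the normalized set
\[
\mathcal{F} \defeq \set{ f \in \bR^V \,:\, \expec{v \sim \pi}{f(v)} = 0,\ \var[\pi]{f} = 1 }.
\]
Because $\var[\pi]{f} = \|f\|_\pi^2$ on the hyperplane $\expec{v \sim \pi}{f(v)} = 0$ (where $\|\cdot\|_\pi$ denotes the $\pi$-weighted $\ell_2$ norm), $\mathcal{F}$ is the intersection of the unit sphere of $\|\cdot\|_\pi$ with a linear hyperplane through the origin in $\bR^V$. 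Since $\pi$ is a probability measure with $\pi_v > 0$ for all $v$ (which we may assume; otherwise delete such vertices), the norm $\|\cdot\|_\pi$ is equivalent to the standard Euclidean norm on $\bR^V$, so $\mathcal{F}$ is closed and bounded, hence compact.

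Next, on $\mathcal{F}$ the denominator is identically $1$, so the ratio reduces to the numerator
\[
N(f) = \sum_{v \in V} \pi_v \max_{u \in N(v)} |f(v)-f(u)|^2.
\]
This is a finite sum over $v$ of maxima of finitely many continuous (in fact quadratic) functions of $f$, and the pointwise maximum of finitely many continuous functions is continuous. Therefore $N : \mathcal{F} \to \bR$ is continuous.

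Finally, I would invoke the extreme value theorem: a continuous function on a nonempty compact set attains its infimum. Since the graph is connected and has at least two vertices, $\mathcal{F}$ is nonempty (any non-constant centered function, suitably scaled, belongs to $\mathcal{F}$), so there exists $f^\star \in \mathcal{F}$ with $N(f^\star) = \inf_{f \in \mathcal{F}} N(f) = \lambda_\infty$, completing the proof. I do not anticipate any genuine obstacle here; the one spot requiring a little care is simply the reduction to $\mathcal{F}$, which must use both invariances together with the fact that the denominator is strictly positive precisely on non-constant $f$, so that dividing is legitimate on the set over which we optimize.
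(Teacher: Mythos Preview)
Your argument is correct. Both you and the paper begin identically, using shift and scale invariance to restrict to the set $\mathcal{F}=\{x:\pi^Tx=0,\ x^T\mathrm{diag}(\pi)x=1\}$ and then invoking the extreme value theorem. The difference is in how continuity of the objective is established. You observe directly that a finite maximum of quadratics is continuous, which is all that is needed here. The paper instead partitions $\bR^n$ into finitely many polyhedral cells $S_{(p,q)}$, indexed by orderings of vertex values and of edge-stretch magnitudes, on each of which the $\max$ in the numerator is attained by a fixed neighbor; the objective then becomes an explicit degree-$2$ polynomial, and the domain is the intersection of the ellipsoid surface, a hyperplane, and a polyhedron. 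Your route is shorter and avoids the $n!\times m!$ decomposition entirely. The paper's longer route, however, buys something extra: because on each cell the problem is polynomial optimization over a semialgebraic set with rational data, one immediately reads off the corollary that $\lambda_\infty$ is algebraic for rational $\pi$---a conclusion your direct continuity argument does not give for free. One small remark: your parenthetical ``otherwise delete such vertices'' is not quite right, since a zero-weight vertex can still appear as a neighbor in the numerator; the cleaner fix is simply to assume $\pi_v>0$ for all $v$, which the paper also tacitly does when it calls $\{x^T\mathrm{diag}(\pi)x=1\}$ a compact ellipsoid.
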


\begin{proof}
From the definition we have
\begin{align*}
\lambda_\infty &= \inf_{f: V \rightarrow \mathbb{R}} \frac{\int_V \sup_{y: y \in N(x)} |f(x)-f(y)|^2 \diff \pi(x)}{\var[v \sim \pi]{f(v)} } && \text{Equation \ref{def:lambdainfty}} \\
&= \inf_{x \in \mathbb{R}^ \setminus \{\bm{0}\}} \frac{\expec{v \sim \pi}{\max_{u \in N(v)}  |x_u-x_v|^2} }{\var[v \sim \pi]{x_v} } \\
&= \inf_{x \in \mathbb{R}^ \setminus \{\bm{0}\}} \frac{\expec{v \sim \pi}{\max_{u \in N(v)}  |x_u-x_v|^2} }{\expec{v \sim \pi}{|x_v- \expec{u \sim \pi}{x_u}|^2} } \\
&= \inf_{x \in \mathbb{R}^ \setminus \{\bm{0}\}} \frac{\expec{v \sim \pi}{\max_{u \in N(v)}  |(x_u+\nu)-(x_v+\nu)|^2} }{\expec{v \sim \pi}{|(x_v+\nu)- \expec{u \sim \pi}{(x_u+\nu)}|^2} } &&  \text{shift invariance } \forall \nu \in \mathbb{R} \\
&= \inf_{x \in \mathbb{R}^ \setminus \{\bm{0}\}, \expec{u \sim \pi}{x_u} = 0} \frac{\expec{v \sim \pi}{\max_{u \in N(v)}  |x_u-x_v|^2} }{\expec{v \sim \pi}{x_v^2} } && \text{let $\nu = -\expec{u \sim \pi}{x_u} = 0$} \\
 &= \inf_{x \in \mathbb{R}^ \setminus \{\bm{0}\}, \expec{u \sim \pi}{x_u} = 0} \frac{\expec{v \sim \pi}{\max_{u \in N(v)}  |\rho x_u-\rho x_v|^2} }{\rho^2 \expec{v \sim \pi}{x_v^2} } && \forall \rho > 0  \\
&= \inf_{x \in \mathbb{R}^ \setminus \{\bm{0}\}, \expec{u \sim \pi}{x_u} = 0, \expec{u \sim \pi}{x_u^2} = 1} {\expec{v \sim \pi}{\max_{u \in N(v)}  | x_u- x_v|^2} } && \text{for $\rho^2 = \left(\expec{u \sim \pi}{x_u^2}\right)^{-1} $}.
\end{align*}
Therefore,
\begin{equation}
\label{beforecompose}
\lambda_\infty = \inf_{x^T \pi = 0, x^T \text{diag}(\pi) x = 1} {\expec{v \sim \pi}{\max_{u \in N(v)}  | x_u- x_v|^2} }\,. 
\end{equation}

Considering a pair of permutations
$p : V \rightarrow [n]$ and $q : E \rightarrow [m]\,,$ where $m$ is the number of edges,
define the set
\[S_{(p,q)} \defeq \{ x \in \mathbb{R}^n : \,  x_u \leq x_v \,, \forall \ p(u) < p(v), \ 
 |x_v - x_u| \leq |x_{v'} - x_{u'}|\,, \forall \ q(u,v) < q(u', v') \}. \]
Note that $\text{sign}(x_u-x_v)$ is uniform across $S_{(p,q)}$ for every pair of vertices $u, v \in V$. Therefore every $S_{(p,q)}$ can be redefined without absolute values in the notation and hence is a polyhedron. Moreover, $\cup_{p,q} S_{(p,q)} = \mathbb{R}^n$.
Finally, due to the constraints corresponding to permutation of edges, in the definition of $S_{(p,q)}\,,$ the ``farthest'' neighbor $v$ to each vertex $u$ is constant for all $x \in S_{(p,q)}$. In other words, for every vertex $u \in V$ there exists a vertex $v \in V$ such that 
$$
|x_v - x_u| = \max_{w \in N(u)} |x_w - x_u| \quad \forall x \in S_{(p,q)}.
$$ 
We denote this $v$ by $f_{p,q,u}$.

Since, $\mathbb{R}^n = \cup_{(p,q) \in [n!] \times [m!]} S_{(p,q)}$, we can rewrite $\lambda_\infty$ from equation \ref{beforecompose} as
\begin{equation}
\lambda_\infty = \min_{(p,q) \in [n!] \times [m!]} \inf_{x \in S_{(p,q)}, x^T \pi = 0, x^T \text{diag}(\pi) x = 1} {\expec{v \sim \pi}{  ( x_u- x_{f_{p,q,u}})^2} }\,. \label{composedinf}
\end{equation}
For a fixed $S_{(p,q)}$, the function under infimum is a degree $2$ polynomial with coefficients from $\mathbb{Z}[\pi]$,
$${\expec{v \sim \pi}{  ( x_u- x_{f_{p,q,u}})^2} } = \sum_u \pi_v  (x_u- x_{f_{p,q,u}})^2\,,$$ hence is continuous.

Moreover, for each case $(p,q)$ the domain of infimum is the intersection of surface of an ellipsoid
$$
x^T \text{diag}(\pi) x = 1\,,
$$
a hyperplane
$$
x^T \pi = 0\,,
$$
and a polyhedron $S_{(p,q)}\,.$ Since all three are compact sets, their intersection is also a compact set.
Therefore, each $\inf$ in equation \ref{composedinf} computes the infimum of a continuous function over a compact set. Hence, applying the {\em Extreme Value Theorem}, we can replace these infima with $\min\,,$
\[
\lambda_\infty = \min_{(p,q) \in [n!] \times [m!]} \min_{x \in P_{(p,q)}, x^T \pi = 0, x^T \text{diag}(\pi) x = 1} {\expec{v \sim \pi}{  ( x_u- x_{f_{p,q,u}})^2} }, \]
allowing us to replace $\inf$ with $\min$ in either $\lambda_\infty$ formulation.
\end{proof}

\begin{corollary}
\label{nearrational}
$\lambda_\infty$ is an algebraic number for rational inputs.
\end{corollary}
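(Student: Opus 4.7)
The plan is to leverage the finite decomposition of the feasible region developed in the proof of \Cref{lem:infmin} and invoke standard facts from real algebraic geometry. Recall from that proof that
\[
\lambda_\infty \;=\; \min_{(p,q)\in[n!]\times[m!]}\;\min_{x\in S_{(p,q)},\ x^T\pi=0,\ x^T\mathrm{diag}(\pi)x=1}\;\sum_{u\in V}\pi_u\,(x_u-x_{f_{p,q,u}})^2,
\]
where, once $(p,q)$ is fixed, the ``farthest neighbour'' index $f_{p,q,u}$ is constant and the set $S_{(p,q)}$ is a rational polyhedron (its defining inequalities involve only differences of coordinates with integer coefficients). Since the input $\pi$ is rational, each inner optimization is the minimization of a polynomial with rational coefficients over a basic closed semi-algebraic set defined by polynomials with rational coefficients, and the feasibility set is compact so the minimum is attained.

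The next step is to argue that the optimal value of each such inner problem is an algebraic number. I would proceed via the first-order (KKT) conditions: at any local minimizer $x^*$ of a polynomial objective subject to polynomial equality and inequality constraints with rational coefficients, there exist Lagrange multipliers $\lambda_i\in\mathbb{R}$ such that $x^*$ together with the $\lambda_i$'s satisfies a finite system of polynomial equations with rational coefficients (the gradient condition, the two active equality constraints, and the complementary-slackness equations obtained after guessing which facets of $S_{(p,q)}$ are active). Enumerating over the finitely many choices of active constraints reduces the inner minimum to the minimum over a finite union of real algebraic sets cut out by rational polynomials, each such set having an algebraic coordinate projection. Since the coordinates of $x^*$ are then components of a solution of a rational polynomial system, each $x^*_u$ is an algebraic number, and consequently the optimal value $\sum_u\pi_u(x^*_u-x^*_{f_{p,q,u}})^2$ is a rational polynomial in algebraic numbers, hence itself algebraic.

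Finally, $\lambda_\infty$ is the minimum of finitely many algebraic numbers (one per pair $(p,q)$), and the algebraic numbers form a field, so the overall minimum is algebraic. The main conceptual step is just the KKT-style reduction: once one accepts that minimizing a rational polynomial over a rational semi-algebraic set yields an algebraic optimal value (a well-known consequence of the Tarski--Seidenberg principle, and equally accessible through real quantifier elimination or the theory of Puiseux series in parametric optimization), the rest is bookkeeping over the $n!\cdot m!$ cells produced in \Cref{lem:infmin}. The only mild obstacle is ensuring that the semi-algebraic description of each cell $S_{(p,q)}$ together with the normalization constraints $x^T\pi=0$ and $x^T\mathrm{diag}(\pi)x=1$ uses only rational coefficients, which is immediate since $\pi$ is rational by hypothesis.
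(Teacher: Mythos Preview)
Your proposal is correct and matches the paper's (implicit) argument: the paper states the corollary with no separate proof, relying on the decomposition in the proof of \Cref{lem:infmin} into finitely many problems of minimizing a rational-coefficient polynomial over a compact rational semi-algebraic set, exactly as you do. One cosmetic remark: the line ``the algebraic numbers form a field, so the overall minimum is algebraic'' is not the right justification---closure under field operations does not give closure under $\min$---but the conclusion is immediate anyway since the minimum of a finite set is one of its elements.
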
 

\begin{proposition}
Spread constant is a rational number, if not $\infty$, for rational inputs.
\end{proposition}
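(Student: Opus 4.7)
The plan is to recast $\spcon$ as the maximum of a convex quadratic form over a bounded rational polytope, after which rationality follows from a standard extreme-point argument. First, I would exploit the shift invariance of the variance: fixing $f(v_0)=0$ at an arbitrary vertex $v_0$ does not change the supremum. The Lipschitz constraints $|f(u)-f(v)|\le 1$ for $\{u,v\}\in E$ together with $f(v_0)=0$ then cut out a polyhedron $P \subseteq \mathbb{R}^V$ with integer coefficients. On the connected component containing $v_0$, the triangle inequality yields $|f(v)| \le d_G(v_0,v)$, so $P$ is bounded there.

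Next, I would handle the hypothesis ``if not $\infty$'' by observing that if $\spcon<\infty$, then the support of $\pi$ must lie in a single connected component of $G$; otherwise, shifting one component of $f$ relative to another makes $\var[\pi]{f}$ unbounded. Taking $v_0$ inside this component, any coordinates of $f$ outside the support of $\pi$ are irrelevant to $\var[\pi]{f}$ and can be projected out. After this reduction, $P$ is a bounded polytope cut out by rational (in fact integer) inequalities, and $\mathcal{L}(G)\cap\{f(v_0)=0\}$ contains exactly the same variances as before.

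The key observation is that $\var[\pi]{f} = f^{T}(\mathrm{diag}(\pi) - \pi\pi^{T}) f$ is a \emph{convex} quadratic form, since for any $g\in\mathbb{R}^{V}$ one has $g^{T}(\mathrm{diag}(\pi)-\pi\pi^{T})g = \var[\pi]{g}\ge 0$, so the Hessian is positive semidefinite. Hence the supremum defining $\spcon$ is the maximum of a convex function over a nonempty compact convex set, which is attained at an extreme point. Thus $\spcon=\var[\pi]{f^\star}$ for some vertex $f^\star$ of $P$. Vertices of a rational polytope have rational coordinates, and with $\pi$ rational, $\var[\pi]{f^\star}$ is then a rational function of rational inputs, hence rational.

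The main concern is the routine bookkeeping for support of $\pi$ disconnected from $v_0$; once the reduction to a bounded polytope is in place, the remainder is the standard LP/QP fact that a convex objective over a rational polytope attains its maximum at a rational extreme point.
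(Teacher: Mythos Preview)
Your proposal is correct and follows essentially the same route as the paper: fix one coordinate to kill shift invariance, observe that the Lipschitz constraints carve out a bounded rational polytope, and use convexity of the variance to conclude that the maximum is attained at a rational vertex. You are in fact a bit more careful than the paper---you handle the disconnected/``if not $\infty$'' case explicitly, and you correctly call the variance merely convex (the paper says ``strongly convex,'' which is a slight overstatement since the all-ones direction is null).
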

\begin{proof}
Following notation from Lemma~\ref{spread_constant_is_binary}, Without loss of generality, we may assume $y_i = 0$ for some arbitrary vertex $i\,.$ Lipschitz conditions give constraints of form
$$
y_u - y_v \le 1, y_v - y_u \le 1, \forall (u,v) \in E\,.
$$
that defines a bounded polytope in $\mathbb{R}^V$ assuming the graph is connected. Vertices of the polytope, being intersection of $n$ constraints holding at equality (hyperplanes), are rational.

Variance, being a linear combination of functions of form
$$(y_i - y_j)^2\,,$$ is strongly convex so is maximized at a vertex of our polytope and has rational value there,  which equals the spread constant of the graph.
\end{proof}

\subsection{NP Hardness for Spread Constant of a Star}\label{ssec:spread-constant-NPH}

In this Section we prove Theorem~\ref{thm:sc-nph_theorem}. Given a {\sc Partition} instance $P = \set{p_1, \ldots, p_{n-1}}$ define a star graph $S_n$ and a distribution
$
\pi_0 = 1 -  \beta \,,
\pi_i =  \beta \frac{p_i}{\sum_j p_j} ~ \forall i \ge 1\,.
$
The proof is immediate given the following lemma.

\begin{lemma}
The answer to {\sc Partition} of $P$ is {\sc yes} if and only if $\spcon = \beta\,,$ otherwise $\spcon < \beta - \Omega_\beta(\frac{1}{(\sum_i p_i)^2})$.
\end{lemma}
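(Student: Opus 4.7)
The goal is to reduce the analysis of $\spcon(S_n,\pi)$ to the partition problem via a binary characterization of the optimum, mirroring the $\lambda_\infty$ reduction but using the stronger ``every edge fully stretched'' property guaranteed by Lemma~\ref{full-stretch}.

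The plan is to first invoke Lemma~\ref{full-stretch} on the star $S_n$: since every edge is of the form $\{0,i\}$, the fully-stretched condition forces $|y_i-y_0|=1$ for each leaf $i$. Using the translation invariance of variance we may fix $y_0=0$, and hence $y_i\in\{-1,+1\}$ in any optimum. Let
\[
\pi_+ \defeq \sum_{i\ge 1,\, y_i=+1}\pi_i, \qquad \pi_- \defeq \sum_{i\ge 1,\, y_i=-1}\pi_i=\beta-\pi_+.
\]
A direct computation gives
\[
\expec{\pi}{y^2}=\pi_++\pi_-=\beta, \qquad \expec{\pi}{y}=\pi_+-\pi_-,
\]
so the variance to be maximized equals
\[
\var[\pi]{y}=\beta-(\pi_+-\pi_-)^2.
\]
Thus $\spcon(S_n,\pi)=\beta-(b^\ast)^2$, where
\[
b^\ast \defeq \min_{S\subseteq [n-1]}\Bigl|\sum_{i\in S}\pi_i-\sum_{i\notin S}\pi_i\Bigr|=\frac{\beta}{P}\cdot\min_{S\subseteq[n-1]}\Bigl|\sum_{i\in S}p_i-\sum_{i\notin S}p_i\Bigr|,
\]
with $P\defeq\sum_j p_j$.

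In the {\sc yes} case, a balanced partition realizes $b^\ast=0$, giving $\spcon=\beta$; conversely $\spcon=\beta$ forces $b^\ast=0$, i.e.\ the sign pattern of $y$ witnesses a partition. In the {\sc no} case, the inner minimum is a positive integer and so at least $1$, yielding $b^\ast\ge \beta/P$ and hence
\[
\spcon\le \beta-\frac{\beta^2}{P^2}=\beta-\Omega_\beta\!\left(\frac{1}{(\sum_i p_i)^2}\right),
\]
as claimed.

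The only subtle step is ensuring the binary reduction is legitimate: one must confirm that Lemma~\ref{full-stretch} applies to the maximizer (the Lipschitz polytope is compact so the maximum is attained), and that no convex combination of binary optima can improve on it, which is immediate because variance is strongly convex on the Lipschitz polytope and so the maximum is attained at a vertex. Everything else is routine algebra.
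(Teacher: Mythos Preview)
Your proposal is correct and follows essentially the same route as the paper: invoke Lemma~\ref{full-stretch} to force a binary optimum, fix $y_0=0$, reduce the variance to a function of the imbalance $\pi_+-\pi_-$, and then use integrality of the $p_i$'s to get the $\Omega_\beta\!\left((\sum_i p_i)^{-2}\right)$ gap in the {\sc no} case. The only cosmetic difference is that the paper computes $\var[\pi]{y}$ via the pairwise formula $\sum_{j<k}\pi_j\pi_k(y_j-y_k)^2=4\pi_-\pi_++(1-\pi_0)\pi_0$ and then bounds it by AM--GM, whereas you compute it as $\expec{}{y^2}-\expec{}{y}^2=\beta-(\pi_+-\pi_-)^2$ directly; the two expressions are identical and your formulation is arguably cleaner since it makes the dependence on $b^\ast$ explicit without an intermediate inequality.
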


\begin{proof}[Proof sketch]
Let $y$ be an optimum solution to $\spcon(S_n,\pi)$. I.e., $y: V \rightarrow \mathbb{R}$ is a Lipschitz valuation maximizing $\var[\pi]{y}$.
A direct consequence of Lemma~\ref{full-stretch} is the following binary characterization of the optimum.

\begin{lemma}
\label{spread_constant_is_binary}
All leaf values are at unit distance from the root, i.e.,
$
|y_j-y_0| = 1\,, \forall j \in [n-1]\,.
$
\end{lemma}

Assuming $y_0 = 0$ (due to shift invariance of the problem) let
$y_i \in \set{\pm1}, \forall i \in [n-1]$.

Defining $\pi_- = \sum_{i} \pi_i \cdot \mathds{1}[y_i = -1]$ and
$
\pi_+ = \sum_{i} \pi_i \cdot \mathds{1}[y_i =  +1]\,,
$
we can write
\begin{align}
\spcon &= \var[\pi]{y} = \sum_{j < k} \pi_j \pi_k (y_j-y_k)^2 \nonumber \\
&= 4 \pi_- \pi_+ +  (\pi_- + \pi_+) \pi_0 
= 4 \pi_- \pi_+ +  (1 - \pi_0) \pi_0  \nonumber \\
&\leq 4 \left(\frac{\pi_- + \pi_+}{2}\right)^2 +  (1 - \pi_0) \pi_0 \label{varsplit} && \text{AM-GM inequality}\\
&=   (1 - \pi_0)^2 +  (1 - \pi_0) \pi_0
= {1 - \pi_0}\,. \nonumber
\end{align}
The above shows $\spcon \leq {1 - \pi_0} = \beta$ and moreover holding with equality if and only if inequality \eqref{varsplit} does; i.e., when $\pi_- = \pi_+$. The case of {\sc yes} is immediate.
If the answer to {\sc Partition} is {\sc no}, given $|\pi_- - \pi_+| > \beta \frac{1}{\sum_j p_j}$ one can upper bound $\spcon$ by $\beta - \Omega_\beta(\frac{1}{(\sum_j p_j)^2})$ using  \eqref{varsplit}.
\end{proof}

\section{Applying Grothendieck Inequality for MVE of Trees}\label{sec:MVE-tree-Grothendieck}

Corresponding to an embedding $\{\yy_i : i \in [n]\}$ consider $\zz_e \defeq \yy_i-\yy_j \quad \forall e = (i,j) \in E$.
For trees, Lipschitzness constraints will be equivalent to
$
\|z_e\|^2 = \langle \zz_e\,\zz_e \rangle \leq \ell_{e}^2\, \quad \forall e \in E\,
$
where we have $|E| = n-1$ variables $\{\zz_e \in \mathbb{R}^d : e \in E\}$. Finally, let $\mathbb{R}^{|E| \times d} \ni \ZZ = [\zz_e : e \in E]^T$\,.

Let $P_{ij}$ denote the unique path between vertices $i$ and $j$. We can reformulate the variance as 
\begin{align}
    \var[]{\yy} 
    = \frac{1}{2} \sum_{i \in V} \sum_{j \in V} \left\|\yy_i - \yy_j \right\|^2 \pi_i \pi_j = \frac{1}{2} \sum_{i \in V} \sum_{j \in V} \| \sum_{e \in P_{ij}} \zz_e \|^2 \pi_i \pi_j = \langle \AA , \ZZ \ZZ^T \rangle\,
\end{align}
where $\AA \in \mathbb{R}^{|E| \times |E|}$ is defined as 
$
\AA_{e e'} = \frac{1}{2} \sum_{i \in V} \sum_{j \in V} \mathbbm{1}[\set{e,e'} \subseteq P_{ij}] \pi_i \pi_j
$.

Defining $\XX = \ZZ \ZZ^T$, MVE in $\mathbb{R}^k$ can be formulated as the following SDP.
\begin{equation*}
\begin{array}{ll@{}ll}
\text{maximize}  & & \langle \AA, \XX \rangle &\\
\text{subject to}
&   & \XX_{ee} \leq \ell_{e}^2 ,  & \forall\quad e \in E\\
 &  & \XX \succcurlyeq \zero_{|E| \times |E|}, \quad \text{rank}(\XX) \leq k & 
\end{array}
\end{equation*}

Applying Theorem~1 from \cite{MMMO17}, in the case $\ell_e = 1 ~ \forall e \in E$ guarantees $\frac{1}{k-1}$ optimality of local optima for lifting to $k$ dimensions, i.e., \Cref{thm:main1}.

\end{document}